%% file: main.tex
\documentclass[12pt]{article}
\usepackage{amsmath}
\usepackage{graphicx,psfrag,epsf}
\usepackage{enumerate}
\usepackage{natbib}
\usepackage{url} 

\newcommand{\blind}{1}

\usepackage{amsmath,amssymb,amsthm,amsfonts}
\usepackage{bm}
\usepackage{natbib}
\usepackage{geometry}
\usepackage{hyperref}
\usepackage{bbm}
\usepackage{tikz}
\usetikzlibrary{positioning}
\usepackage{enumitem}
\usepackage{changepage}
\usepackage{subcaption}

\newtheorem{assumption}{Assumption}[section]
\newtheorem{theorem}{Theorem}

\newtheorem{definition}{Definition}
\newtheorem{remark}{Remark}

\newcommand{\indep}{\perp\!\!\!\perp}
\newcommand{\E}{\mathbb{E}}
\newcommand{\covbounds}{[L(y_1, y_0), U(y_1, y_0)]}
\newcommand{\margbounds}{[L_\text{marg}(y_1, y_0), U_\text{marg}(y_1, y_0)]}
\newcommand{\1}{\mathbbm{1}}
\newcommand{\Var}{\mathrm{Var}}
\newcommand{\IF}{\mathrm{IF}}
\newcommand{\Pn}{\mathbb{P}_n}
\renewcommand{\P}{P} 
\newcommand{\dto}{\overset{d}{\to}}
\newcommand{\convp}{\overset{p}{\to}}
\newcommand{\N}{\mathcal{N}}

\begin{document}

\def\spacingset#1{\renewcommand{\baselinestretch}%
{#1}\small\normalsize} \spacingset{1}


\if1\blind
{
  \title{\bf Nonparametric Identification and Inference for Counterfactual Distributions with Confounding}
  \author{Jianle Sun $^1$ \& Kun Zhang $^{1,2}$
    \\
    $^1$ Department of Philosophy, Carnegie Mellon University\\
    $^2$ Machine Learning Department,\\Mohamed bin Zayed University of Artificial Intelligence}
  \maketitle
} \fi

\if0\blind
{
  \bigskip
  \bigskip
  \bigskip
  \begin{center}
    {\LARGE\bf Nonparametric Identification and Inference for Counterfactual Distributions with Confounding}
\end{center}
  \medskip
} \fi

\bigskip
\begin{abstract}
    We propose nonparametric identification and semiparametric estimation of joint potential outcome distributions in the presence of confounding. First, in settings with observed confounding, we derive tighter, covariate-informed bounds on the joint distribution by leveraging conditional copulas. To overcome the non-differentiability of bounding min/max operators, we establish the asymptotic properties for both a direct estimator with polynomial margin condition and a smooth approximation with log-sum-exp operator, facilitating valid inference for individual-level effects under the canonical rank-preserving assumption. Second, we tackle the challenge of unmeasured confounding by introducing a causal representation learning framework. By utilizing instrumental variables, we prove the nonparametric identifiability of the latent confounding subspace under injectivity and completeness conditions. We develop a ``triple machine learning" estimator that employs cross-fitting scheme to sequentially handle the learned representation, nuisance parameters, and target functional. We characterize the asymptotic distribution with variance inflation induced by representation learning error, and provide conditions for semiparametric efficiency. We also propose a practical VAE-based algorithm for confounding representation learning. Simulations and real-world analysis validate the effectiveness of proposed methods. By bridging classical semiparametric theory with modern representation learning, this work provides a robust statistical foundation for distributional and counterfactual inference in complex causal systems.
\end{abstract}

\noindent%
{\it Keywords:} counterfactual inference, conditional copulas, instrumental variable, causal representation learning, semiparametric efficiency, double machine learning
\vfill

\newpage
\spacingset{1.45} 

\section{Introduction}

Causal inference fundamentally aims to predict how individuals or populations respond to competing interventions, thereby concerning the comparison of potential outcomes under alternative treatment regimes. While classical estimands such as the Average Treatment Effect (ATE) focus on mean differences, many scientifically relevant questions, such as the probability of benefit, quantile effects, or distributional shifts, depend on the entire distributions of potential outcomes. However, researchers typically face a dual hurdle in capturing these distributions. First, in the presence of confounding, even the marginal distributions of $Y(1)$ and $Y(0)$ are generally not identifiable, and existing instrumental variable (IV) approaches often rely on restrictive parametric assumptions or focus only on local effects \cite{angrist1996identification,swanson2018partial}. Second, even when conditional ignorability holds and marginals are identifiable, the joint distribution $(Y(1), Y(0))$ remains fundamentally unobservable without additional structural assumptions.

This paper bridges these gaps by providing a unified, principled framework for distributional causal inference. Our first contribution addresses the ``missing data" problem of the joint distribution under the assumption of no unmeasured confounding. We develop tight, covariate-informed Fr\'echet-Hoeffding (FH) bounds \cite{nelsen2006introduction} on the joint distribution under no unmeasured confounding. Leveraging conditional copulas, we show that the sharp upper bound admits a clear structural interpretation as \emph{conditional rank preservation} (or conditional comonotonicity) \cite{nelsen2006introduction}, a canonical assumption underlying individual treatment effect estimation and counterfactual reasoning \cite{xie2023advancing,wu2025learning}. To move from theory to practice, we address the non-smoothness of these bounds via two complementary paths: a direct estimator under a polynomial margin condition and a smooth log-sum-exp approximation. We further establish their asymptotic properties, enabling valid frequentist inference and confidence intervals for rank-preserving structures \cite{levis2025covariate}.

Our second contribution tackles the more daunting scenario where confounding is unmeasured, and in this scenario even the non-parametric identification for marginal distributions is non-trivial. Inspired by recent advances in causal representation learning \cite{kong2022partial,ng2025debiasing,Moran09022026}, we propose a representation learning based framework that leverages IVs to recover latent confounding structures. Under suitable completeness and independence assumptions, we show that the confounding subspace is identified up to an invertible transformation. This allows the learned representation to serve as a valid proxy for unobserved confounders, thereby enabling identification of marginal potential outcome distributions in complex settings. 

To implement this framework, we introduce a \emph{triple machine learning} (TML) procedure that extends double machine learning \cite{chernozhukov2018double,kennedy2024semiparametric} by incorporating an additional cross-fitting stage for representation learning. We rigorously characterize the impact of first-stage representation error on the variance inflation in asymptotic distribution, identifying regimes in which super-convergence of the representation learner yields semiparametric efficiency.

For practical estimation on confounding representations, we propose an Instrumental Variable Variational Autoencoder (IV-VAE) augmented with a Hilbert-Schmidt Independence Criterion (HSIC) penalty \cite{gretton2005measuring} to ensure that the recovered latent factors are truly exogenous to the instrument, satisfying the core identifying assumptions. Rather than explicitly modeling instrument-dependent latent factors, we adopt a reduced-form design that conditions the decoder directly on the observed instrument, allowing instrument-induced variation to be absorbed while isolating latent confounding structure.

The remainder of the paper is organized as follows. Section \ref{sec:bound-cov} details the identification of rank-preserving bounds via conditional copulas and presents the asymptotic theory for the proposed estimators. Section \ref{sec:marginal-iv} introduces the representation learning framework for unmeasured confounding, derives the properties of the triple machine learning estimator, and proposes an effective VAE-based learning approach. Section \ref{sec:bound-iv} discusses the synthesis of these methods. Section \ref{sec:sim} provides simulation results. Section \ref{sec:app} applies our proposed methods to analyze the demand for cigarettes in US. Proofs and technical details are deferred to the Appendix.

\section{Bound joint counterfactuals with all confounders observed via conditional copulas}\label{sec:bound-cov}
When there is no unobserved confounding, the identification of counterfactual marginals $F_{Y(a)}$ is easy. We will briefly review the identification results for the marginal distributions of potential outcomes and discuss leveraging the conditional copulas of the observed confounding covariates to derive tighter bounds for the joint distribution of the potential outcomes. We are particularly interested in the upper bound (which corresponds to the joint distribution under a conditional rank-preserving assumption), as this will provide an important foundation for performing individualized counterfactual inference. However, we must also address the challenges to estimation posed by the non-differentiable $\min / \max$ functions.
\subsection{Identification}\label{sec:notation}
\subsubsection{Identification of counterfactual marginals}
We observe \(n\) i.i.d.\ draws \(O_i=(Y_i,A_i,X_i)\sim \P\), \(i=1,\dots,n\). Here
\(A\in\{0,1\}\) is treatment, \(Y\in\mathbb{R}\) is outcome, and \(X\in\mathcal{X}\subset\mathbb{R}^d\) are observed covariates. Let $Y(a)$ denote the potential outcome (counterfactual outcome) under treatment level $a$, and its marginal can be identified under the standard assumptions.
\begin{definition}[Nuisance Functions and Distributions]
Let $X$ be a vector of covariates (observed confounders).
\begin{enumerate}
    \item Let $\theta_a(X) := F_{Y(a)\mid X}(y\mid X) = P(Y(a) \le y \mid X)$ be the conditional CDF of the potential outcome $Y(a)$ given $X$.
    \item Let $F_{Y(a)}(y) := \E_X[\theta_a(X)]$ be the marginal CDF of $Y(a)$, obtained by integrating the conditional CDF over the distribution of $X$.
\end{enumerate}
\end{definition}

\begin{theorem}[Identification of counterfactual marginals]\label{ass:id}
Assume the standard causal identification conditions
\begin{enumerate}
  \item[(i)] \textbf{Consistency:} $Y = Y(A)$.
  \item[(ii)] \textbf{Conditional ignorability (No unmeasured confounding):} $(Y(1),Y(0)) \perp A \mid X$.
  \item[(iii)] \textbf{Positivity:} $\P(A=a\mid X)>0$ almost surely for $a=0,1$.
\end{enumerate}
\end{theorem}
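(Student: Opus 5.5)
The statement as displayed lists only the hypotheses; the conclusion it is clearly meant to assert is the standard identification formula
\[
\theta_a(X) = P(Y\le y\mid A=a, X), \qquad F_{Y(a)}(y) = \E_X\bigl[P(Y\le y\mid A=a,X)\bigr],
\]
together with the equivalent inverse-probability-weighted form $F_{Y(a)}(y)=\E\bigl[\1\{A=a\}\1\{Y\le y\}/\P(A=a\mid X)\bigr]$. I plan to prove the conditional statement first and then integrate over $X$.

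\textbf{Step 1 (conditional CDF).} Fix $a\in\{0,1\}$ and $y\in\mathbb{R}$. Positivity (iii) guarantees that $P(A=a\mid X)>0$ almost surely. This makes the conditional law of $Y$ given $(A=a,X)$ well defined $P_X$-almost everywhere; I will phrase this through regular conditional distributions, so that no positive-probability events are needed when $X$ is continuous. Next, ignorability (ii) implies in particular that $Y(a)\perp A\mid X$. Hence
\[
P(Y(a)\le y\mid X) = P(Y(a)\le y\mid A=a, X).
\]
On the event $\{A=a\}$, consistency (i) gives $Y=Y(a)$, so the right-hand side equals $P(Y\le y\mid A=a,X)$. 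This identifies $\theta_a(X)$ almost surely.

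\textbf{Step 2 (marginal).} By the definition of $F_{Y(a)}$, or equivalently by the tower property, $F_{Y(a)}(y)=\E_X[\theta_a(X)]$. Substituting Step 1 gives the regression form of the formula. For the weighted form, I condition on $X$:
\[
\E\!\left[\frac{\1\{A=a\}\1\{Y\le y\}}{\P(A=a\mid X)}\,\Big|\,X\right]
= \frac{P(A=a\mid X)\,P(Y\le y\mid A=a,X)}{P(A=a\mid X)} = \theta_a(X).
\]
Here positivity is exactly what keeps the denominator nonzero.

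\textbf{Main obstacle.} The only real subtlety is measure-theoretic. When $X$ is continuous, conditioning on the null events $\{X=x\}$ must be understood as a version of a conditional expectation. The equality in Step 1 therefore needs to hold $P_X$-a.e., and positivity is what ensures that the version defined on $\{A=a\}$ extends to all of the support of $X$.

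Everything else is routine. Since the identity holds pointwise for each fixed $y$, I also get identification of the entire function $y\mapsto F_{Y(a)}(y)$. This is the input needed for the copula bounds developed later in the paper.
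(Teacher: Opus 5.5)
Your proposal is correct and matches the standard argument the paper relies on. The paper gives no separate proof of this statement, but the same chain — tower property, then ignorability to condition on $A=a$, then consistency to replace $Y(a)$ by $Y$, with positivity keeping the conditional well defined — is exactly what it writes out in Step~4 of the proof of Theorem~\ref{thm:rep-iv-ate}, with $\widehat Z_C$ in place of $X$. Your inverse-probability-weighted form and your measure-theoretic remarks are correct additions that go slightly beyond what the paper states.
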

Under Assumption~\ref{ass:id}, the conditional marginals
$F_{Y(a)\mid X}(y\mid x)$ are identifiable from observed data, 
\[
\theta_a(x) := F_{Y(a)\mid X}(y\mid x) = \P(Y\le y \mid A=a, X=x).
\]
And the marginals
\[
F_{Y(a)}(y) = \E[\theta_a(X)] = \E_X\left[F_{Y(a)\mid X}(y\mid X=x)\right] =\int F_{Y(a)\mid X}(y\mid x)\, dF_X(x).
\]

\subsubsection{Conditional Fr\'echet–Hoeffding bounds on counterfactual joint distributions}
We aim to move beyond counterfactual marginals to the joint distribution. Generally speaking, by Sklar's Theorem, any joint distribution can be represented by its marginals $F_{Y(1)}(y_1)$ and $F_{Y(0)}(y_0)$ and a copula function $C$, such that $F_{Y(1),Y(0)}(y_1, y_0) = C(F_{Y(1)}(y_1), F_{Y(0)}(y_0))$. 

Without further assumptions, the copula $C$ is only bounded by the Fr\'echet-Hoeffding bounds $L(u_1, u_0) \le C(u_1, u_0) \le U(u_1, u_0)$, where $L(u_1, u_0) = \max(u_1+u_0-1, 0)$ (the countermonotonicity copula) and $U(u_1, u_0) = \min(u_1, u_0)$ (the comonotonicity copula). This implies the \emph{marginal bounds} on the joint distribution
\[
\max\{ F_{Y(1)}(y_1) + F_{Y(0)}(y_0) - 1, \, 0 \} \le F_{Y(1),Y(0)}(y_1, y_0) \le \min\{ F_{Y(1)}(y_1), \, F_{Y(0)}(y_0) \}.
\]
We can also bound the joint distribution via conditional copulas, which yields a tighter result. Using a conditional version of Sklar's theorem, we can write $F_{Y(1),Y(0)\mid X}(y_1,y_0\mid x) = C_x(\theta_1(x), \theta_0(x))$, where $\theta_a(x) = F_{Y(a)\mid X}(y_a\mid x)$ and $C_x$ is a conditional copula that may depend on $x$. For any $x$, this copula is bounded by the same Fr\'echet--Hoeffding limits
\[
\max\{ \theta_1(x)+\theta_0(x)-1,\,0\}
\le F_{Y(1),Y(0)\mid X}(y_1,y_0\mid x)
\le \min\{\theta_1(x),\theta_0(x)\}.
\]
Integrating over $X$ (by the law of total expectation) gives the unconditional bounds
\[
L(y_1,y_0)\le\; F_{Y(1),Y(0)}(y_1,y_0)
\;\le U(y_1,y_0).
\]
where
\begin{align}
    L(y_1,y_0) &\;:=\; \E_X\bigl[\max\{\theta_1(X)+\theta_0(X)-1,\,0\}\bigr],\\
    U(y_1,y_0) &\;:=\; \E_X\bigl[\min\{\theta_1(X),\theta_0(X)\}\bigr].
\end{align}
In particular, the conditional upper bound $M(\theta_1(x), \theta_0(x)) = \min\{\theta_1(x),\theta_0(x)\}$ is achieved under the \emph{conditional rank-preserving} (or conditional comonotonicity) assumption. This assumption states that for each $X=x$, $Y(1)$ and $Y(0)$ are comonotone functions of the same latent rank variable $U \sim \text{Unif}[0,1]$, such that $Y(a) = F_{Y(a)\mid X}^{-1}(U\mid X)$. The integrated upper bound $U(y_1, y_0)$ thus corresponds to assuming this conditional rank-preservation holds across all $x$.

\begin{theorem}\label{thm:bound}
Consider bounds on $F_{Y(1), Y(0)}(y_1, y_0)$
    \begin{enumerate}
    \item The \textbf{covariate-conditional bounds} are derived by first applying the Fr\'echet-Hoeffding bounds conditional on $X$, and then integrating over $X$:
    \begin{align*}
        L(y_1, y_0) &:= \E_X\bigl[\max\{\theta_1(X)+\theta_0(X)-1,\,0\}\bigr] \\
        U(y_1, y_0) &:= \E_X\bigl[\min\{\theta_1(X),\theta_0(X)\}\bigr]
    \end{align*}
    
    \item The \textbf{marginal bounds} are derived by applying the Fr\'echet-Hoeffding bounds directly to the marginal distributions $F_{Y(1)}(y_1)$ and $F_{Y(0)}(y_0)$
    \begin{align*}
        L_\text{marg}(y_1, y_0) &:= \max\{F_{Y(1)}(y_1)+F_{Y(0)}(y_0)-1,\,0\} \\
        U_\text{marg}(y_1, y_0) &:= \min\{F_{Y(1)}(y_1),\,F_{Y(0)}(y_0)\}
    \end{align*}
\end{enumerate}
The bounds on the joint cumulative distribution function (CDF) $F_{Y(1), Y(0)}(y_1, y_0)$ derived from covariate-conditional distributions are always tighter than, or equal to, the bounds derived from the marginal distributions, i.e.,
\[
L_\text{marg}(y_1, y_0) \le L(y_1, y_0) \quad \text{and} \quad U(y_1, y_0) \le U_\text{marg}(y_1, y_0)
\]
This implies that $\covbounds \subseteq \margbounds$.
\end{theorem}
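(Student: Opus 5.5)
The proof is a direct application of Jensen's inequality, using the fact that $\min$ is concave and $\max\{\cdot,0\}$ is convex on $\mathbb{R}^2$. Throughout, fix $(y_1,y_0)$ and write $\theta_a(X)=F_{Y(a)\mid X}(y_a\mid X)$. The first step is to record the tower identity $\E_X[\theta_a(X)]=F_{Y(a)}(y_a)$ for $a=0,1$. This follows from the definition of $F_{Y(a)}$ (Definition 1) and the law of total expectation, and it holds identically under the conditions of Theorem~\ref{ass:id}. It lets us rewrite $L_\text{marg}$ and $U_\text{marg}$ as functions of $\E_X[\theta_1(X)]$ and $\E_X[\theta_0(X)]$.

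\textbf{Upper bound.} Pointwise in $X$ we have $\min\{\theta_1(X),\theta_0(X)\}\le\theta_1(X)$ and $\min\{\theta_1(X),\theta_0(X)\}\le\theta_0(X)$. Taking expectations, which is legitimate because all quantities lie in $[0,1]$ and are therefore integrable, gives $U(y_1,y_0)\le F_{Y(1)}(y_1)$ and $U(y_1,y_0)\le F_{Y(0)}(y_0)$. Hence $U(y_1,y_0)\le \min\{F_{Y(1)}(y_1),F_{Y(0)}(y_0)\}=U_\text{marg}(y_1,y_0)$. This is exactly Jensen's inequality for the concave map $\min$, but the elementary form avoids any need to invoke convexity.

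\textbf{Lower bound.} Pointwise we have $\max\{\theta_1+\theta_0-1,0\}\ge\theta_1+\theta_0-1$ and $\max\{\theta_1+\theta_0-1,0\}\ge 0$. Taking expectations and using linearity together with the tower identity yields $L(y_1,y_0)\ge F_{Y(1)}(y_1)+F_{Y(0)}(y_0)-1$ and $L(y_1,y_0)\ge0$. Therefore $L(y_1,y_0)\ge L_\text{marg}(y_1,y_0)$.

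The interval containment $\covbounds\subseteq\margbounds$ then follows immediately from the two inequalities. It is nonvacuous because $L(y_1,y_0)\le U(y_1,y_0)$, which holds by integrating the conditional Fr\'echet--Hoeffding inequality $\max\{\theta_1+\theta_0-1,0\}\le\min\{\theta_1,\theta_0\}$ over $X$.

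There is no real obstacle here. The only points needing care are measurability and integrability of $\theta_a(X)$, which are guaranteed since conditional CDFs are bounded measurable functions. One might also remark when equality holds: in the upper bound, equality occurs iff $\theta_1(X)-\theta_0(X)$ has a constant sign almost surely; in the lower bound, equality occurs iff $\theta_1(X)+\theta_0(X)-1$ has a constant sign almost surely.
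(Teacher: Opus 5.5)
Your proof is correct and follows the paper's route. The paper applies Jensen's inequality to the concave $\min$ and the convex $\max\{\cdot+\cdot-1,0\}$, and your pointwise facts ($\min\{\theta_1,\theta_0\}\le\theta_a$, and $\max\{\theta_1+\theta_0-1,0\}$ dominating each of its arguments) are exactly that Jensen step written out for a min or max of affine functions. One further point: your equality characterization (a.s.\ constant sign of $\theta_1-\theta_0$, resp.\ of $\theta_1+\theta_0-1$) is the correct one, whereas the paper's claim that equality holds only if the $\theta_a(X)$ are a.s.\ constant gives merely a sufficient condition.
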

The proof relies on a direct application of Jensen's Inequality (see appendix \ref{proof:bound}). Equality holds if and only if the conditional CDFs $\theta_a(X)$ are constant almost surely with respect to $X$, implying that the covariates $X$ provide no additional information beyond the marginals.

\subsection{Estimation and Inference}
In what follows we present estimation and inference for functional of interest
\begin{equation}\label{eq:theta}
\Psi(P):=\E_{X\sim P_X}\big[\phi\big(\theta_0(X),\theta_1(X)\big)\big].
\end{equation}
where $\phi^U(x,y)=\min\{x,y\}$ and $\phi^{\mathrm{L}}(x,y)=\max\{x+y-1,\,0\}$ correspond to the upper and lower bound, respectively.
We will mainly focus on the upper bound $\Psi(P)=U(y_1,y_0)$ since it corresponds to the rank-preserving joint distribution when conditioning on all covariates, which is plausible and meaningful in many real counterfactual reasoning problems. Results for $L(y_1,y_0)$ can be obtained analogously. 
We will use standard notation: \(\Pn\) is empirical measure, \(\|\cdot\|_{L_2(P)}\) denotes \(L_2(P)\)-norm, \(\|\cdot\|_{P,\infty}\) the essential sup norm, and \(\convp\) convergence in probability. Let $\hat\theta_a(X)$ denote estimators of the nuisance functions $\theta_a(X)$,
for example via conditional CDF regression or flexible machine learning methods with cross-fitting.

A natural way is the plug-in estimator $\hat\Psi_{\mathrm{plug-in}} = \P_n\phi\{\hat\theta_1(X_i),\hat\theta_0(X_i)\}$
while it suffers from the estimation efficiency of nuisance functional $\theta_a(x)$. Instead, we consider the double-robust estimand augmented inverse-propensity weighted (AIPW) form. We first give explicit influence-function components for the conditional CDFs. When \(\phi\) is non-smooth (min or max), the partial derivatives do not exist on the boundary set, and we present (A) the direct (nonsmooth) route under a polynomial margin condition, and (B) the smooth approximation route (log-sum-exp) to handle the issue \cite{levis2025covariate}. 

\subsubsection{Direct Estimation under a Margin Condition}
We want to estimate
\[
\Psi(P)=\E[\min\{\theta_0(X),\theta_1(X)\}] = \E[\theta_{d(X)}(X)].
\]
Write $\pi_a(x)=\P(A=a\mid X=x)$. Define the unknown ``oracle" selector
\[
d(x)=\arg\min_{a\in\{0,1\}}\theta_a(x),\qquad \theta_a(x)=\P(Y\le y\mid A=a,X=x).
\]
and we replace it by the plug-in selector
\[
\hat d(x) = \arg\min_{a}\hat\theta_a(x).
\]

To analyze the asymptotic behaviors of the hybrid estimator, we introduce the following polynomial margin.

\begin{assumption}[Polynomial margin]\label{ass:polymargin}
There exist constants $\alpha>0$ and $C<\infty$ such that for all $t>0$,
\[
\P\big(|\theta_1(X)-\theta_0(X)|\le t\big)\le C t^\alpha.
\]
\end{assumption}

In fact, the parameter $\alpha$ characterizes the separation degree between the two potential outcome distributions at the unit level. Geometrically, it quantifies the probability mass of the covariate space where the two conditional CDFs, $\theta_1(X)$ and $\theta_0(X)$, are nearly identical. This assumption controls the mass of $X$ near the "tie" or "ambiguity" region $\{x:\theta_0(x)=\theta_1(x)\}$. A large $\alpha$ implies a strong margin, where the population is clearly partitioned into regions where one potential outcome is strictly more likely to be below the threshold than the other. On the contrary, a small $\alpha$ signifies a weak margin, indicating a high density of individuals whose conditional ranks are indistinguishable. This creates a noise-sensitive boundary where small estimation errors in $\hat{\theta}_a$ can lead to frequent mis-selection in $\hat{d}(x)$, which we will see result in a more stringent sup-norm convergence rates required to control the bias.

\medskip

\begin{theorem}[Asymptotic properties of margin - direct estimator]\label{thm:direct}
Under Assumption~\ref{ass:polymargin} and the boundedness conditions that there exist constants $0<\underline c<\overline c<\infty$ such that
$\underline c < \hat\pi_a(x) < \overline c$ almost surely, suppose cross-fitting is used (or Donsker conditions hold). Then the estimator 
\[
\hat\Psi = \Pn[\varphi(O;\hat P,\hat d)]
= \Pn\Big[\sum_{a=0}^1 \1\{\hat d(X)=a\} \Big(
\hat\theta_a(X) + \frac{\1\{A=a\}}{\hat\pi_a(X)}(\1\{Y\le y\}-\hat\theta_a(X))\Big)\Big]
\]
\begin{itemize}
    \item (consistency) is consistent $|\hat\Psi-\Psi|=o_p(1)$
    when the nuisance estimators satisfy
    \[
    \max_a \|\hat\theta_a-\theta_a\|_\infty = o_p(1),
    \qquad
    \|\hat\theta_a-\theta_a\|_{L_2(P)}\|\hat\pi_a-\pi_a\|_{L_2(P)}=o_p(1),\ \ a=0,1,
    \]
    \item (root-$n$ consistency and asymptotic normality) satisfies
    \[
    \hat\Psi-\Psi = (\Pn-\P)\varphi(O;P,d) + o_p(n^{-1/2}),
    \]
    hence
    \[
    \sqrt{n}(\hat\Psi-\Psi)\dto N(0,\Var\big(\varphi(O;P,d))\big).
    \]
    when nuisance estimators satisfy
    \[
    \max_a \|\hat\theta_a-\theta_a\|_\infty = o_p\big(n^{-1/(2(1+\alpha))}\big),
    \qquad
    \|\hat\theta_a-\theta_a\|_{L_2(P)}\|\hat\pi_a-\pi_a\|_{L_2(P)}=o_p(n^{-1/2}),\ \ a=0,1.
    \]
\end{itemize}
\end{theorem}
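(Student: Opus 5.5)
We decompose the error of $\hat\Psi$ in the standard von Mises form and control each piece separately. Write $\varphi(O;P,d)$ for the uncentered influence function evaluated at the truth with the oracle selector, so that $\P\varphi(O;P,d)=\E[\theta_{d(X)}(X)]=\Psi$. Then
\[
\hat\Psi-\Psi=(\Pn-\P)\varphi(O;P,d)+(\Pn-\P)\{\varphi(O;\hat P,\hat d)-\varphi(O;P,d)\}+\{\P\varphi(O;\hat P,\hat d)-\Psi\}.
\]
The first term is a centered i.i.d.\ average. The indicators and conditional CDFs are bounded and $\pi_a\ge \underline c$, so its variance is finite. The CLT then makes it $O_p(n^{-1/2})$ and asymptotically $N(0,\Var(\varphi(O;P,d)))/\sqrt n$.

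The second term is the empirical process term. Under cross-fitting we condition on the training fold. The term then has mean zero with conditional variance bounded by $\|\varphi(\cdot;\hat P,\hat d)-\varphi(\cdot;P,d)\|_{L_2(P)}^2/n$. The difference splits into two parts:
\begin{itemize}
\item[(a)] a part with $\hat d$ held fixed, controlled by $\|\hat\theta_a-\theta_a\|_{L_2}$ and $\|\hat\pi_a-\pi_a\|_{L_2}$ using $\hat\pi_a>\underline c$;
\item[(b)] a part from $\1\{\hat d\neq d\}$, whose $L_2$ norm is at most a constant times $\P(\hat d(X)\neq d(X))^{1/2}$.
\end{itemize}
Mis-selection can only occur where $|\theta_1-\theta_0|\le 2\max_a\|\hat\theta_a-\theta_a\|_\infty$. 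By Assumption~\ref{ass:polymargin} this event has probability $O(\|\hat\theta-\theta\|_\infty^\alpha)=o_p(1)$. Chebyshev then gives $o_p(n^{-1/2})$ for the second term; under Donsker conditions one uses asymptotic equicontinuity instead.

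The third term is the bias, and we expect it to be the main obstacle. Condition on $X$ and use $\E[\1\{A=a\}(\1\{Y\le y\}-\hat\theta_a)\mid X]=\pi_a(\theta_a-\hat\theta_a)$. This gives
\[
\P\varphi(O;\hat P,\hat d)-\Psi=\E\Big[\sum_a\1\{\hat d=a\}\frac{(\hat\pi_a-\pi_a)(\hat\theta_a-\theta_a)}{\hat\pi_a}\Big]+\E\big[\theta_{\hat d(X)}(X)-\theta_{d(X)}(X)\big].
\]
The first piece is bounded by Cauchy--Schwarz by $\underline c^{-1}\sum_a\|\hat\theta_a-\theta_a\|_{L_2}\|\hat\pi_a-\pi_a\|_{L_2}$. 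This is $o_p(1)$ or $o_p(n^{-1/2})$ under the respective rate conditions.

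The second piece is the selection bias. It is nonnegative and vanishes except on $\{\hat d\neq d\}$. On that set we have $\hat\theta_{\hat d}\le\hat\theta_d$, so
\[
0\le\theta_{\hat d}-\theta_d=|\theta_1-\theta_0|\le 2\max_a\|\hat\theta_a-\theta_a\|_\infty=:2\epsilon_n .
\]
Hence the selection bias is at most $2\epsilon_n\,\P(|\theta_1-\theta_0|\le 2\epsilon_n)\le C2^{1+\alpha}\epsilon_n^{1+\alpha}$ by the margin condition, applied conditionally on the training fold where $\hat\theta$ is fixed. This is $o_p(1)$ whenever $\epsilon_n=o_p(1)$. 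It is $o_p(n^{-1/2})$ exactly when $\epsilon_n=o_p(n^{-1/(2(1+\alpha))})$, which explains the sup-norm rate in the theorem.

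Combining the three terms gives consistency under the weak rates and the asymptotically linear expansion under the strong rates. Slutsky's lemma then yields $\sqrt n(\hat\Psi-\Psi)\dto N(0,\Var(\varphi(O;P,d)))$. Under $K$-fold cross-fitting the fold-wise expansions average to the same result.
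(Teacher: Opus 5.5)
Your proposal is correct and follows the paper's proof. It uses the same von Mises split into a CLT term, a cross-fitted empirical process term (with the selector part controlled by $\P(\hat d(X)\neq d(X))^{1/2}$), and a bias term whose selection part is $O(\epsilon_n^{1+\alpha})$ via the margin condition. The only difference is a slightly cleaner regrouping of the bias: you isolate the nonnegative regret $\E[\theta_{\hat d(X)}(X)-\theta_{d(X)}(X)]$ in the true $\theta$'s. The paper instead pairs $B_{\mathrm{nuis}}$ at the oracle selector with $B_{\mathrm{sel}}$ under the estimated nuisances, and so must also bound the IPW residual inside $\psi_1-\psi_0$ on the mis-selection event.
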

The rate conditions can be established by following the standard semiparametric argument, i.e., the following von-Mises decomposition
\begin{align}
    \hat\Psi - \Psi(P)
    &= \Pn[\varphi(O; \hat P, \hat d)] - \P[\varphi(O; P, d)] \\
    &= (\Pn-\P)\varphi(O; P, d) + (\Pn-\P)[\varphi(O; \hat P, \hat d)-\varphi(O; P, d)] + \P[\varphi(O; \hat P, \hat d)-\varphi(O; P, d)] , \nonumber\\
    &=S+R_1+R_2,\nonumber
\end{align}
where $S$ is the standard CLT term, $R_1$ is the empirical process term and can be easily bounded when conducting sample splitting, the bias $R_2$ can be handled by further separating errors in nuisance functionals and selectors
\begin{align*}
\P\left[\varphi(O;\hat P,\hat d)-\varphi(O;P,d)\right]
&= \underbrace{\P\left[\varphi(O;\hat P,d)-\varphi(O;P,d)\right]}_{\text{nuisance error } B_\text{nuis}}
+ \underbrace{\P\left[\varphi(O;\hat P,\hat d)-\varphi(O;\hat P,d)\right]}_{\text{selector error } B_\text{sel}}.
\end{align*}
Detailed proof is shown in \ref{proof:direct}.

\begin{remark}
The selector-rate requirement $\|\hat\theta-\theta\|_\infty = o_p(n^{-1/(2(1+\alpha))})$ is the main extra price paid by the direct method when we plug in the estimated selector is that one needs uniform error control. The parameter $\alpha$ quantifies the density of the population near the ``tie" region where $\theta_1(X) = \theta_0(X)$, representing the degree of separation between conditional potential outcome ranks. A larger $\alpha$ indicates a stronger margin with fewer ambiguous cases, which enhances the stability of the plug-in selector $\hat{d}(x)$ and yields faster convergence rates for the non-smooth direct estimator.

In applications one typically implements cross-fitting and uses modern ML estimators for \(\theta_a\) and \(\pi_a\); verifying the sup-norm condition may require specially tailored estimators (series/sieve with tuned complexity or uniformly consistent kernel estimators).
\end{remark}

\subsubsection{Smooth Approximation via Log-Sum-Exp}
In the direct method, the non-differentiability of $\phi(u,v)=\min(u,v)$
creates boundary complications. As an alternative, we approximate $\min$
by a smooth function based on the log-sum-exp operator, i.e.,
\[
g_t(u,v) := -\frac{1}{t}\log\big(e^{-t u}+e^{-t v}\big),
\quad (u,v)\in[0,1]^2,
\]
where $g_t(u,v)$ is a smooth function satisfying $\min(u,v)- \frac{\log 2}{t} \le g_t(u,v) \le \min(u,v)$ and $\lim_{t\to\infty} g_t(u,v)=\min(u,v)$. Then $\Psi(P)$ can be approximated via $\Psi_t(P)=\E\big[g_t(\theta_0(X),\theta_1(X))\big]$, which is continuously Gateaux-differentiable in $P$ for any finite $t$, and $\Psi_t(P)\uparrow\Psi(P)$ as $t\to\infty$.

\begin{theorem}[Properties of the smooth log-sum-exp estimator]\label{thm:smooth}
Let 
\[
\Psi_t=\E[g_t(\theta_0(X),\theta_1(X))],\quad g_t(u,v) = -t^{-1} \log(e^{-tu}+e^{-tv}),
\]
be the smooth approximation of $\Psi$, let $\hat\Psi_t$ be the plug-in estimator using nuisance estimators $\hat\theta_a$ and $\hat\pi_a$, $a=0,1$, obtained on an independent sample (sample splitting)
\[
\Psi_t = \Pn\varphi_t(O;\hat P)
= \Pn\!\Big[
\sum_{a=0}^1 \hat w_{a,t}(X) \frac{\1\{A=a\}}{\hat\pi_a(X)}(\1\{Y\le y\}-\hat\theta_a(X)) + g_t(\hat\theta_0(X),\hat\theta_1(X))
\Big],
\]
where
\[
\hat w_{a,t}(X)
= \frac{e^{-t\hat\theta_a(X)}}{e^{-t\hat\theta_0(X)}+e^{-t\hat\theta_1(X)}}.
\]
Suppose the boundedness assumption that there exist constants $0<\underline c<\overline c<\infty$ such that
$\underline c < \hat\pi_a(x) < \overline c$ almost surely holds. Then for a fixed $t$, the estimator $\hat\Psi_t$ satisfies the following properties
\begin{itemize}
    \item (consistency) Under rate condition on the nuisance estimators,
\[
\|\hat\theta_a-\theta_a\|_{L_2(P)}\cdot \|\hat\pi_a-\pi_a\|_{L_2(P)} = o_p(1),\quad \|\hat\theta_a-\theta_a\|^2_{L_2(P)} = o_p(1),\, a=0,1,
\] 
   $\hat\Psi_t$ is consistent for $\Psi_t(P)$.
   \item (root-$n$ consistency and asymptotic normality) under product rate condition on the nuisance estimators,
\[
\|\hat\theta_a-\theta_a\|_{L_2(P)}\cdot \|\hat\pi_a-\pi_a\|_{L_2(P)} = o_p(n^{-1/2}),\quad \|\hat\theta_a-\theta_a\|^2_{L_2(P)} = o_p(n^{-1/2}),\, a=0,1,
\] 
i.e.,
\[
    \|\hat\theta_a-\theta_a\|_{L_2(P)} = o_p(n^{-1/4}) \quad \text{and} \quad \|\hat\pi_a-\pi_a\|_{L_2(P)} = o_p(n^{-1/4})
    \]
the estimator $\hat\Psi_t$ admits the linear expansion
\[
\hat\Psi_t - \Psi_t(P) = (\Pn - P) \varphi_t(O;P) + o_p(n^{-1/2}),
\]
and therefore
\[
\sqrt{n} (\hat\Psi_t - \Psi_t(P)) \dto N(0, \Var(\varphi_t(O;P))).
\]
so that the estimator converges at the $\sqrt{n}$ rate $\sqrt{n}(\hat\Psi_t - \Psi_t(P)) = O_p(1)$.
\end{itemize}
\end{theorem}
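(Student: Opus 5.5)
The plan is to mirror the von Mises argument sketched after Theorem~\ref{thm:direct}. Because $g_t$ is smooth, though, no selector error appears; a second-order Taylor remainder takes its place. First I will check that $\varphi_t(O;P)=\sum_a w_{a,t}(X)\frac{\1\{A=a\}}{\pi_a(X)}(\1\{Y\le y\}-\theta_a(X))+g_t(\theta_0(X),\theta_1(X))$ is the (uncentered) influence function of $\Psi_t$. This follows from $\partial g_t/\partial\theta_a=w_{a,t}$ and the standard influence function of $\E[\theta_a(X)]$ from Theorem~\ref{ass:id}, and it also gives $P\varphi_t(O;P)=\Psi_t$. I then write
\[
\hat\Psi_t-\Psi_t=(\Pn-\P)\varphi_t(O;P)+(\Pn-\P)[\varphi_t(O;\hat P)-\varphi_t(O;P)]+\P[\varphi_t(O;\hat P)-\varphi_t(O;P)]=S+R_1+R_2 .
\]

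For $R_1$ I condition on the independent training sample. Given that sample, $\varphi_t(\cdot;\hat P)-\varphi_t(\cdot;P)$ is a fixed function, so Chebyshev gives $R_1=O_p(n^{-1/2}\|\varphi_t(\cdot;\hat P)-\varphi_t(\cdot;P)\|_{L_2(P)})$. To show this $L_2$ norm is $o_p(1)$, I use several bounds: $|w_{a,t}|\le1$; $w_{a,t}$ is Lipschitz with constant $t/2$; $g_t$ is $1$-Lipschitz; $1/\hat\pi_a$ is bounded by $1/\underline c$; and all indicators are bounded. Together these give a bound $\lesssim_t \sum_a(\|\hat\theta_a-\theta_a\|_{L_2}+\|\hat\pi_a-\pi_a\|_{L_2})$, which vanishes under either rate condition. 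Consistency of the nuisances in $L_2$ is implied by the hypotheses for $\theta$. For $\pi$, I will note that consistency of $\hat\pi$ is also needed, or else handle the case where only the product is small. I expect to simply assume $L_2$-consistency of both nuisances, which the second set of rates gives.

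For $R_2$ I first take the conditional expectation given $X$, using $\E[\1\{A=a\}(\1\{Y\le y\}-\hat\theta_a)\mid X]=\pi_a(\theta_a-\hat\theta_a)$:
\[
R_2=\P\Big[\sum_a \hat w_{a,t}\tfrac{\pi_a-\hat\pi_a}{\hat\pi_a}(\theta_a-\hat\theta_a)\Big]+\P\Big[g_t(\hat\theta)-g_t(\theta)-\sum_a\hat w_{a,t}(\hat\theta_a-\theta_a)\Big].
\]
Cauchy–Schwarz bounds the first term by $\underline c^{-1}\sum_a\|\hat\pi_a-\pi_a\|\|\hat\theta_a-\theta_a\|$. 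The second term is a Taylor remainder of $g_t$ expanded around $\hat\theta$. The Hessian of $g_t$ has entries $\pm t\,w_0w_1$, bounded by $t/4$, so this term is $\le \tfrac{t}{2}\sum_a\|\hat\theta_a-\theta_a\|_{L_2}^2$ up to constants. Under the consistency conditions this gives $R_2=o_p(1)$. Since $S=o_p(1)$ by the LLN ($\varphi_t$ is bounded), $\hat\Psi_t\to\Psi_t$ in probability. Under the $o_p(n^{-1/2})$ conditions, $R_2=o_p(n^{-1/2})$, which yields the linear expansion. The CLT applied to the bounded i.i.d. terms $\varphi_t(O_i;P)$ then gives asymptotic normality with variance $\Var(\varphi_t(O;P))$.

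The main obstacle is the bookkeeping in $R_2$: writing the remainder in exactly this mixed-bias-plus-quadratic form, with constants depending on the fixed $t$ (they scale like $t$, which is why the result is stated for fixed $t$). A secondary issue is that positivity of the true $\pi_a$ is needed for the influence function to be well defined. I will take this from Theorem~\ref{ass:id}(iii), strengthened to strong overlap.
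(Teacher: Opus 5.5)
Your proposal is correct and follows essentially the same route as the paper's proof. It uses the same three-term von Mises decomposition. It conditions on $X$ in the same way to reduce the bias term to the mixed product $\sum_a \hat w_{a,t}(\hat\theta_a-\theta_a)(\hat\pi_a-\pi_a)/\hat\pi_a$ plus a second-order Taylor remainder of $g_t$ about $\hat\theta$, controlled by the Hessian bound $t\,w_{0,t}w_{1,t}$. For the empirical-process term it uses the same conditional-variance/Lipschitz bound $\|\varphi_t(\cdot;\hat P)-\varphi_t(\cdot;P)\|_{L_2(P)}\lesssim t\|\hat\theta-\theta\|_{L_2(P)}+\|\hat\pi-\pi\|_{L_2(P)}$. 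As you noticed, and as the paper's Step 6 also requires, this bound needs $L_2$-consistency of $\hat\pi$ to reach the $o_p(n^{-1/2})$ conclusion.

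For the consistency claim alone you can drop that extra assumption. Under strong overlap both scores are bounded, so the empirical-process term is $O_p(n^{-1/2})=o_p(1)$ automatically.
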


Similarly, the asymptotic normality is established on the von-Mises decomposition
\begin{align}
\hat\Psi_t - \Psi_t(P)
&= \underbrace{(\Pn-P)\varphi_t(O;P)}_S
+ \underbrace{P[\varphi_t(O;\hat P)-\varphi_t(O;P)]}_{R_{\mathrm{nuis}}}
+ \underbrace{(\Pn-P)[\varphi_t(O;\hat P)-\varphi_t(O;P)]}_{R_n},
\end{align}
where $S$ is the standard CLT term, $R_n$ is the empirical process term, and $R_{\mathrm{nuis}}$ is the remaining bias. Detailed proof is shown in Appendix \ref{proof:smooth}.

Note that the above asymptotic properties of smooth-approximation estimator $\hat\Psi_t$ is established with a fixed smooth parameter $t$. We are curious about the performance on how the smooth estimator $\hat\Psi_t$ approximate the true parameter $\Psi(P)$, i.e., the behavior when $t$ goes to infinity. It appears that our smooth estimator will be closer and closer to the truth but we need pay price for it.

\begin{remark}[bias-variance trade-off in limit as $t\to\infty$]
To achieve the semiparametric efficiency bound, we decompose
$$\sqrt{n}(\hat\Psi_t-\Psi(P)) = \underbrace{\sqrt{n}(\hat\Psi_t-\Psi_t(P))}_{\text{Statistical Error}} + \underbrace{\sqrt{n}(\Psi_t(P)-\Psi(P))}_{\text{Approximation Bias}}$$

The term $\sqrt{n}(\Psi(P)-\Psi_t(P))$ is the smoothing bias and bounded by $O(\frac{\sqrt{n}}{t})$, and to bound it we need $\sqrt{n}/t \to 0$, i.e., $t=\omega(\sqrt{n})$.

The term $\sqrt{n}(\hat\Psi_t-\Psi_t(P))$ is the statistical error, converging to $N(0, \Var(\varphi_t(O;P)))$ for a fixed $t$. When $t$ is moving
\begin{align*} \sqrt{n}(B_{\mathrm{nuis}} + R_n) &\approx \sqrt{n} \cdot O_p(t_n \|\hat\theta-\theta\|_2^2) + \sqrt{n} \cdot O_p(n^{-1/2} t_n \|\hat\theta-\theta\|_2) \\ &= O_p(\sqrt{n} t_n \|\hat\theta-\theta\|_2^2) + O_p(t_n \|\hat\theta-\theta\|_2) \end{align*}

To make $\Psi_t(P) \uparrow \Psi(P)$ then $
\varphi_t(O;P) \to \varphi_{\mathrm{oracle}}(O;P)$ in $L_2(P)$, we need $t_n \cdot \|\hat\theta - \theta\|_{L_2(P)} = o_p(1)$ and $\sqrt{n} \cdot t_n \cdot \|\hat\theta - \theta\|_{L_2(P)}^2 = o_p(1)$, which requires $ \|\hat\theta - \theta\|_{L_2(P)} = o_p(n^{-1/2})$ rather than $ \|\hat\theta - \theta\|_{L_2(P)} = o_p(n^{-1/4})$ when $t=\omega(\sqrt{n})$. Here's an example of bias-variance trade-off.
\end{remark}

\begin{remark}[Lower bound]
    For lower bound $\phi(\theta_0, \theta_1) = \max(\theta_0(X) + \theta_1(X) - 1, 0)$, similar logic applies. We can approximate via $$g_t(\theta_0, \theta_1) = \frac{1}{t} \log\big(1 + e^{t[\theta_0(X) + \theta_1(X) - 1]}\big),$$
    and the corresponding IF-based estimator is
    $$\hat\Psi^{\text{L}}_t = \mathbb{P}_n \left[ \sum_{a=0}^1 \hat{w}_{t}(X) \frac{\1\{A=a\}}{\hat{\pi}_a(X)} \Big(\1\{Y \le y\} - \hat{\theta}_a(X)\Big) + g_t\big(\hat{\theta}_0(X), \hat{\theta}_1(X)\big) \right],$$
    where $\hat{w}_{t}(X)=\sigma(t\hat{S}) = \frac{e^{t(\hat{\theta}_0(X) + \hat{\theta}_1(X) - 1)}}{1 + e^{t(\hat{\theta}_0(X) + \hat{\theta}_1(X) - 1)}}$, $\hat{S} = \hat{\theta}_0(X) + \hat{\theta}_1(X) - 1$.
\end{remark}

Beyond addressing non-differentiability, the smooth approximation offers a crucial inferential advantage. When the true parameter is exactly on the boundary, classical normal approximations and standard bootstrap procedures are known to be inconsistent \cite{andrews2000inconsistency}, while $g_t(\cdot)$ pulls the estimand strictly into the interior of the parameter space, ensuring the asymptotic validity of Wald-type confidence intervals.

\section{Counterfactual marginals with unmeasured confounding: Triple Machine Learning via a Representation Learning based IV Approach}\label{sec:marginal-iv}
We then move beyond the simple case with no unobserved confounding to the more complex scenario where unobserved confounding is present. In this setting, even the marginal distributions of the potential outcomes are typically non-identifiable. Furthermore, even the introduction of auxiliary variables, such as IVs in two-stage least squares (2SLS) regressions, either need (strong) parametric assumption like linear structural equation models (SEMs) to identify ATE, or can only identify local average treatment effect (LATE) for compliers in a nonparametric model with additional assumptions such as monotonic relevance and binary treatments \cite{angrist1996identification,swanson2018partial}. Nonparametric IV methods have gained significant attention in recent years \cite{newey2013nonparametric,levis2024nonparametric}, but they have focused more on the identification and estimation of average treatment effects rather than the full counterfactual distribution.

The challenge of unmeasured confounding has traditionally been viewed as an identification bottleneck that can only be breached by rigid functional forms. However, the emerging field of Causal Representation Learning (CRL) \cite{scholkopf2021toward,Moran09022026} offers a alternative perspective by framing the recovery of hidden confounders as an identifiable latent variable modeling problem. Unlike standard representation learning that focuses on extracting features for optimal prediction, CRL aims to identify the underlying causal variables and their structural relations from high-dimensional observations through structural constraints given by auxiliary information like exogenous instruments. This shift from treating confounders as unreachable nuisances to treatable latent variables provides the conceptual inspiration for our framework: using exogenous instrument as the key identifying constraint to recover the latent confounding subspace.

In this section, we propose a novel method that leverages IVs and representation learning to construct a latent representation of the unmeasured confounding, followed by the non-parametric identification and semi-parametric estimation of causal functionals. We first focus on the identification and inference of the marginals or average effects of the latent structure. This approach will be conveniently integrated with the bounding methods discussed in the previous section \ref{sec:bound-cov}, which utilized conditional copulas to transition from marginal to joint distributions, thereby enabling inference on the entire joint distribution and individual-level effects, which will be discussed in Section \ref{sec:bound-iv}.

Specifically, we observe $(S, A, Y)$ with exogenous instrument ($S$), treatment ($A$), and outcome ($Y$). We suspect unobserved confounders $Z_C$ such that $A \leftarrow Z_C \to Y$. Our goal is to estimate counterfactual marginals and Average Treatment Effects (ATE). We aim to use a representation learning based approach that uses $S$ to help identify a representation $\widehat Z_C$ of the unobserved $Z_C$ and then identify the ATE (Fig.\ref{fig:rl-iv-dag}).

\subsection{Identification}
\begin{figure}[t]
\centering
\begin{tikzpicture}[>=stealth, node distance=1.8cm, thick]

\node[draw, circle, fill=gray!10, inner sep=1pt, minimum size=0.9cm] (S) {$S$};
\node[draw, circle, fill=gray!10, right of=S, node distance=2.0cm, minimum size=0.9cm] (ZS) {$Z_S$};
\node[draw, circle, fill=gray!10, below of=ZS, node distance=1.8cm, minimum size=0.9cm] (ZC) {$Z_C$};
\node[draw, circle, fill=gray!10, right of=ZS, node distance=2.0cm, minimum size=0.9cm] (A) {$A$};
\node[draw, circle, fill=gray!10, right of=A, node distance=2.0cm, minimum size=0.9cm] (Y) {$Y$};

\node[draw, rectangle, fill=blue!10, left of=ZC, node distance=1.5cm, minimum size=0.9cm] (EC) {$\varepsilon_C$};
\node[draw, rectangle, fill=blue!10, above of=ZS, node distance=1.5cm, minimum size=0.9cm] (ES) {$\varepsilon_S$};

\draw[->] (S) -- (ZS);
\draw[->] (ZC) -- (ZS);
\draw[->] (ZS) -- (A);
\draw[->] (A) -- (Y);
\draw[->] (ZC) -- (Y);

\draw[->, dashed] (EC) -- (ZC);
\draw[->, dashed] (ES) -- (ZS);

\node[above=0.0cm of S] {\scriptsize Instrument};
\node[above=0.0cm of A] {\scriptsize Treatment};
\node[above=0.0cm of Y] {\scriptsize Outcome};
\node[below=0.0cm of ZC] {\scriptsize Confounder};
\node[above=0.0cm of ZS] {\scriptsize $S$-dependent};
\node[above=0.0cm of EC] {\scriptsize Noise};
\node[above=0.0cm of ES] {\scriptsize Noise};

\end{tikzpicture}
\caption{Causal and latent structure underlying the IV-based representation-learning model with Exogenous Noises. $\varepsilon_C$ and $\varepsilon_S$ are the independent, exogenous noise sources.
$Z_C$ (pure confounder) is defined by $\varepsilon_C$. $Z_S$ (IV-related latents) is defined by $Z_C$, $S$, and $\varepsilon_S$.
The absence of direct edges $S\to Y$ and $Z_S\to Y$ satisfies the Exclusion Restriction.}
\label{fig:rl-iv-dag}
\end{figure}
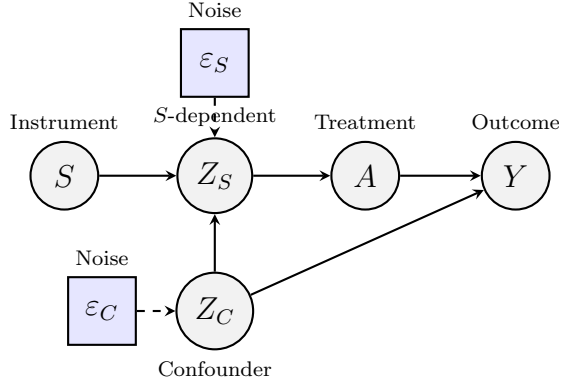

To identify the marginal distribution of potential outcomes in the presence of unobserved confounding, we exploit recent developments in identifiable CRL with exogenous auxiliary variables \cite{kong2022partial,ng2025causal,ng2025debiasing}. Specifically, we introduce an exogenous instrument $S$ to induce a decomposition of the latent representation of the observed data into components driven by the exogenous variation in $S$ ($Z_S$), and components which captures variation associated with the unobserved confounding ($Z_C$)

Since the confounders are not observed, $Z_C$ and $Z_S$ cannot be simply assumed to be conditionally independent. Nevertheless, under a sufficiently variability condition on $p(A|S)$, $Z_C$ remains identifiable. In practice, since both $S$ and the treatment $A$ are observed, it is unnecessary to estimate $Z_S$ explicitly. Instead, we adopt a reduced-form construction that directly targets $Z_C$, thereby alleviating the additional randomness that would arise from simultaneously estimating multiple latent components (Section \ref{subsec:learn}).

Note that, given the dimensionality of the observed variables (which constraints the latent dimensionality), $Z_C$ should not be interpreted as a direct representation of the high-dimensional confounders themselves. Rather, it serves as a low-dimensional sufficient score summarizing the aggregate influence of the unobserved confounding. We have the following identifiability results.

\begin{theorem}[Identification]\label{thm:rep-iv-ate}
Let the observed data be i.i.d.\ samples of $(S,A,Y)$, where
$S$ is an instrument, $A$ is the treatment, and $Y$ is the observed outcome. Assume the data-generating process admits latent variables $Z=(Z_C,Z_S)$ and exogenous noises $\varepsilon_C,\varepsilon_S$ satisfying the following assumptions:

\begin{assumption}[Causal structural equations]\label{assump:A1}
The latent variables are generated as $Z_C=f(\varepsilon_C)$ and $Z_S=h(Z_C,S,\varepsilon_S)$.
The observed variables $(A,Y)$ are generated via
\[
A=g_A(Z_S),\qquad Y=g_Y(A,Z_C),
\]
where
\begin{itemize}
    \item $g_A$ is an arbitrary measurable function (allowing for discrete/binary treatments).
    \item $g_Y(a, \cdot)$ is injective with respect to $Z_C$ for each observed treatment level $a \in \mathcal{A}$.
\end{itemize}
Consequently, the joint mapping $(Z_C, Z_S) \mapsto (A, Y)$ allows the unique recovery of $Z_C$ from observables $(A,Y)$ (i.e., $Z_C$ is observable-measurable). All exogenous noises are mutually independent and $Z_C\perp\!\!\!\perp S$. 
\end{assumption}

\begin{assumption}[IV conditions]\label{assump:A2}
Variable $S$ is a valid instrument satisfying
\begin{itemize}
  \item Exclusion Restriction: $S$ does not directly affect $Y$, i.e., 
        $Y = g_Y(A, Z_C)$ does not depend on $S$, or in the notations of potential outcomes $Y(s,a)=Y(a)$.
  \item Unconfoundedness: $S\perp\!\!\!\perp (\varepsilon_C,\varepsilon_S)$, or equivalently $S\perp\!\!\!\perp (A(s),Y(s))$.
\end{itemize}
\end{assumption}

\begin{assumption}[Sufficient Variability of IV]\label{assump:variability}
The instrument $S$ induces sufficient variation in the latent variable $Z_S$ conditional on $Z_C$. Formally, for almost all $z_c$, the family of conditional distributions $\mathcal{P}_{z_c} = \{ P(Z_S \in \cdot \mid Z_C=z_c, S=s) : s \in \mathcal{S} \}$ is complete.
In terms of sets, this implies that for any measurable set $\mathcal{B} \subseteq \mathcal{Z}_S$ with positive measure ($P(\mathcal{B}) > 0$), the probability mass assigned to this set varies with $S$
\[
s \mapsto P(Z_S \in \mathcal{B} \mid Z_C=z_c, S=s) \quad \text{is not a constant function a.s.}
\]
\end{assumption}

\begin{assumption}[Positivity and regularity]\label{assump:A3}
For almost all $z_C$ in the support of $Z_C$, $0< P(A=a\mid Z_C=z_C)<1$, and required conditional expectations exist and are finite.
\end{assumption}

\medskip

Then the confounding subspace $Z_C$ is \textbf{subspace-identifiable}. In other words, if there exists a learned representation $\widehat{Z}_C$ that satisfies independence constraints $\widehat{Z}_C \perp\!\!\!\perp S$ and predictive sufficiency $Y \perp\!\!\!\perp Z_C \mid (A, \widehat{Z}_C)$ or $p(Y \mid A, \widehat{Z}_C, Z_C) = p(Y \mid A, \widehat{Z}_C)$ almost surely, then there exists an invertible map $\psi$ such that $\widehat Z_C = \psi(Z_C)$ almost surely. 

And subsequently, the counterfactual marginals and the ATE is nonparametrically identifiable
\[
\mathbb E[Y(a)] = \mathbb E\big[\,\mathbb E[Y\mid A=a, \widehat Z_C]\,\big],\quad
\mathrm{ATE}(a,a') = \mathbb E\big[\,\mathbb E[Y\mid A=a, \widehat Z_C] - \mathbb E[Y\mid A=a', \widehat Z_C]\,\big].
\]
\end{theorem}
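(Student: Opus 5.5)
The proof has two parts: subspace identifiability of $Z_C$, then identification of the counterfactual functionals by a back-door adjustment on $\widehat Z_C$.

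\textbf{Step 1: $Z_C$ is a function of observables.} By Assumption~\ref{assump:A1}, $Y=g_Y(A,Z_C)$ and $g_Y(a,\cdot)$ is injective for each $a$. Hence $Z_C=g_Y(A,\cdot)^{-1}(Y)=:\rho(A,Y)$ almost surely, so $Z_C$ is $\sigma(A,Y)$-measurable.

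\textbf{Step 2: $Z_C$ is a function of $\widehat Z_C$.} Predictive sufficiency says $Y\perp\!\!\!\perp Z_C\mid (A,\widehat Z_C)$. Since $Z_C=\rho(A,Y)$, given $(A,\widehat Z_C)=(a,\hat z)$ the variable $Z_C$ is both independent of $Y$ and a function of $Y$. A random variable independent of itself is degenerate, so $Z_C=\kappa(A,\widehat Z_C)$ almost surely.

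\textbf{Step 3: removing $A$ from $\kappa$, the main obstacle.} I would rule out dependence on $A$ using the instrument.
\begin{itemize}
\item Because $\widehat Z_C\perp\!\!\!\perp S$ and $Z_C\perp\!\!\!\perp S$, the conditional law of $\widehat Z_C$ given $Z_C=z_c$ is invariant in $s$, at least where the relevant joint independence holds. I would argue this by treating $\widehat Z_C$ as a learned measurable function of the observables.
\item Since $A=g_A(Z_S)$, for fixed $z_c$ the map $s\mapsto P(\kappa(A,\widehat Z_C)=z_c\mid Z_C=z_c,S=s)$ integrates a function of $Z_S$ (and of $\widehat Z_C$) against $P(Z_S\mid Z_C=z_c,S=s)$. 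This quantity must equal $1$ for every $s$.
\item The completeness in Assumption~\ref{assump:variability} forces any function of $Z_S$ whose conditional expectation is constant in $s$ to be a.s.\ constant. So the part of $\kappa$ that varies with $A$ must vanish, giving $Z_C=\psi^{-1}(\widehat Z_C)$ for a measurable map.
\end{itemize}
This step is where I expect the argument to be delicate, because $\widehat Z_C$ may itself depend on $A$. To handle it, I would apply completeness to the function $\mathbb 1\{\kappa(A,\widehat Z_C)\ne \kappa(a',\widehat Z_C)\}$, after integrating out $\widehat Z_C$ given $(Z_C,Z_S)$.

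\textbf{Step 4: invertibility of $\psi$.} For the converse direction I would take $\widehat Z_C$ to be a minimal sufficient representation, or argue via the symmetric roles in the statement, so that $\widehat Z_C$ is a function of $Z_C$. Alternatively, I would note that any extra information in $\widehat Z_C$ beyond $Z_C$ does not affect the adjustment below. Either way we obtain $\widehat Z_C=\psi(Z_C)$ with $\psi$ invertible.

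\textbf{Step 5: identification of $\mathbb E[Y(a)]$ and the ATE.} This follows the route of Theorem~\ref{ass:id}, with $\widehat Z_C$ in place of $X$.
\begin{itemize}
\item \emph{Ignorability.} $Y(a)=g_Y(a,Z_C)$ is a function of $Z_C$, while $A=g_A(h(Z_C,S,\varepsilon_S))$ and $(S,\varepsilon_S)\perp\!\!\!\perp \varepsilon_C$. Hence $Y(a)\perp\!\!\!\perp A\mid Z_C$.
\item \emph{Transfer to $\widehat Z_C$.} Since $\sigma(\widehat Z_C)=\sigma(Z_C)$, the same ignorability holds conditional on $\widehat Z_C$.
\item \emph{Positivity.} This is exactly Assumption~\ref{assump:A3}.
\item \emph{Conclusion.} Consistency gives $\mathbb E[Y(a)\mid \widehat Z_C]=\mathbb E[Y\mid A=a,\widehat Z_C]$. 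Iterating expectations yields the displayed formulas for $\mathbb E[Y(a)]$, and the ATE formula follows by linearity.
\end{itemize}
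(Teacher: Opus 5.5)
Your Step 2 is sound. It is a cleaner form of the paper's injectivity step (its Step 1(b)): since $Z_C=\rho(A,Y)$, predictive sufficiency makes $Z_C$ conditionally independent of itself given $(A,\widehat Z_C)$, so $Z_C=\kappa(A,\widehat Z_C)$. The gap is in Steps 3--4, which never reach $\widehat Z_C=\psi(Z_C)$. The completeness argument in Step 3 is vacuous. By Step 2, $P(\kappa(A,\widehat Z_C)=z_c\mid Z_C=z_c,S=s)\equiv 1$, so completeness only says that a function already equal to $1$ is constant. Nothing you have makes the conditional mean of $\1\{\kappa(A,\widehat Z_C)\neq\kappa(a',\widehat Z_C)\}$ invariant in $s$. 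The $s$-invariance of the law of $\widehat Z_C$ given $Z_C=z_c$ in your first bullet also does not follow: $\widehat Z_C\perp\!\!\!\perp S$ and $Z_C\perp\!\!\!\perp S$ do not imply $(\widehat Z_C,Z_C)\perp\!\!\!\perp S$. Step 4 either adds an assumption (minimal sufficiency), appeals to a symmetry the hypotheses lack (predictive sufficiency is not symmetric in $Z_C,\widehat Z_C$), or drops the stated claim. The ``extra information is harmless'' fallback does not rescue Step 5, which transfers ignorability and positivity through $\sigma(\widehat Z_C)=\sigma(Z_C)$.

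The paper orders the argument the other way. Its Step 1(a) uses $\widehat Z_C\perp\!\!\!\perp S$ and completeness on the composite map $\psi(z_c,z_s)=\hat e(g_A(z_s),g_Y(g_A(z_s),z_c))$ to conclude that $\psi$ is constant in $z_s$, i.e.\ $\widehat Z_C=\phi(Z_C)$. Only then does it use predictive sufficiency for injectivity. In that order your Step 2 closes the argument with positivity alone. Since $Z_C=\kappa(A,\phi(Z_C))$ a.s., for a.e.\ $z_c$ the fact $P(A=a_0\mid Z_C=z_c)>0$ forces $\kappa(a_0,\phi(z_c))=z_c$, so $\kappa(a_0,\cdot)$ is a left inverse of $\phi$.

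However, the caveat you flagged is real, and it also affects the paper's Step 1(a), which integrates over $z_c$ before invoking per-$z_c$ completeness. With only the marginal constraint the conclusion can fail. Take $Z_C\sim\mathcal{N}(0,1)$, $S\sim\mathrm{Unif}(-2,2)$, $Z_S=S+Z_C^2+\varepsilon_S$ with Gaussian $\varepsilon_S$, $A=\1\{Z_S>0\}$, $Y=A+Z_C$, and $\widehat Z_C=(2A-1)(Y-A)=(2A-1)Z_C$. All stated assumptions hold. Because the sign of $Z_C$ is independent of $(|Z_C|,A)$ given $S$, we get $\widehat Z_C\mid S=s\sim\mathcal{N}(0,1)$ for every $s$. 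Since $(A,\widehat Z_C)$ determines $Z_C$, predictive sufficiency holds. Yet $\widehat Z_C=\pm Z_C$ with probability $1/2$ each, so it is not a function of $Z_C$. The argument needs $\widehat Z_C\perp\!\!\!\perp S\mid Z_C$. Under that condition, bounded completeness applied for each $z_c$ to $z_s\mapsto\1_U(\psi(z_c,z_s))$ gives $\widehat Z_C=\phi(Z_C)$ directly, and your Step 2 plus positivity then yields invertibility.
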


\begin{remark}[Completeness v.s. Variability]
Assumption \ref{assump:variability} is formally equivalent to the \emph{Bounded Completeness} condition often used in econometrics. We adopt the terminology of \emph{Sufficient Variability} \citep{kong2022partial, ng2025debiasing} to highlight the geometric intuition: the instrument $S$ must create diverse distributional shifts in $Z_S$ such that no subset of the latent space remains ``invariant" to $S$. This property is crucial in Step 1 of the proof to rule out representations that entangle $Z_S$.
\end{remark}

\begin{remark}[On the injectivity Assumption]
We emphasize that the latent variables $Z_C$ and $Z_S$ should not be interpreted as the high-dimensional raw state of the world (e.g., all genetic factors). Instead, following the philosophy of Sufficient Dimension Reduction or Control Functions, they represent the low-dimensional (or scalar) projections or scores of these factors (so that we can let $\dim(Z_C) \le \dim(Y)$) that actively drive the variation in $A$ and $Y$. Under this interpretation, the injectivity assumption in \ref{assump:A1} implies that the causal mechanism relies on a low-dimensional bottleneck, which is a standard structural assumption in representation learning \cite{kong2022partial,ng2025causal,ng2025debiasing}. 
\end{remark}

\begin{remark}[Identifiability of counterfactual marginals]
Theorem \ref{thm:rep-iv-ate} implies the identifiability of marginal distribution of potential outcome $Y(a)$, i.e.
\[
P(Y(a)\leq y)=\E\left[P(Y\leq y|A=a,\hat Z_c)\right].
\]
\end{remark}

The proof of Theorem \ref{thm:rep-iv-ate} consists of two main stages. First, based on the injectivity and completeness assumptions, we establish the identifiability of the confounding subspace $Z_C$ up to an invertible transformation (Step 1). Second, we utilize this identified representation to identify the marginal counterfactual distributions $F_{Y(a)}(y)$ and subsequently, the Average Treatment Effect (ATE) via the back-door adjustment formula (Steps 2-4). Detailed proof is shown in Appendix \ref{proof:rep-iv-ate}.

\subsection{Estimation and Inference}
Generally speaking, we can estimate the average potential outcome $\psi(a) = \mathbb{E}[Y(a)]$ using similar approaches but replace the observed confounding $Z_C$ to the confounding representations $\hat{Z}_C$ learned from a separated dataset. We quickly review the standard ways as follows.

\subsubsection{Standard estimators with observed confounding}
For binary treatment $A \in \{0, 1\}$, the outcome regression (OR) estimator is $\hat{\psi}_{\text{OR}}(a) = \frac{1}{n}\sum_{i=1}^n \hat{\mu}(a, Z_{C,i})$, where $\hat{\mu}(a, z) = \mathbb{E}[Y | A=a, Z_C=z]$ is the estimated outcome model. The inverse propensity weighting (IPW) estimator is $\hat{\psi}_{\text{IPW}}(a) = \frac{1}{n}\sum_{i=1}^n \frac{\1(A_i=a)}{\hat{\pi}(a|Z_{C,i})} Y_i$, where $\hat{\pi}(a|z) = P(A=a|Z_C=z)$ is the propensity score. The doubly robust (DR) estimator combines both models as $\hat{\psi}_{\text{DR}}(a) = \frac{1}{n}\sum_{i=1}^n \left[\hat{\mu}(a, Z_i) + \frac{\1(A_i=a)}{\hat{\pi}(a|Z_{C,i})} (Y_i - \hat{\mu}(A_i, Z_{C,i}))\right]$ and admits the double robustness \citep{bang2005doubly}.

For continuous treatment $A \in \mathbb{R}$, the indicator function $\1(A_i=a)$ has zero probability, requiring density or kernel-based adaptations. Let $\hat{r}(a|z) = \hat{f}(a|z)$ denote the estimated conditional density (generalized propensity score, GPS) and $K_h(\cdot)$ be a kernel function with bandwidth $h$. The outcome regression estimator remains $\hat{\psi}_{\text{OR}}(a) = \frac{1}{n}\sum_{i=1}^n \hat{\mu}(a, Z_{C,i})$. The GPS-IPW estimator \citep{hirano2004propensity} is $\hat{\psi}_{\text{GPS}}(a) = \frac{\sum_{i=1}^n K_h(A_i - a) Y_i / \hat{r}(A_i|Z_{C,i})}{\sum_{i=1}^n K_h(A_i - a) / \hat{r}(A_i|Z_{C,i})}$, where $K_h(A_i - a)$ weights observations near $a$ and $1/\hat{r}(A_i|Z_{C,i})$ provides inverse density weighting. The density estimation-based standard DR estimator \citep{kennedy2017non} can be constructed as $\hat{\psi}_{\text{D}}(a) = \frac{1}{n}\sum_{i=1}^n \left[\hat{\mu}(a, Z_{C,i}) + \frac{\hat{r}(a|Z_{C,i})}{\hat{r}(A_i|Z_{C,i})} (Y_i - \hat{\mu}(A_i, Z_{C,i}))\right]$, where the density ratio $\hat{r}(a|Z_{C,i})/\hat{r}(A_i|Z_{C,i})$ replaces the indicator-based weight from binary DR, maintaining doubly robust consistency. The DR-kernel estimator tends to employ kernel methods and simplifies this as $\hat{\psi}_{\text{K}}(a) = \frac{1}{n}\sum_{i=1}^n \left[\hat{\mu}(a, Z_{C,i}) + n \cdot w_i(a) (Y_i - \hat{\mu}(A_i, Z_{C,i}))\right]$, where $w_i(a) = K_h(A_i - a) / \sum_{j=1}^n K_h(A_j - a)$ are normalized kernel weights that avoid explicit density estimation, making it computationally efficient while maintaining approximate doubly robust properties \citep{flores2012estimating, fong2018covariate}.

The GPS can be estimated using kernel density estimation (KDE) on residuals $\hat{r}(a|Z) \approx \frac{1}{n}\sum_{j=1}^n K_h((a - \hat{\mu}(Z)) - r_j)$, where $r_j = A_j - \hat{\mu}(Z_j)$ are training residuals and $\hat{\mu}(Z) = \mathbb{E}[A|Z]$ is the conditional mean of treatment estimated via gradient boosting. This nonparametric approach accommodates non-Gaussian treatment distributions. For numerical stability, extreme weights can be trimed. Bandwidth selection follows Silverman's rule $h = 1.06 \hat{\sigma}_A n^{-1/5}$ \cite{silverman2018density}, where $\hat{\sigma}_A$ is the sample standard deviation of treatment.

\subsubsection{Triple Machine Learning estimator}\label{subsec:triple-est}
We now resort to learning confounding representations when there are unmeasured confounders, and subsequently, we can employ all the above standard estimators by replacing the observed covariates with learned confounding representations. It mimic the standard semiparametric/double-robust/double-ML way \cite{chernozhukov2018double,kennedy2024semiparametric}, and we call it ``Triple Machine Learning" (TML) since we need an additional fold to learn the confounding representation first before the standard double machine learning or double robust estimation. As an example, for binary treatments, we can construct an estimator $\widehat{\E}[Y(a)]$ using three-fold sample splitting or more complex cross-fitting. We summarize the procedure as follows
\begin{itemize}
        \item Fold 1 to learn representation $\widehat{Z}_C = \widehat{\psi}(A,Y)$.
        \item Fold 2 to estimate nuisance parameters
        \begin{itemize}
            \item $\widehat{m}_a(z) = \hat\E[Y \mid A=a, \hat Z_C=z]$ in the Outcome model.
            \item $\widehat{\pi}_a(z) = \hat P(A=a \mid \hat Z_C=z)$ in the Treatment (Propensity) model.
        \end{itemize}
        \item Fold 3 to estimate the causal parameter
        \[
        \widehat{\E}[Y(a)]=\Pn\Big[\widehat{m}_a(\widehat{Z}_{C}) + \frac{\1\{A=a\}}{\widehat{\pi}_a(\widehat{Z}_{C})} \big(Y_i - \widehat{m}_a(\widehat{Z}_{C})\big)
        \Big]
        \]
    \end{itemize}
It can easily generalize to other distributional targets by modifying the outcome model to other functionals like $\widehat{m}_a(z, y) = \widehat{P}(Y \leq y \mid A=a, \widehat{Z}_C=z)$. Moreover, similar logic can be applied to the continuous scenario with GPS, density or kernel based estimators. In principle, the proposed TML framework is sufficiently general to accommodate a variety of estimators, including outcome regression, inverse probability weighting, and doubly robust estimators. We first establish the theoretical properties of the TML estimator under binary treatment assignment using a doubly robust construction (\ref{thm:dr-rliv-summary}), and it can be easily generalized to more complex continuous scenarios. We subsequently introduce a variational approach for effective confounding representation learning (Section \ref{subsec:learn}), and assess the numerical performance of different estimators within the TML framework through simulation studies (Section \ref{sec:sim}). 
    
\begin{theorem}[Asymptotic Properties of the Doubly-Robust-Type Representation-IV Estimator]
\label{thm:dr-rliv-summary}
Let the identification assumptions of Theorem~\ref{thm:rep-iv-ate} hold.
Let the estimator $\widehat{\E}[Y(a)]$ be constructed using $K$-fold cross-fitting, where $\widehat{Z}_C = \widehat{\psi}(Z_C)$, $\widehat{m}$, and $\widehat{\pi}$ are estimated on separate folds. Let the boundedness and overlap assumption hold, i.e., there exist constants $0<\underline c<\overline c<\infty$ such that
$\underline c < \pi(a\mid z) < \overline c$ almost surely,
and $\mathbb{E}[Y^2]<\infty$.

Then we have
\paragraph{1. Consistency (double robustness)}
The estimator $\widehat{\E}[Y(a)]$ is consistent for $\E[Y(a)]$, i.e.,
\[
\widehat{\E}[Y(a)] \xrightarrow{p} \E[Y(a)],
\]
if the following conditions hold
\begin{enumerate}[label=(C1\alph*)]
    \item The representation $\widehat{Z}_C$ is consistent
    \[
    \|\widehat{Z}_C - \psi(Z_C)\|_{L_2,P} = o_p(1).
    \]
    \item At least one of the nuisance estimators is consistent for the true function defined on $Z_C$, despite being trained on $\widehat{Z}_C$ (i.e., it is EIV-robust and consistent)
    \[
    \|\widehat{m}(\hat Z_C) - m_0(Z_C)\|_{L_2,P} = o_p(1) \quad \text{or} \quad \|\widehat{\pi}(\hat Z_C) - \pi_0(Z_C)\|_{L_2,P} = o_p(1).
    \]
\end{enumerate}

\paragraph{2. Asymptotic Normality}
The estimator is $\sqrt{n}$-asymptotically normal with a corrected variance,
\[
\sqrt{n}\,\bigl(\widehat{\E}[Y(a)] - \E[Y(a)]\bigr)
\xrightarrow{d} \N(0, V_\text{total}(a)),
\]
where $V_\mathrm{total}(a) = \mathrm{Var}[\mathrm{IF}_a(O; \eta_0) + \mathrm{IF}_{\phi, \mathrm{rep}}(a)]$, if the following hold
\begin{enumerate}[label=(C2\alph*)]
    \item The nuisance functions (assumed EIV-robust) satisfy the double machine learning (DML) rate condition
    \[
    \|\widehat{m}(\hat Z_C) - m_0(Z_C)\|_{L_2,P} \cdot \|\widehat{\pi}(\hat Z_C) - \pi_0(Z_C)\|_{L_2,P} = o_p(n^{-1/2}).
    \]
    \item The representation $\widehat{Z}_C$ converges as
    \[
    \|\widehat{Z}_C - \psi(Z_C)\|_{L_2,P} = O_p(n^{-1/2}).
    \]
\end{enumerate}
In this case, the first-stage estimation error in $\widehat{Z}_C$ contributes a first-order correction term $\mathrm{IF}_{\phi, \mathrm{rep}}(a)$ to the influence function.

\paragraph{3. Efficiency}
The estimator is $\sqrt{n}$-asymptotically normal and achieves the ordinary efficient variance lower bound,
\[
\sqrt{n}\,\bigl(\widehat{\E}[Y(a)] - \E[Y(a)]\bigr)
\xrightarrow{d} \N(0, V_{a}),
\]
where $V_a = \mathrm{Var}[\mathrm{IF}_a(O; \eta_0)]$, if the following hold
\begin{enumerate}[label=(C3\alph*)]
    \item The nuisance functions (assumed EIV-robust) satisfy the DML rate condition
    \[
    \|\widehat{m}(\hat Z_C) - m_0(Z_C)\|_{L_2,P} \cdot \|\widehat{\pi}(\hat Z_C) - \pi_0(Z_C)\|_{L_2,P} = o_p(n^{-1/2}).
    \]
    \item The representation $\widehat{Z}_C$ converges at a ``super-convergence" rate
    \[
    \|\widehat{Z}_C - \psi(Z_C)\|_{L_2,P} = o_p(n^{-1/2}).
    \]
\end{enumerate}
In this case, the first-stage estimation error is asymptotically negligible, and the correction term $\mathrm{IF}_{\phi, \mathrm{rep}}(a)$ disappears from the asymptotic expansion.

\end{theorem}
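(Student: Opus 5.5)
The plan is a von Mises decomposition around the oracle influence function, with one extra layer that separates the representation error from the nuisance error. Write $\eta=(m,\pi)$ for the nuisances. Let $\varphi_a(O;\eta,\widetilde Z)=m_a(\widetilde Z)+\1\{A=a\}(Y-m_a(\widetilde Z))/\pi_a(\widetilde Z)$ denote the AIPW score evaluated at a generic representation $\widetilde Z$. By Theorem~\ref{thm:rep-iv-ate}, $\E[Y(a)]=P\varphi_a(O;\eta_0,\psi(Z_C))$. Invariance of the conditional expectations under the invertible map $\psi$ lets us take $\psi=\mathrm{id}$ when writing $\eta_0$. The fold containing the estimation sample is held out, so conditionally on folds 1 and 2 the learned map $\widehat\psi$ and $\widehat\eta$ are fixed functions. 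We decompose
\[
\widehat\E[Y(a)]-\E[Y(a)]=(\Pn-P)\varphi_a(O;\eta_0,Z_C)+(\Pn-P)\bigl[\varphi_a(O;\widehat\eta,\widehat Z_C)-\varphi_a(O;\eta_0,Z_C)\bigr]+P\bigl[\varphi_a(O;\widehat\eta,\widehat Z_C)-\varphi_a(O;\eta_0,Z_C)\bigr].
\]
The first term gives the CLT part with $\IF_a(O;\eta_0)=\varphi_a(O;\eta_0,Z_C)-\E[Y(a)]$.

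The second term is handled by the sample-splitting lemma. Conditional on the training folds, it has mean zero and conditional variance bounded by $n^{-1}\|\varphi_a(\widehat\eta,\widehat Z_C)-\varphi_a(\eta_0,Z_C)\|^2_{L_2}$. Overlap (the bounds $\underline c,\overline c$) and $\E[Y^2]<\infty$ bound this $L_2$ norm by a constant times $\|\widehat m(\widehat Z_C)-m_0(Z_C)\|+\|\widehat\pi(\widehat Z_C)-\pi_0(Z_C)\|$. That sum is $o_p(1)$ under (C2a)/(C3a), read together with the consistency implicit there, so the term is $o_p(n^{-1/2})$ by Chebyshev. For the third term, the standard AIPW algebra, conditioning on $(A,Z_C)$ and using $\E[Y\mid A=a,Z_C]=m_0(Z_C)$, splits it into two pieces. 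The first is the doubly robust product
\[
P\Bigl[\frac{\pi_0(Z_C)-\widehat\pi(\widehat Z_C)}{\widehat\pi(\widehat Z_C)}\bigl(m_0(Z_C)-\widehat m(\widehat Z_C)\bigr)\Bigr],
\]
which Cauchy--Schwarz bounds by $\underline c^{-1}\|\widehat m-m_0\|\,\|\widehat\pi-\pi_0\|$. The second is a residual that appears because $\widehat\eta$ is evaluated at $\widehat Z_C$ rather than at $Z_C$. In the $\E[\cdot]$ of the score, the only place where $\widehat Z_C$ enters through something other than the nuisance discrepancies above is the implicit dependence of the target's plug-in on the representation. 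We isolate this piece as a representation term $R_{\mathrm{rep}}$.

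For consistency, (C1b) makes the product term vanish: one factor is $o_p(1)$ and the other is $O_p(1)$ by boundedness. Continuity of the functional in the representation, together with (C1a), sends $R_{\mathrm{rep}}\to0$. For part 2, the representation map $\widehat\psi$ is treated as a first-stage M/Z-estimator that admits an asymptotically linear expansion $\widehat Z_C-\psi(Z_C)=\Pn \IF_\psi+o_p(n^{-1/2})$; this is consistent with the $O_p(n^{-1/2})$ rate in (C2b). A pathwise (chain-rule) derivative of $\widetilde Z\mapsto P\varphi_a(O;\eta_0,\widetilde Z)$ then gives $R_{\mathrm{rep}}=\Pn\IF_{\phi,\mathrm{rep}}(a)+o_p(n^{-1/2})$, with $\IF_{\phi,\mathrm{rep}}(a)=D_{Z}P\varphi_a[\IF_\psi]$. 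Combined with the product condition (C2a), this yields normality with variance $\Var[\IF_a+\IF_{\phi,\mathrm{rep}}]$. For part 3, a second-order Taylor bound gives $|R_{\mathrm{rep}}|\lesssim\|\widehat Z_C-\psi(Z_C)\|_{L_2}$, which is $o_p(n^{-1/2})$ under (C3b). The correction term therefore drops out, and the asymptotic variance equals the efficient $\Var[\IF_a]$ that one would obtain if $Z_C$ were observed.

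The main obstacle is making $R_{\mathrm{rep}}$ rigorous. The sticking point is that $\widehat Z_C$ is learned on a different fold, yet it enters the estimation fold as a fixed random function. Expressing its effect as an additive empirical mean over the estimation sample needs care, since the fluctuation actually originates in fold 1. The approach is to stack the fold-specific expansions, so that the correction is an average over all $n$ observations under cross-fitting, and to impose Lipschitz/differentiability of $m_0,\pi_0$ in $z$ so that the Gateaux derivative exists. A related subtlety is that the EIV-robustness in (C2a) presupposes that $\widehat m(\widehat Z_C)$ targets $m_0(Z_C)$; where needed this is stated explicitly as part of the assumption rather than derived.
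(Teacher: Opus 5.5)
You follow the paper's route: an oracle CLT, a sample-split empirical-process term, and a bias term split into a doubly robust product plus a representation piece, which you linearize in Part 2 and bound in Part 3. The gap is that $R_{\mathrm{rep}}$ is never written down, and in the setting you adopt it does not exist.

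If, given folds 1--2, $\widehat Z_C=\widehat\psi(Z_C)$ is a fixed function of $Z_C$, then $\tilde m:=\widehat m\circ\widehat\psi$ and $\tilde\pi:=\widehat\pi\circ\widehat\psi$ are simply nuisance estimates on $Z_C$. Your own conditioning argument then gives exactly
\[
P\bigl[\varphi_a(O;\widehat\eta,\widehat Z_C)-\varphi_a(O;\eta_0,Z_C)\bigr]
=P\Bigl[\frac{\{\tilde m(Z_C)-m_0(Z_C)\}\{\tilde\pi(Z_C)-\pi_0(Z_C)\}}{\tilde\pi(Z_C)}\Bigr],
\]
with nothing left over, so $R_{\mathrm{rep}}\equiv 0$. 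Equivalently, the Gateaux derivative of $\widetilde Z\mapsto P\varphi_a(O;\eta_0,\widetilde Z)$ at $Z_C$ in any direction $h(Z_C)$ is zero: conditioning on $Z_C$ kills both the $m_0'$ and the $\pi_0'$ terms. Hence your $\IF_{\phi,\mathrm{rep}}(a)=D_ZP\varphi_a[\IF_\psi]$ vanishes.

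Carried out correctly, your argument therefore proves efficiency under (C2a) plus consistency, regardless of (C2b). It cannot produce the first-order correction you are after. The paper's proof has the same weak spot. Its $T_{3b}=\sqrt n\,(M(\widehat\phi)-M(\phi_0))$ is asserted to be non-orthogonal, but the same computation gives $\nabla_\phi M(\phi_0)=0$ when $\widehat\phi$ acts on $Z_C$, so you cannot borrow that step.

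The missing idea is where a genuine first-order term comes from. Because $Z_C$ is unobserved, the representation must be an encoder of observables, e.g.\ $\widehat Z_C=\widehat e(A,Y)$ as in Fold 1, and its error can differ across treatment arms. Put $\widehat Z_C^{(a)}:=\widehat e(a,Y(a))$. Under $Y=g_Y(A,Z_C)$ this is a function of $Z_C$, and it coincides with $\widehat Z_C$ on $\{A=a\}$. The exact identity is then
\[
\begin{aligned}
P\bigl[\varphi_a(O;\widehat\eta,\widehat Z_C)-\varphi_a(O;\eta_0,Z_C)\bigr]
&=P\bigl[\1\{A\neq a\}\{\widehat m(\widehat Z_C)-\widehat m(\widehat Z_C^{(a)})\}\bigr]\\
&\quad+P\Bigl[\frac{\{\widehat m(\widehat Z_C^{(a)})-m_0(Z_C)\}\{\widehat\pi(\widehat Z_C^{(a)})-\pi_0(Z_C)\}}{\widehat\pi(\widehat Z_C^{(a)})}\Bigr].
\end{aligned}
\]
The second term is your doubly robust product. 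The first, off-arm mismatch term is first order in the encoder error, because the true encoder is arm-invariant. This is the term you must linearize via the fold-1 expansion to define $\IF_{\phi,\mathrm{rep}}(a)$, and it is $o_p(n^{-1/2})$ under (C3b) for Lipschitz $\widehat m$.

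A smaller point: in Part 1 only one nuisance is consistent. The empirical-process term should therefore be bounded as $O_p(n^{-1/2})$ using boundedness of the score, not through the $o_p(1)$ $L_2$ bound you use for Parts 2--3.
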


The proof relies on the von-Mises decomposition
\begin{align}
\sqrt{n}(\widehat{\psi} - \psi_0) &= \sqrt{n}(\mathbb{P}_{n,3} \varphi(\widehat{\eta}) - P \varphi(\eta_0)) \nonumber \\
&= \underbrace{\sqrt{n}(\mathbb{P}_{n,3} - P) \varphi(\eta_0)}_{T_1: \text{Oracle CLT}} 
+ \underbrace{\sqrt{n}(\mathbb{P}_{n,3} - P) (\varphi(\widehat{\eta}) - \varphi(\eta_0))}_{T_2: \text{Empirical Process}} 
+ \underbrace{\sqrt{n} P (\varphi(\widehat{\eta}) - \varphi(\eta_0))}_{T_3: \text{Bias Term}},
\end{align}
where the bias consists of a nuisance estimation error and a representation learning error
\[
T_3 = \underbrace{\sqrt{n} P (\varphi(\widehat{\eta}) - \varphi(\tilde{\eta}))}_{T_{3a} \text{ (Nuisance Bias)}} + \underbrace{\sqrt{n} P (\varphi(\tilde{\eta}) - \varphi(\eta_0))}_{T_{3b} \text{ (Representation Bias)}},
\]
where the intermediate parameter $\tilde{\eta} = (m_0, \pi_0, \widehat{\phi})$ represents the ideal nuisance parameters given the learned representation $\widehat{\phi}$. Note that $T_{3a}$ can be handled via ordinary Neyman orthogonality technique while $T_{3b}$ fails since $\psi$ is NOT orthogonal w.r.t. $Z_C$.  Hence, let $M(\phi) = \E[\varphi(O; m_0, \pi_0, \phi)]$ be the expected score functional, then we have
\[
T_{3b} = \sqrt{n} (M(\widehat{\phi}) - M(\phi_0)) = \underbrace{\sqrt{n} \nabla_{\phi} M(\phi_0)[\widehat{\phi} - \phi_0]}_{\text{Linear Term (I)}} + \underbrace{\sqrt{n} \mathcal{R}(\widehat{\phi}, \phi_0)}_{\text{Remainder Term (II)}},
\]
where the variance of the linear term is $\Var(\mathrm{IF}_{\phi, \text{rep}}(O))$. In other words, the representation error accumulates in the subsequent causal estimation, resulting in inflated variance induce by $\mathrm{IF}_{\phi, \text{rep}}(O) = \mathbb{E}_{O'}[\nabla_\phi \varphi(O'; \eta_0)] \cdot \xi_\phi(O)$ where $\xi_\phi(O)$ is the influence function for the representation learner, even when the representation has a parametric convergence rate. Detailed proof is shown in Appendix \ref{proof:dr-rliv-summary}.

In practice, the explicit calculation of the variance inflation term $V_{\text{rep}} = \Var(\mathrm{IF}_{\phi, \text{rep}})$ presents a significant challenge, as the correction influence function $\mathrm{IF}_{\phi, \text{rep}}$ depends on the infinite-dimensional sensitivity of the causal functional and the algorithmic response of the representation learner. Several empirical methods can be applied to facilitate robust inference. First, one may employ a Hessian-based numerical approximation leveraging the recent advances in neural influence functions \cite{koh2017understanding}. By treating the encoder as a parametric model $\phi_\theta$, the term $\xi_\phi(O)$ can be approximated using the inverse Hessian-vector product (HVP). Specifically, $\widehat{\mathrm{IF}}_{\phi, \text{rep}} \approx \mathbb{P}_n[\nabla_\theta \varphi]^\top H_\theta^{-1} \nabla_\theta L$, where the inverse Hessian $H_\theta^{-1}$ is efficiently computed via stochastic estimation (e.g., the LiSSA algorithm). This approach provides a first-order approximation of how small perturbations in the representation training set propagate to the final causal estimate.

Alternatively, a more computationally intensive but non-parametric approach is ensemble-based uncertainty quantification with Bootstrap or Infinitesimal Jackknife (IJ) by repeating procedure over $B$ random sample splits \cite{wager2018estimation}, and the excess variance observed across different representation learning folds effectively captures the stochasticity encoded in $V_{\text{rep}}$. While computationally demanding, this ensemble approach bypasses the need for second-order derivatives and is inherently more robust to the non-convex landscape of deep neural networks.

\begin{remark}[The EIV-Robustness Assumption]
The \textbf{EIV-robustness} underpinning all three scenarios requires the existence of estimators for $\widehat{m}$ and $\widehat{\pi}$ that can be trained on the error-laden proxy $\widehat{Z}_C$ instead of true $Z_C$ and yet still converge to the true functions $m_0$ and $\pi_0$ at the specified rates. This is a non-trivial requirement as standard estimators in non-parametric EIV settings often suffer from severe attenuation bias and degraded logarithmic convergence rates \cite{fan1993nonparametric}. Satisfying this condition implicitly requires the underlying true functional space to be sufficiently smooth, or necessitates the deployment of specialized deconvolutional algorithms during the nuisance training phase to actively correct for the measurement error in $\widehat{Z}_C$.
\end{remark}

We summarize the observations as follows.
\begin{remark}
    Within TML framework, A DR-type estimator can be constructed, but its asymptotic properties critically depend on two factors
    \begin{enumerate}
        \item \textbf{The EIV-Robustness Assumption:} We need $\widehat{m}, \widehat{\pi}$ to be trained consistently on the estimated representation $\widehat{Z}_C$ instead of the oracle one.
        \item \textbf{The Convergence Rate of $\widehat{Z}_C$}: under the EIV-Robustness Assumption,
        \begin{itemize}
            \item if $\widehat{Z}_C$ admits a $O_p(n^{-1/2})$ rate $\implies$ the causal estimator has an inflated variance $V_{total}$.
            \item if $\widehat{Z}_C$ admits a super smooth $o_p(n^{-1/2})$ rate $\implies$ the causal estimator achieve the semiparametric efficient variance $V_{a}$. It is generally very hard for a nonparametric flexible estimators like ML methods.
        \end{itemize}
    \end{enumerate}
\end{remark}

\subsection{IV-VAE: Learning representation $Z_C$}\label{subsec:learn}
The learning of $Z_C$ is a classic problem in disentangled representation learning. A natural way is to consider the following regularized $\beta$-VAE model
\begin{align}
    \mathcal{L}=&-\E_{q(Z|A,Y)}\Big[\log p(Y|Z_C,A)+\log p(A|Z_S)\Big]\nonumber+\beta KL\Big(q(Z|A,Y)||p(Z_S|S,Z_C)p(Z_C)\Big)\nonumber\\
    &+\lambda\E_q\Big[\text{HSIC}(Z_C,S)+\text{HSIC}(Z_C,A|Z_S)\Big],
\end{align}
where the (conditional) dependence can be measured via (conditional) HSIC \cite{fukumizu2004dimensionality,fukumizu2007kernel,sheng2023distance}. This involves the simultaneous optimization of two coupled objects, $\hat Z_C$ and $\hat Z_S$ (since there exists a link $Z_C\to Z_S$), and relies on the intrinsically difficult measurement and constraint of conditional independence \cite{shah2020hardness,hehardness}, making it tricky to achieve stable learning.

However, in fact, explicitly inferring $Z_S$ is computationally redundant for the purpose of confounder identification, given the fact that the instruments $S$ is observed and exogenous. We therefore employ a reduced-form specification for the treatment decoder. By substituting the mechanism of $Z_S$ into the treatment assignment function, we approximate the composite function $A = g(h(S), Z_C) = \tilde{g}(S, Z_C)$. Consequently, we can introduce the IV-VAE, where the generative model (Decoder) is defined as
$$\begin{aligned}
p_\theta(A | S, Z_C) &= \mathcal{N}(\mu_A(S, Z_C), \sigma^2_A) \\
p_\theta(Y | A, Z_C) &= \mathcal{N}(\mu_Y(A, Z_C), \sigma^2_Y)
\end{aligned}$$
Conditioning directly on $S$ allows the decoder to capture the variation in $A$ induced by the instrument pathway without the need to explicitly model the intermediate latent variable $Z_S$, while $Z_C$ captures the remaining confounding variation.

We introduce an inference network (VAE Encoder) $q_\phi(Z_C | A, Y, S)$, which conditions on $S$ to facilitate the separation of instrument-induced variation from confounder-induced variation. To enforce the structural assumption that the recovered confounder is exogenous to the instrument, we utilize the HSIC \cite{gretton2005measuring} as a regularization term to explicitly penalize statistical dependence between the learned latent space $\hat{Z}_C$ and the instrument $S$, ensuring that $\hat{Z}_C$ satisfies the properties of a valid confounder. The final optimization goal becomes
\begin{align}
    \mathcal{L}(\theta, \phi) &= \underbrace{-\mathbb{E}_{q_\phi}[\log p_\theta(A | S, Z_C) + \log p_\theta(Y | A, Z_C)]}_{\text{Reconstruction Loss}} + \beta \underbrace{{KL}(q_\phi(Z_C | \cdot) \| p(Z_C))}_{\text{KL Divergence}} \nonumber\\
    &\quad+ \lambda \underbrace{\text{HSIC}(Z_C, S)}_{\text{Independence Constraint}}.
\end{align}
This new objective focuses solely on the learning of $\hat Z_C$ and only necessitates constraining unconditional independence, which significantly facilitates stable, efficient learning and optimization.

\section{From ATE to ITE: Bounds of counterfactual joint distribution in the presence of unobserved confounding}\label{sec:bound-iv}
Identification Theorem \ref{thm:rep-iv-ate} guarantees the identifiability of conditional and marginal distributions of potential outcomes $F_{Y(a)|\hat Z_C}(y)$ and $F_{Y(a)}(y)$. We can then follow results in section \ref{sec:bound-cov} to use (conditional) copulas and FH inequality to further bound their (conditional) joint distributions. 

We now work with estimated representation $\hat Z_C$ instead of observing $X$, and hence we need to extra price on the representation learning error. Therefore, the estimation quality will be jointly affected by 1) Convergence rate of learning $\hat Z_C$; 2) EIV robustness in estimating nuisance parameters $\hat m$ and $\hat\theta$; 3) The realization of the margin condition for direct estimator or the bias-variance trade-off for the smooth approximation with different smooth parameter $t$. Cross-fitting will be generally helpful to separate these error of different sources.

\section{Simulations}\label{sec:sim}
In the following sections, we conducted empirical evaluations on both simulated and real-world data to further highlight our proposed methods. We first evaluated the conditional rank-preserving FH upper bounds of the direct estimator and the log-sum-exp approximation under varying smoothing parameters. Subsequently, we demonstrated the accuracy of confounding representation learning in the presence of unobserved confounders, along with the estimation performance of ATE and continuous treatment response curves for different estimators within the TML framework.

\subsection{Simulations on boundary estimation of counterfactual joint distribution}
To validate the theoretical properties and estimation performance of the proposed bounds under observed confounding, we conducted a simulation study with $N=2,000$ samples over 100 replications. The covariates were generated as $\mathbf{X} \sim \mathcal{N}(0, I_2)$ and treatment assignment followed a logistic model $P(A=1|\mathbf{X}) = \sigma(0.5 X_1 - 0.3 X_2 + \epsilon_S), \quad \epsilon_S \sim \mathcal{N}(0, 0.1^2)$ where $\sigma(\cdot)$ denotes the sigmoid function. 
We designed two different data generating processes (DGPs): a Linear SCM where 
$$\begin{aligned}
Y(0) &= X_1 + 0.5 X_2 + \epsilon_Y \\
Y(1) &= Y(0) + 1.0,
\end{aligned}$$
and a Nonlinear setting
$$\begin{aligned}
Y(0) &= \sin(2 X_1) + X_2^2 + \epsilon_Y \\
Y(1) &= \cos(2 X_1) + X_2^2 + 0.5 + \epsilon_Y,
\end{aligned}$$
where $\epsilon_Y \sim \mathcal{N}(0, 1)$.

First, to demonstrate the gain in identification power (tightness), we compared the width of the standard marginal bounds (Makarov bounds) against the proposed conditional bounds calculated using the true oracle distributions. The width reduction is defined as the difference between the marginal width $\mathcal{W}_{\text{marg}} = \min(F_1, F_0) - \max(F_1+F_0-1, 0)$ and the expected conditional width $\mathcal{W}_{\text{cond}} = \mathbb{E}_X[\min(F_{1|X}, F_{0|X}) - \max(F_{1|X}+F_{0|X}-1, 0)]$. We can observe that covariates-assisted conditional copulas lead to a significant tightening of bounds (Fig.\ref{fig:bound}a).

Second, we evaluated the finite-sample performance of the proposed estimators. We implemented the Doubly Robust (DR-Direct) estimator and the DR-Smooth estimator (with varying smoothing parameters) using 5-fold cross-fitting to minimize overfitting. We assessed Bias and Mean Squared Error (MSE$=$bias$^2+$SE$^2$) in each replicate (Fig.\ref{fig:bound}b). Simulation results demonstrate the clear superiority of the smooth approximation over the direct estimator, provided the smoothing parameter $t$ is sufficiently large. 

\begin{figure}[!htbp]
    \centering
  \begin{subfigure}{0.35\textwidth}
    \centering
    \includegraphics[width=\linewidth]{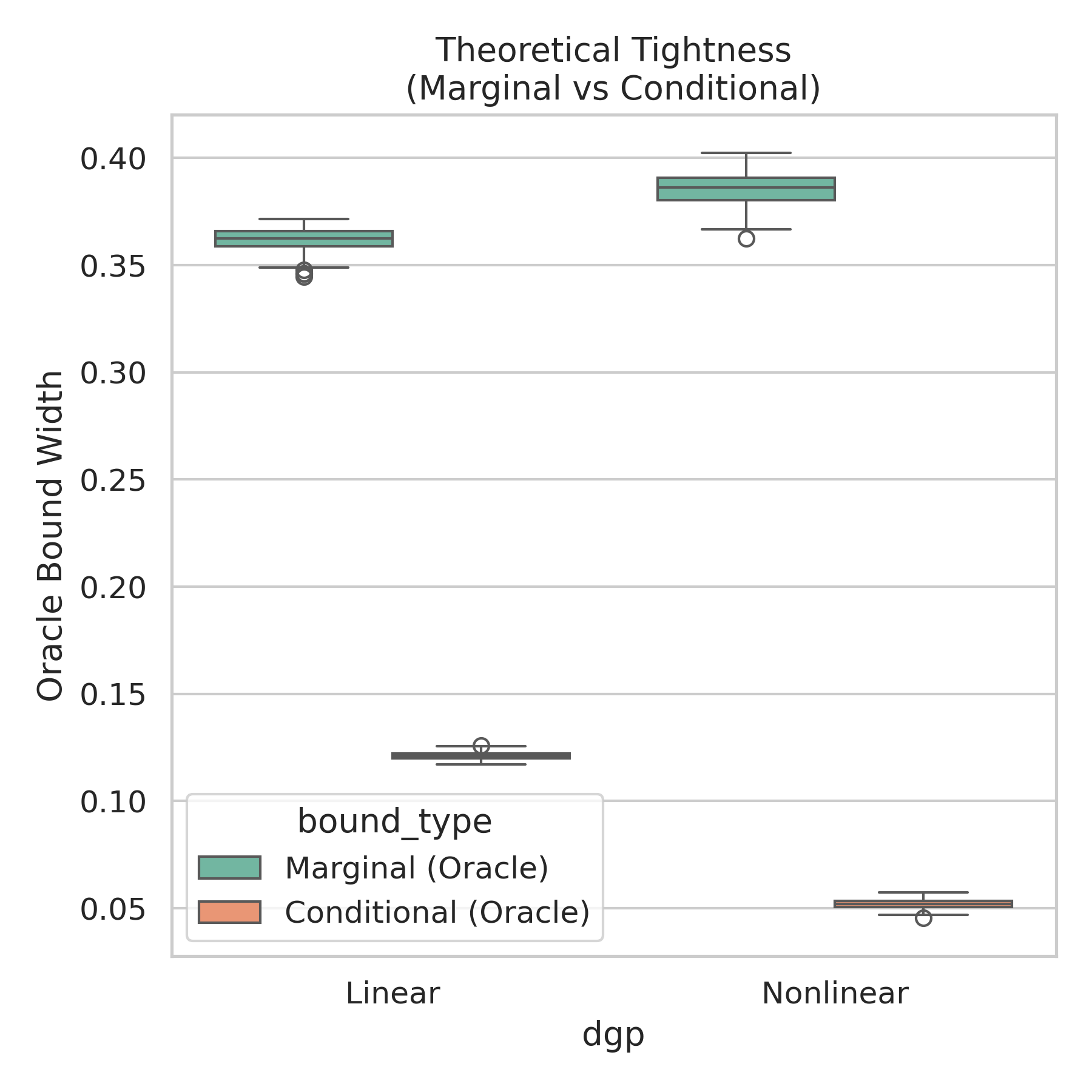}
    \caption{Bound width gained via marginal copulas and conditional copulas.}
  \end{subfigure}
  \hfill
  \begin{subfigure}{0.6\textwidth}
    \centering
    \includegraphics[width=\linewidth]{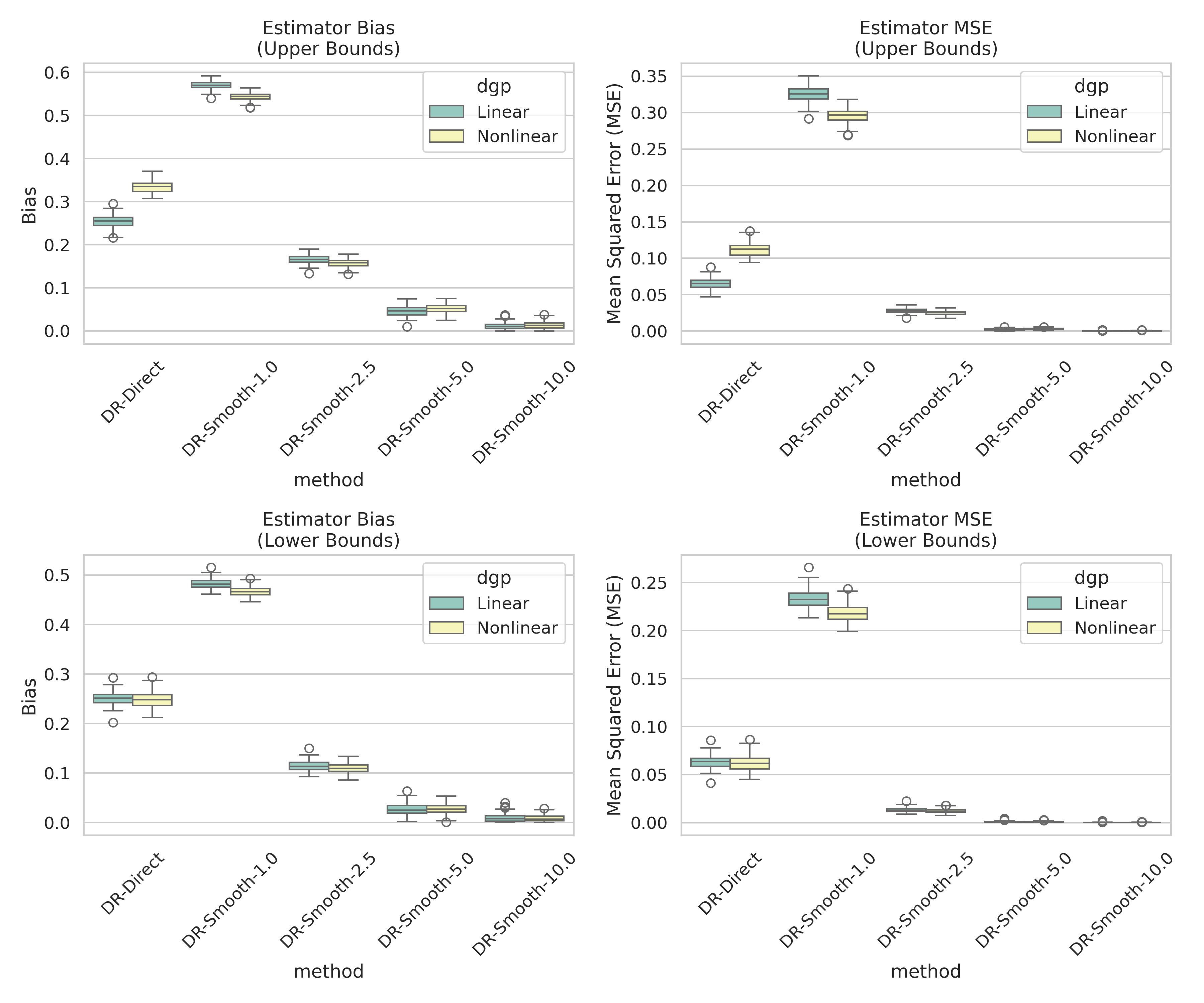}
    \caption{Boxplots of bias and MSE of estimators in each replicate.}
  \end{subfigure}

  \caption{Estimations on the bounds of joint distribution of potential outcomes.}
  \label{fig:bound}
\end{figure}

\subsection{Simulations on marginal estimation with triple matching learning approach}
To evaluate the efficacy of the proposed representation learning-based IV estimator, we conducted simulations iwith unobserved confounding. The DGP introduces a valid instrument $S \sim \mathcal{U}(-2, 2)$ and a latent confounder $Z_C \sim \mathcal{N}(0, 1)$, ensuring the crucial independence assumption $S \perp Z_C$. The treatment is generated with sufficient variability and nonlinear dependence on the instrument
$$Z_S = 0.5 S + 0.1 \tanh(S) + 0.3 \sigma(2 S) + 0.2 Z_C + \epsilon_S$$
where $\sigma(\cdot)$ is the sigmoid function and $\epsilon_S \sim \mathcal{N}(0, 0.2^2)$. 
We considered two treatment regimes: a binary treatment where $A \sim \text{Bernoulli}(\sigma(Z_S))$, and a continuous treatment where $A = Z_S$. The outcome $Y$ was generated under two distinct scenarios: In the linear setting, the outcome is a simple additive function:
$$Y_{\text{lin}} = 1 + 2A + 3Z_C + \epsilon_Y$$
In the nonlinear setting, the outcome includes sinusoidal effects, sigmoid transformations, and treatment-confounder interactions ($A \times Z_C$) to violate standard linear IV assumptions:
$$Y_{\text{nonlin}} = 1 + \left(0.3A + 0.2\sin(2A + 0.5)\right) + \left(0.3Z_C + 2\sigma(Z_C)\right) + 0.2 A Z_C + \epsilon_Y$$
We simulated $N=6000$ samples and employed a triple-split cross-fitting strategy ($K=6$ folds) to separate representation learning (VAE training), nuisance parameter estimation, and final estimation (can apply different estimators). 

\subsubsection{Representation learning quality}
We first demonstrated the quality of latent confounding representation learning. We illustrated the correlation between true $Z_C$ and learned $\hat Z_C$, and the dependence between learned $\hat Z_C$ and instrument $S$ (Fig.\ref{fig:rep}). As $Z_C$ is identifiable only up to an invertible transformation, both strong positive and strong negative correlations between $Z_C$ and $\hat Z_C$ provide evidence of successful representation learning.

\begin{figure}[!htbp]
    \centering
  \begin{subfigure}{0.49\textwidth}
    \centering
    \includegraphics[width=\linewidth]{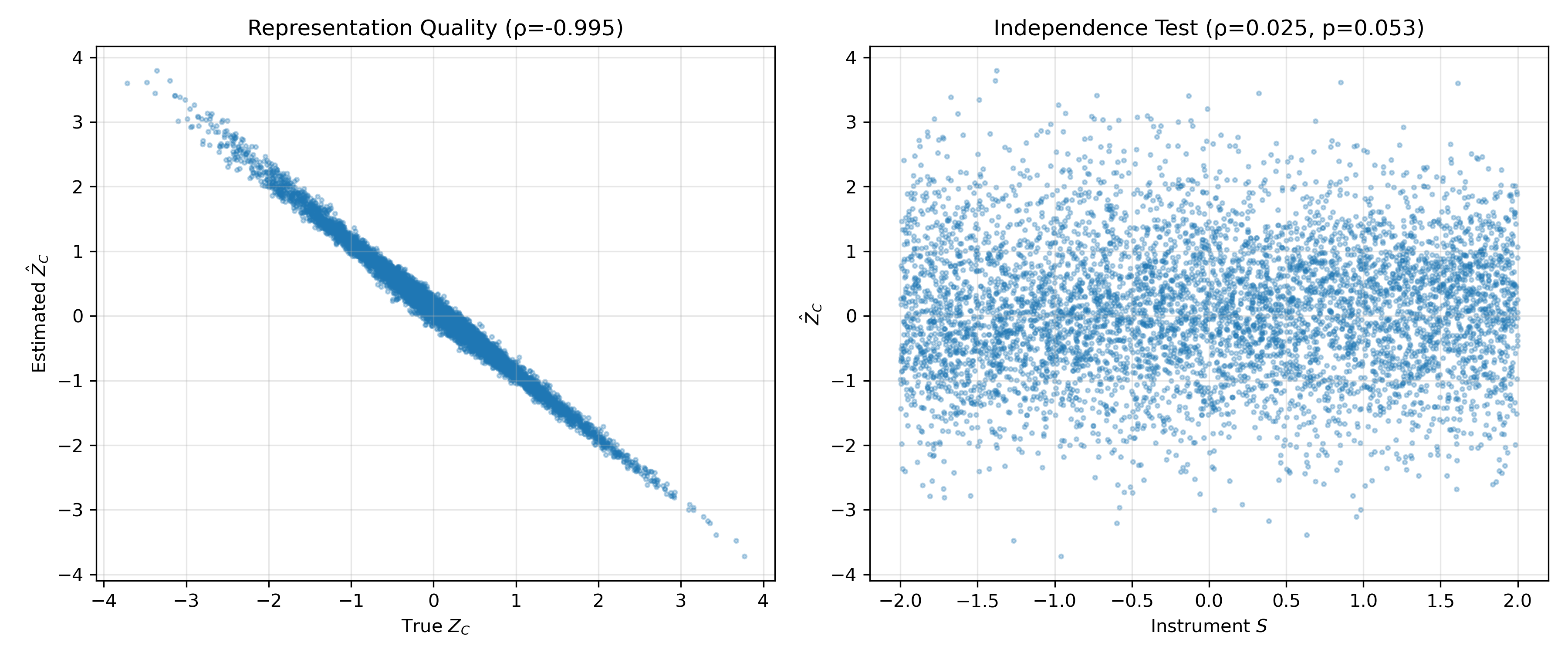}
    \caption{Linear DGP.}
  \end{subfigure}
  \hfill
  \begin{subfigure}{0.49\textwidth}
    \centering
    \includegraphics[width=\linewidth]{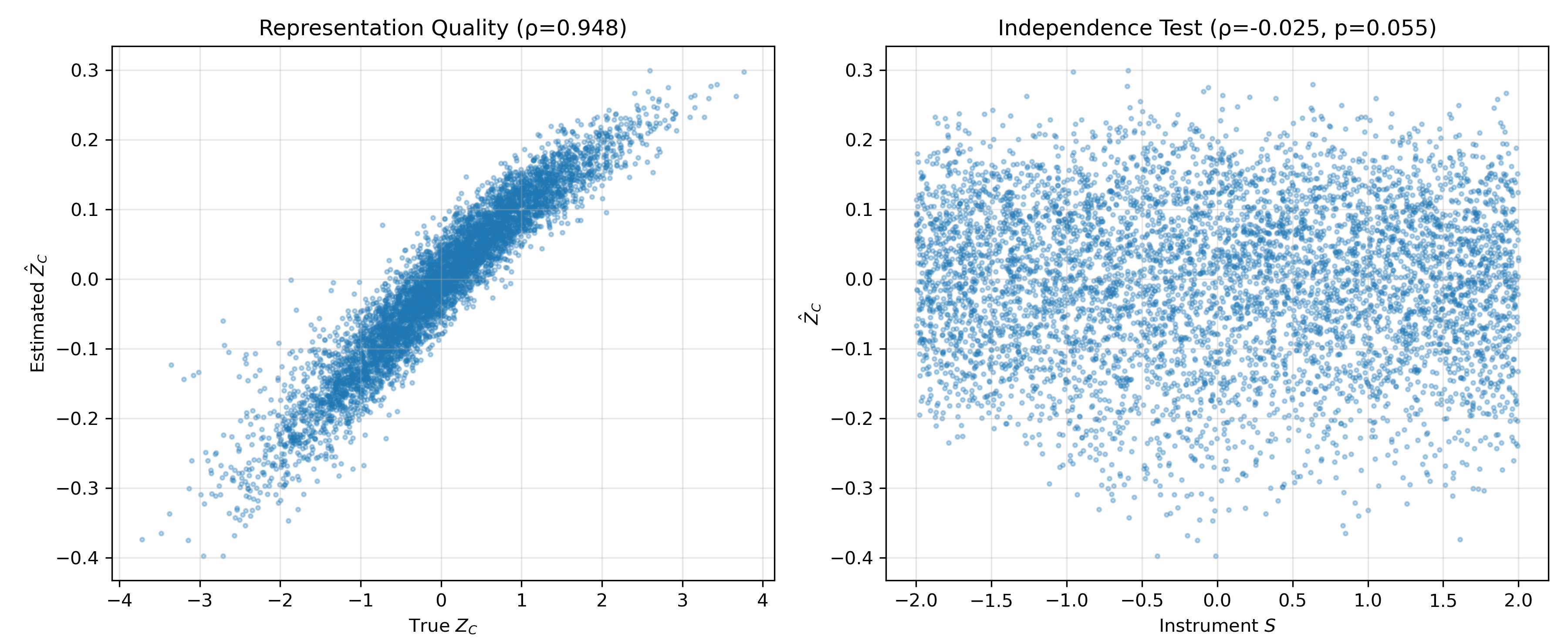}
    \caption{Nonlinear DGP.}
  \end{subfigure}

  \caption{Representation learning quality, demonstrated by correlation between true $Z_C$ and learned $\hat Z_C$, and the dependence between learned $\hat Z_C$ and instrument $S$.}
  \label{fig:rep}
\end{figure}

\subsubsection{Benchmarking ATE estimation}
Within the triple machine learning (TML, i.e., IV-based representation learning) framework, we can implement different estimators in the final estimation fold (including outcome regression, IPW, and double robust estimators). We apply weight clipping to both the IPW and doubly robust estimators to ensure numerical stability. We benchmarked the ATE estimation ($A=1$ vs $A=0$). We evaluated different variants of TML estimators, against 2SLS baselines over 100 replications, using bias and MSE (bias$^2+$SE$^2$) in each replication as the evaluation metrics, shown in Fig.\ref{fig:ATE}. We can find for continuous treatments, different estimators within the IV-based representation learning framework all significantly outperform 2SLS baseline. Even in settings with a binary treatment and linear effects, the performance of our nonparametric method is comparable to that of parametric approaches like 2SLS.
\begin{figure}[!htbp]
    \centering
  \begin{subfigure}{0.49\textwidth}
    \centering
    \includegraphics[width=\linewidth]{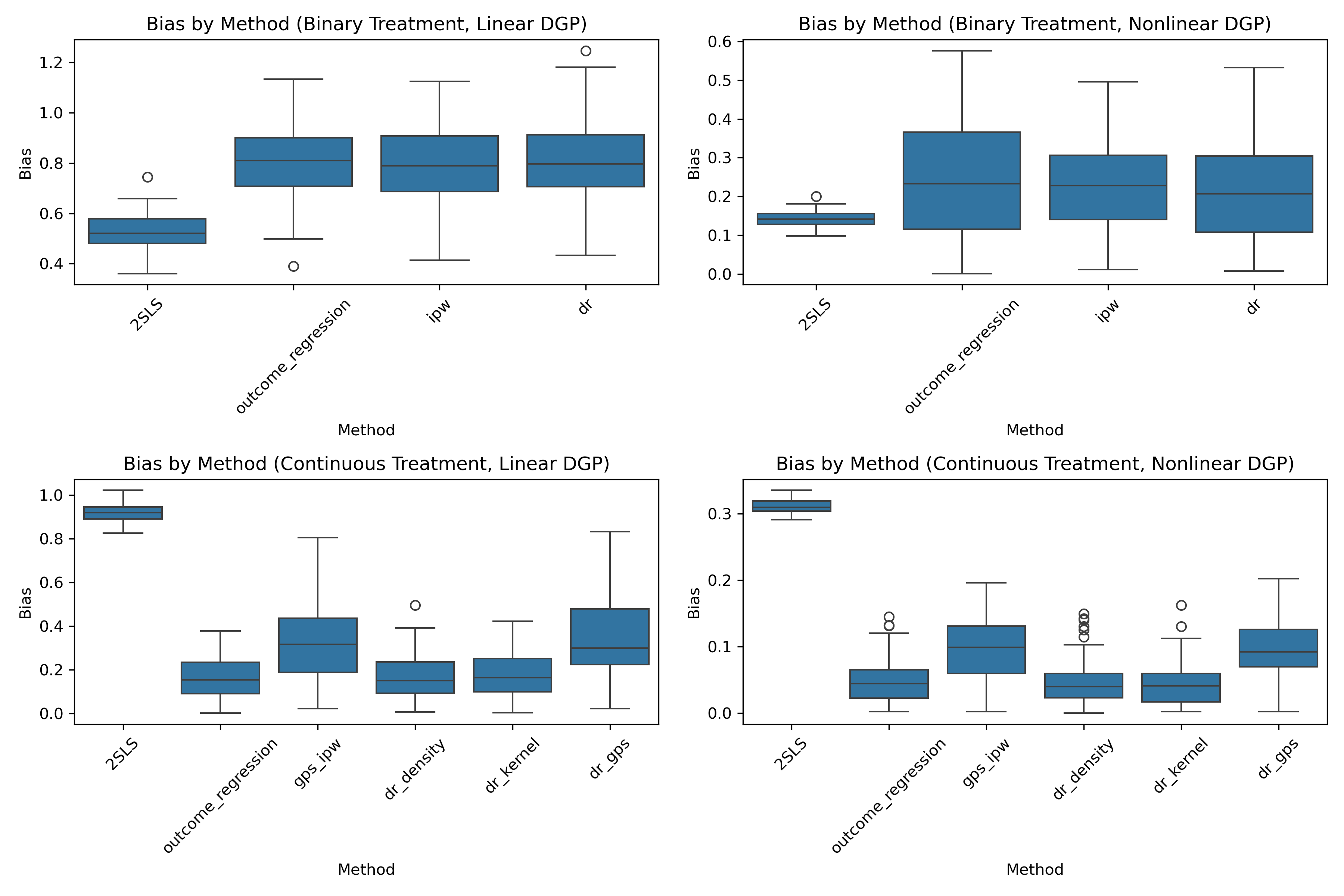}
    \caption{Boxplots of bias in each replicate of different estimators under different DGPs}
  \end{subfigure}
  \hfill
  \begin{subfigure}{0.49\textwidth}
    \centering
    \includegraphics[width=\linewidth]{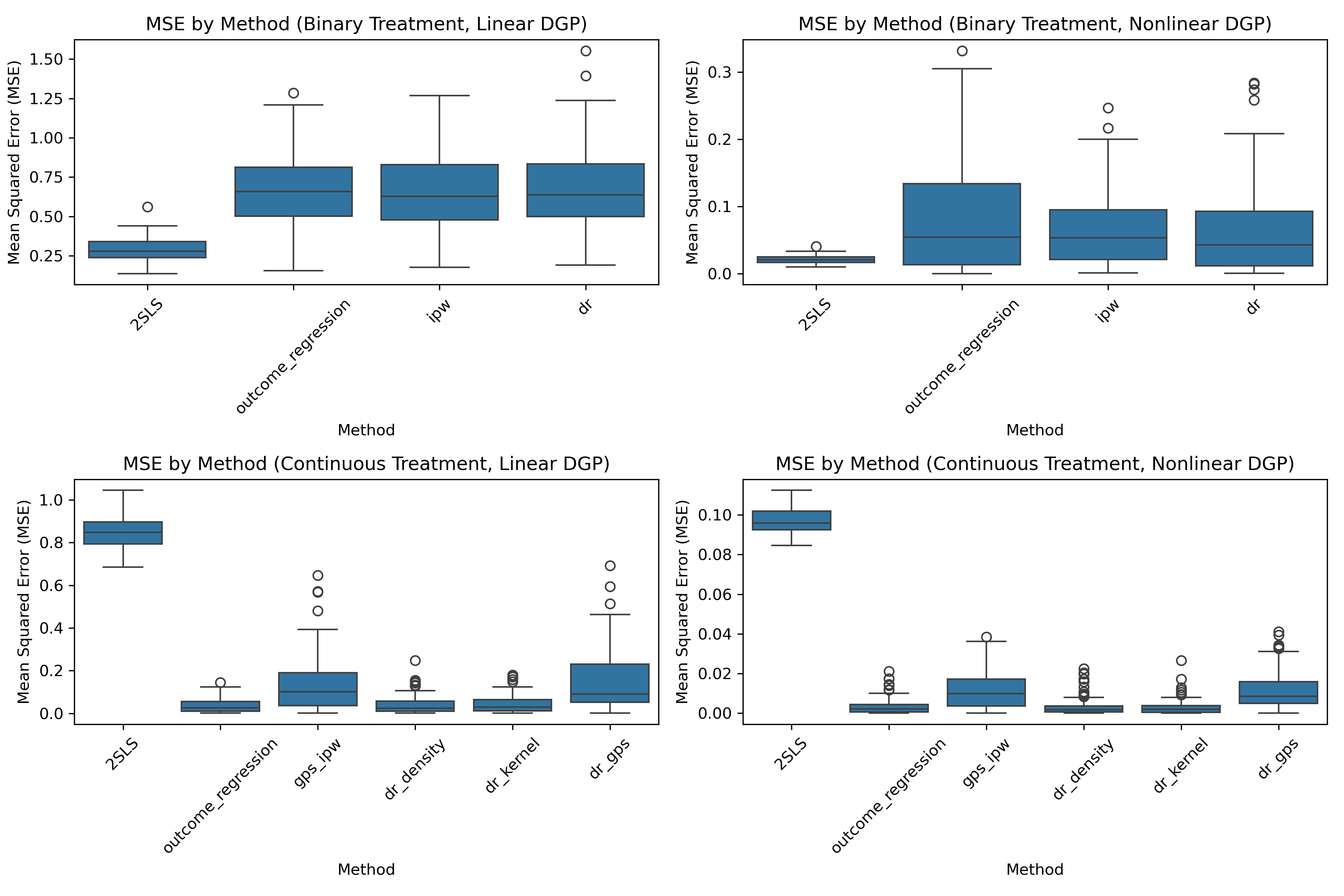}
    \caption{Boxplots of MSE in each replicate of different estimators under different DGPs}
  \end{subfigure}

  \caption{Simulation results of ATE estimation with unmeasured confounding. We evaluate 2SLS baseline with different estimators (outcome regression, IPW, and double robust) within triple machine learning (TML) framework.}
  \label{fig:ATE}
\end{figure}

\subsubsection{Dose-response curve estimation}
To evaluate the estimator's capability in recovering the full structural relationship between the treatment and the outcome, we conducted a dose-response curve estimation experiment with continuous treatment regime. Shown in Fig.\ref{fig:dose}a, TML can not only provide accurate estimates of the average treatment effect, but also recover the dose–response function $E[Y(a)]$ under continuous treatment assignment, with particularly strong performance observed for the kernel-based doubly robust estimator, especially in the treatment region with high observation density (Fig.\ref{fig:dose}b).

\begin{figure}[!htbp]
  \centering
  \begin{subfigure}{0.7\textwidth}
    \centering
    \includegraphics[width=\linewidth]{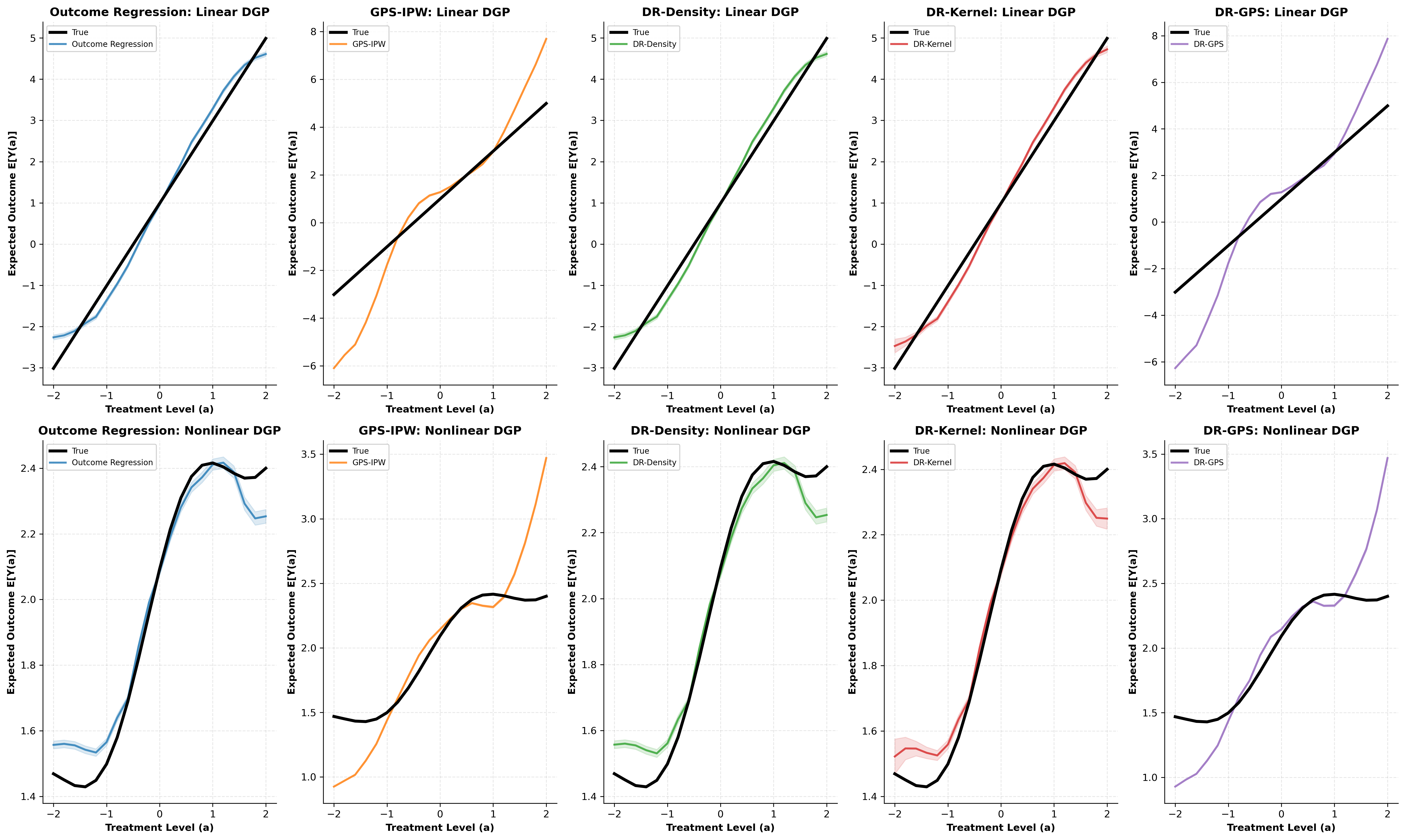}
    \caption{Dose response ($\E[Y(a)]$) curves estimated via different TML estimators.}
  \end{subfigure}
  \hfill
  \begin{subfigure}{0.25\textwidth}
    \centering
    \begin{subfigure}{\linewidth}
      \centering
      \includegraphics[width=\linewidth]{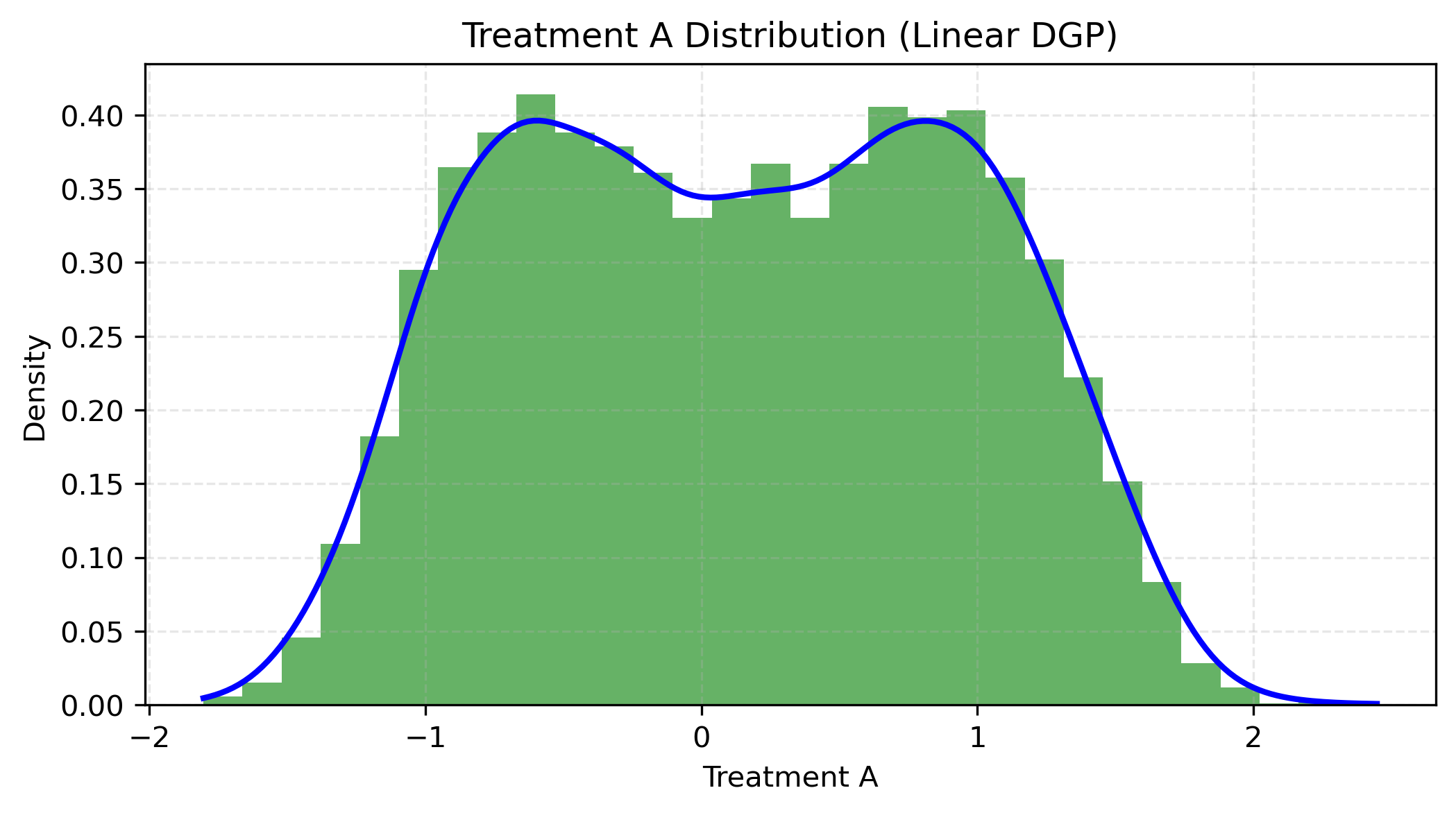}
    \end{subfigure}

    \vspace{0.6em}

    \begin{subfigure}{\linewidth}
      \centering
      \includegraphics[width=\linewidth]{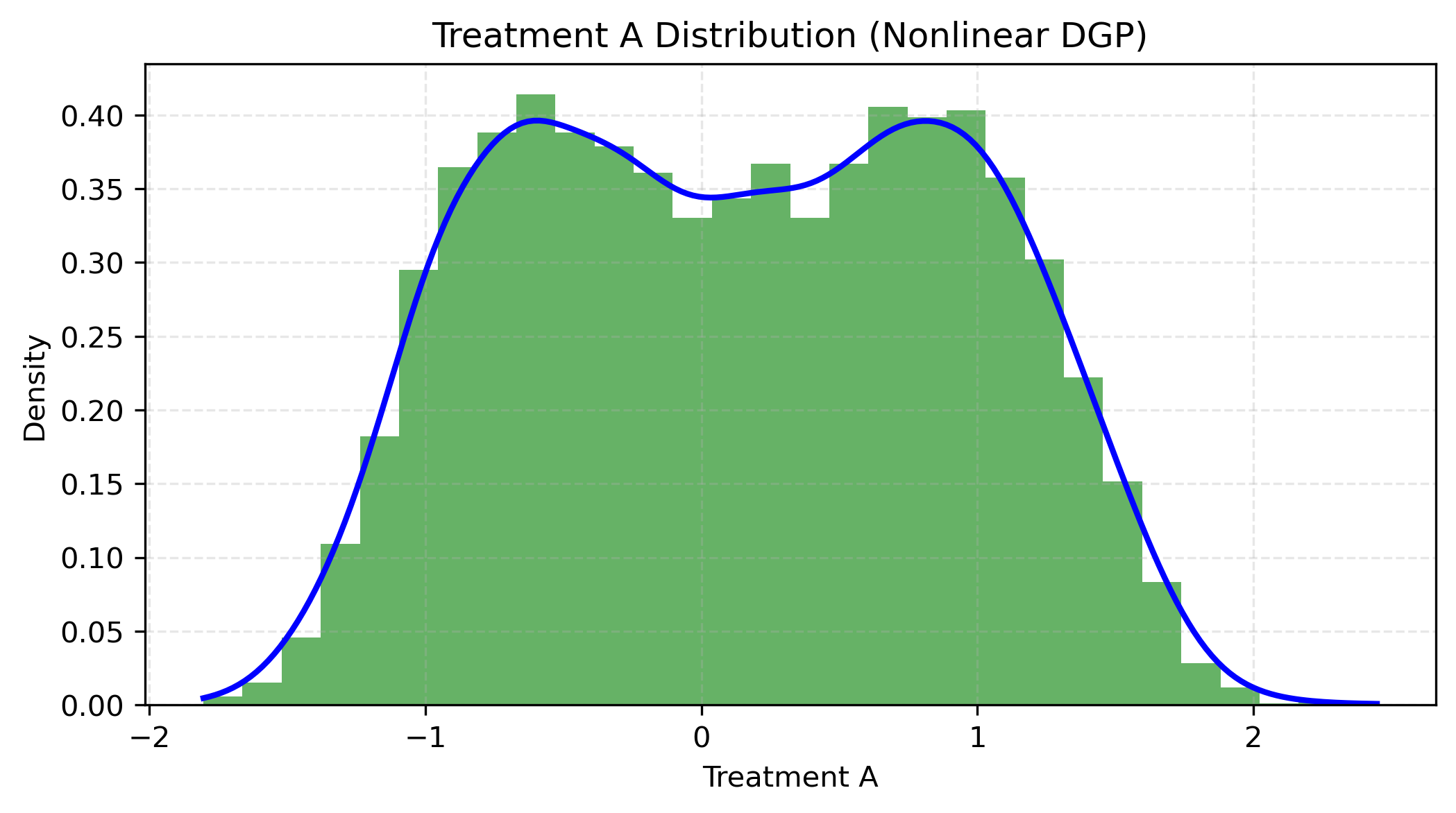}
    \end{subfigure}
  \caption{Observed treatment distribution in simulation}
  \end{subfigure}

  \caption{Dose-response curve estimation}
  \label{fig:dose}
\end{figure}

\section{Real-world analysis}\label{sec:app}
We illustrate the proposed methodology by analyzing the demand for cigarettes using the \texttt{CigarettesSW} dataset from the R package \texttt{AER} \cite{kleiber2020package}. This dataset comprises panel data for 48 U.S. states over the period 1985--1995. The primary objective is to estimate the price elasticity of demand while accounting for unobserved heterogeneity in state-level consumption behaviors.

Let $Y$ denote the logarithm of cigarette consumption (packs per capita) and $A$ be the logarithm of the real price. The relationship between price and consumption is confounded by unobserved state-specific factors, $Z_C$, such as regional health awareness and cultural habits. To address this endogeneity, we employ the logarithm of the average excise tax as the instrument $S$, which satisfies the strong relevance condition through its direct impact on price and is plausibly exogenous to individual preferences.

\begin{figure}[!hbtp]
    \centering
    \includegraphics[width=\linewidth]{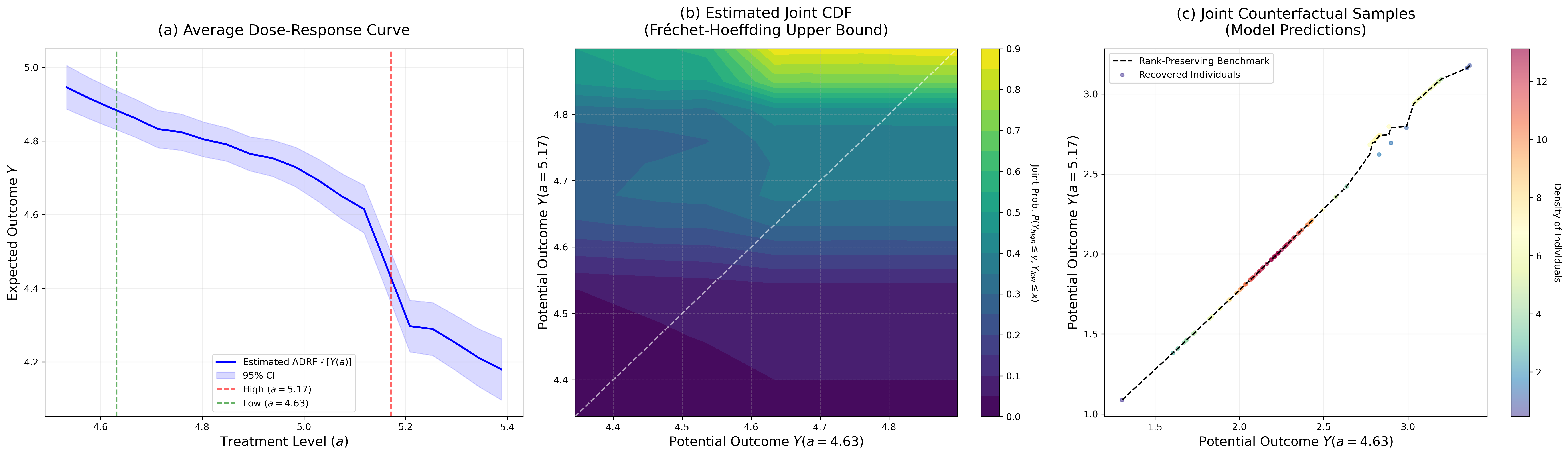}
    \caption{Causal analysis of cigarette demand. 
(a) Estimated Average Dose-Response Function with 95\% pointwise confidence intervals. The convex shape indicates that price elasticity increases (becomes more negative) at higher price levels. 
(b) The joint CDF surface estimated via the proposed smooth estimator, targeting the FH upper bound. The concentration of probability mass along the diagonal visualizes the theoretical limit of rank preservation. 
(c) Scatter plot of individual counterfactual pairs $(\hat{Y}(a_{\mathrm{low}}), \hat{Y}(a_{\mathrm{high}}))$ reconstructed by the fitted outcome model for $a_{\mathrm{low}}=4.63$ and $a_{\mathrm{high}}=5.17$. By fixing the learned latent heterogeneity $\hat{z}_{c,i}$ for each state, the strict alignment with the identity benchmark (dashed line) demonstrates that the treatment induces a homogeneous structural shift across the population.}
\label{fig:app-result}
\end{figure}

We apply the proposed TML framework to recover the latent confounding structure $Z_C$ and estimate the causal mechanism. Figure~\ref{fig:app-result} summarizes the estimated average and distributional effects based on the learned representation. Panel (a) presents the estimated Average Dose-Response Function, $\mathbb{E}[Y(a)]$, across the observed support of log prices. The curve exhibits a monotonic decrease, consistent with the fundamental law of demand. The narrow 95\% pointwise confidence intervals indicate precise estimation of the mean effect. Notably, the response exhibits non-linearity: the slope steepens for prices above $a > 5.1$, suggesting an increasing price elasticity in the upper price range. This implies that consumption becomes more sensitive to marginal price increases when prices are already elevated.

Beyond the average effect, Panels (b) and (c) examine the joint dependence structure of potential outcomes. Panel (b) visualizes the joint CDF estimated via the proposed smooth estimator, specifically targeting the FH upper bound $P(Y(a_{\mathrm{high}}) \leq y_1, Y(a_{\mathrm{low}}) \leq y_0)$. This estimation reflects the limit of perfect rank preservation. Panel (c) displays the empirical counterpart at the individual level. Since the true counterfactuals are unobserved, we reconstruct them using the fitted outcome model $\hat{Y}(a, z_c)$. For each state $i$, we fix its learned latent representation $\hat{z}_{c,i}$ and compute the predicted potential outcomes under two quartile values of observed price, $a_{\mathrm{low}}=4.63$ and $a_{\mathrm{high}}=5.17$ using the outcome model we trained. The resulting scatter plot of these predicted pairs $(\hat{Y}(a_{\mathrm{low}}, \hat{z}_{c,i}), \hat{Y}(a_{\mathrm{high}}, \hat{z}_{c,i}))$ aligns strictly with the rank-preserving benchmark (dashed line), confirming that the latent confounder $Z_C$ induces a comonotonic dependence structure.

Crucially, the linearity observed in Panel (c) on the logarithmic scale implies a homogeneous relative treatment effect. While the latent confounder $Z_C$ captures significant heterogeneity in baseline consumption levels (intercepts), the treatment effect manifests as a uniform structural shift across the distribution. This suggests that price interventions dampen consumption proportionally for both heavy and light smokers, preserving the rank ordering of states by consumption intensity.

\section{Discussion}\label{sec:discuss}
This paper has developed a unified framework for the identification and estimation of counterfactual distributions, addressing two fundamental challenges: the bounding of joint distributions under observed confounding and the recovery of marginal distributions under unobserved confounding. By bridging copula theory with causal representation learning, we provide a pathway from identifying population-level effects to bounding individual-level counterfactual dependencies.

Our analysis of conditional copulas underscores the informative value of covariates in tightening the FH bounds. The upper bound, corresponding to conditional rank preservation, serves as a particularly pragmatic benchmark for individualized treatment effect estimation. We proposed two estimation strategies, the direct estimator under margin conditions and the smooth log-sum-exp approximation, to handle the non-differentiable influence function and highlight a fundamental bias-variance trade-off.

In the context of unmeasured confounding, our integration of instrumental variables with causal representation learning offers a nonparametric alternative to classical SEMs. Unlike traditional IV methods that focus on local average treatment effects or rely on linearity, our approach leverages the instrument to identify the latent confounding subspace itself. The triple machine learning procedure extends the double machine learning paradigm to account for the representation learning stage. A key theoretical finding of this work is the characterization of the variance inflation induced by the estimated representation. As shown in Theorem \ref{thm:dr-rliv-summary}, unless the representation learner achieves super-convergence rates, a formidable requirement for deep neural networks standard errors must be corrected to account for the first-stage estimation uncertainty.

Several limitations of the current framework warrant further investigation. First, the identification of the latent confounding space relies on injectivity and completeness conditions, and relaxing these conditions to allow for partial identification of the latent space would be a valuable extension. Meanwhile, while we propose a HSIC regularized VAE-based algorithm with identifiability guarantees, finite sample size, constrained function approximators, and the non-convex nature of such optimization problems still poses challenges in real-world estimation \cite{hyvarinen2023nonlinear}. 

Furthermore, a practical challenge arises in the inference stage regarding the representation learning correction. The variance of the proposed estimator is inflated by the estimation error of the representation $\widehat{Z}_C$. However, quantifying this inflation explicitly requires the gradients of the invertible mapping $\psi$ linking the learned and true representations, which is unfortunately analytically unknown and implicitly defined by the neural network training dynamics, making estimating the correction term $\mathrm{IF}_{\phi, corr}$ non-trivial. Relying on uncorrected variance estimators may lead an inflation of Type I error rates in hypothesis testing. Developing feasible methods, such as Hessian approximation \cite{koh2017understanding} or ensemble-based quantification \cite{wager2018estimation}, to validly capture this uncertainty remains an open and critical area for future research \cite{gawlikowski2023survey}.

In summary, this work bridges the gap between classical semiparametric theory and modern representation learning. By formally characterizing the identification limits of latent confounding through instrumental variables and establishing the asymptotic distribution of the ``triple machine learning'' estimator, we provide a principled alternative to \textit{ad hoc} deep causal estimation. Simultaneously, our copula-based bounds refine the understanding of treatment effect heterogeneity beyond marginal aggregates, offering a rigorous tool for assessing individual-level risks. Collectively, these contributions lay the groundwork for a more robust statistical foundation for causal inference with high-dimensional, unstructured data.

\section*{Acknowledgment}
We gratefully acknowledge Prof. Edward Kennedy, JungHo Lee, and Ignavier Ng for their valuable comments and suggestions on the manuscript.

\bibliographystyle{agsm}

\bibliography{refs}

\newpage
\appendix
\section*{Technical Appendix}\label{sec:append}
\input{appendix.tex}

\end{document}

%% file: appendix.tex
\section{Proof of Theorem \ref{thm:bound}}\label{proof:bound}
\begin{proof}
\noindent\textbf{1) Proof for the Upper Bound ($U \le U_\text{marg}$)}
\nopagebreak

Define the function $g: \mathbb{R}^2 \to \mathbb{R}$ as $g(a, b) = \min(a, b)$. The function $g$ is the pointwise minimum of two linear (and thus concave) functions, $f_1(a,b)=a$ and $f_2(a,b)=b$. Therefore, $g$ is a \textbf{concave function}.

By Jensen's Inequality for concave functions, for any random vector $Y$, we have $\E[g(Y)] \le g(\E[Y])$.
Let the random vector be $(\theta_1(X), \theta_0(X))$. Applying the inequality:
\[
\E_X\left[ g(\theta_1(X), \theta_0(X)) \right] \le g\left( \E_X[\theta_1(X)], \E_X[\theta_0(X)] \right)
\]
Substituting the definitions of $g$, $\theta_a$, $U$, and $U_\text{marg}$:
\[
\underbrace{\E_X\bigl[\min\{\theta_1(X), \theta_0(X)\}\bigr]}_{U(y_1, y_0)} \le \min\bigl\{\E_X[\theta_1(X)], \, \E_X[\theta_0(X)]\bigr\}
\]
Since $F_{Y(a)}(y_a) = \E_X[\theta_a(X)]$, the right-hand side is
\[
\min\bigl\{F_{Y(1)}(y_1), \, F_{Y(0)}(y_0)\bigr\} = U_\text{marg}(y_1, y_0)
\]
Thus, we have shown that $U(y_1, y_0) \le U_\text{marg}(y_1, y_0)$.

\medskip
\noindent\textbf{2) Proof for the Lower Bound ($L_\text{marg} \le L$)}
\nopagebreak

Define the function $h: \mathbb{R}^2 \to \mathbb{R}$ as $h(a, b) = \max(a + b - 1, 0)$. The function $h$ is the pointwise maximum of two linear (and thus convex) functions, $f_1(a,b)=a+b-1$ and $f_2(a,b)=0$. Therefore, $h$ is a \textbf{convex function}.

By Jensen's Inequality for convex functions, for any random vector $Y$, we have $\E[h(Y)] \ge h(\E[Y])$.
Let the random vector again be $(\theta_1(X), \theta_0(X))$. Applying the inequality:
\[
\E_X\left[ h(\theta_1(X), \theta_0(X)) \right] \ge h\left( \E_X[\theta_1(X)], \E_X[\theta_0(X)] \right)
\]
Substituting the definitions of $h$ and $\theta_a$
\[
\underbrace{\E_X\bigl[\max\{\theta_1(X) + \theta_0(X) - 1, \, 0\}\bigr]}_{L(y_1, y_0)} \ge \max\left\{ \E_X[\theta_1(X)] + \E_X[\theta_0(X)] - 1, \, 0 \right\}
\]
(We used the linearity of expectation inside the $\max$ function on the right-hand side).
Since $F_{Y(a)}(y_a) = \E_X[\theta_a(X)]$, the right-hand side is
\[
\max\left\{ F_{Y(1)}(y_1) + F_{Y(0)}(y_0) - 1, \, 0 \right\} = L_\text{marg}(y_1, y_0)
\]
Thus, we have shown that $L(y_1, y_0) \ge L_\text{marg}(y_1, y_0)$.

\medskip
We have established $L_\text{marg} \le L$ and $U \le U_\text{marg}$, which proves that the interval $\covbounds$ is a (potentially tighter) sub-interval of $\margbounds$. Equality holds if and only if the conditional CDFs $\theta_a(X)$ are constant almost surely with respect to $X$, implying that the covariates $X$ provide no additional information beyond the marginals.
\end{proof}

\section{Proof on efficient estimator of conditional FH bound}
We first derive the IF for the conditional CDF functional $\theta_a(\cdot)$, then use chain rule to obtain EIF for smooth functional $\phi$, and finally handle the nonsmooth $\min$ function using (i) smoothing and (ii) direct subgradient under margin condition.
\subsection{Influence function components under regular condition}
\subsubsection{EIF for $\theta_a(\cdot)$ and for $\E[\theta_a(X)]$}

Fix $y\in\mathbb{R}$ and $a\in\{0,1\}$. Define the functional on the full
distribution $P$
\[
\Theta_a(P)(x) := \theta_a(x) = \P(Y\le y\mid A=a,X=x).
\]
We are interested in the scalar functional $\Psi_a(P):=\E_X[\theta_a(X)]$.
To obtain the pathwise derivative (influence function) we follow the standard
parametric-submodel / Gateaux-derivative route.

Let $\{P_\varepsilon:\varepsilon\in(-\varepsilon_0,\varepsilon_0)\}$ be any
regular parametric submodel with score $s=o(1)$ at $\varepsilon=0$, i.e.
$dP_\varepsilon/d\P = 1+\varepsilon s + o(\varepsilon)$ with $\E_P[s]=0$.
Write $\theta_{a,\varepsilon}(x)$ for the conditional CDF under $P_\varepsilon$.
We compute
\[
\frac{d}{d\varepsilon}\Big|_{\varepsilon=0}\Psi_a(P_\varepsilon)
= \frac{d}{d\varepsilon}\Big|_{\varepsilon=0}\E_{P_\varepsilon}\big[\theta_{a,\varepsilon}(X)\big].
\]
Differentiate using product rule
\[
\frac{d}{d\varepsilon}\E_{P_\varepsilon}[\theta_{a,\varepsilon}(X)]
=
\E_P\big[\dot\theta_a(X)\big] + \E_P\big[\theta_a(X) s(O)\big],
\]
where $\dot\theta_a(x):=\frac{d}{d\varepsilon}\big|_{\varepsilon=0}\theta_{a,\varepsilon}(x)$.
We now compute $\dot\theta_a(x)$ by differentiating the conditional CDF
\[
\theta_{a,\varepsilon}(x) = P_\varepsilon(Y\le y\mid A=a,X=x)
= \frac{\E_{P_\varepsilon}\big[\1\{A=a\}\1\{Y\le y\}\mid X=x\big]}{P_\varepsilon(A=a\mid X=x)}.
\]
Differentiate at $\varepsilon=0$; denote $\pi_a(x)=\P(A=a\mid X=x)$. Using
quotient rule we obtain
\begin{equation}\label{eq:theta_dot}
\dot\theta_a(x)
= \frac{1}{\pi_a(x)}\E_P\big[(\1\{A=a\}\1\{Y\le y\}) s(O) \mid X=x\big]
- \frac{\theta_a(x)}{\pi_a(x)}\E_P\big[\1\{A=a\} s(O)\mid X=x\big].
\end{equation}
Multiply both sides by the marginal density of $X$ and integrate to get the
Gateaux derivative of $\Psi_a$. Rearranging and using standard score
calculations yields that the influence function for $\Psi_a$ is
\begin{equation}\label{eq:IF_theta_a_appendix}
\IF_a(\Psi_a(P)) \;=\; \theta_a(X)-\E[\theta_a(X)] \;+\;
\frac{\1\{A=a\}}{\pi_a(X)}\big(\1\{Y\le y\}-\theta_a(X)\big).
\end{equation}
We can check $\E[\IF_a(\Psi_a(P))]=0$ and that the pathwise derivative equals
$\E[s(O)\IF_a(\Psi_a(P))]$ for all scores $s$, which confirms \eqref{eq:IF_theta_a_appendix}
is the canonical gradient for $\Psi_a$.

\subsubsection{Chain rule: EIF for smooth $\phi$}

Let $\phi:\mathbb{R}^2\to\mathbb{R}$ be continuously differentiable. Define
\[
\Psi(P)=\E_X\big[\phi\big(\theta_0(X),\theta_1(X)\big)\big].
\]
Consider a parametric submodel $P_\varepsilon$ with score $s$ and let
$\theta_{a,\varepsilon}(x)$ be the conditional CDF under $P_\varepsilon$.
Differentiate
\[
\frac{d}{d\varepsilon}\Big|_{\varepsilon=0}\Psi(P_\varepsilon)
=
\E_P\Big[ \sum_{a=0}^1 \partial_a\phi(\theta_0(X),\theta_1(X)) \dot\theta_a(X) \Big]
+ \E_P\big[ \phi(\theta_0,\theta_1) s(O)\big].
\]
Using \eqref{eq:IF_theta_a_appendix} which gives the derivative of
$\E[\theta_a(X)]$, we can rewrite the above as
\[
\frac{d}{d\varepsilon}\Big|_{\varepsilon=0}\Psi(P_\varepsilon)
= \E_P\big[ s(O)\cdot \IF(\Psi(P))\big],
\]
with
\begin{equation}\label{eq:IF_smooth_appendix}
\IF(\Psi(P)) \;=\; \sum_{a=0}^1 \partial_a\phi(\theta_0(X),\theta_1(X))\frac{\1\{A=a\}}{\pi_a(X)}(\1\{Y\le y\}-\theta_a(X))+\phi(\theta_0(X),\theta_1(X))-\Psi(P).
\end{equation}
Again we check $\E_P[\IF(\Psi(P))]=0$, and for any score $s$,
$\frac{d}{d\varepsilon}\Psi(P_\varepsilon)|_{\varepsilon=0}=\int(\IF\cdot s)\,dP$, verifying \eqref{eq:IF_smooth_appendix}
is the canonical gradient.

We now treat $\phi(u,v)=\min\{u,v\}$. Since $\min$ is not differentiable
on the diagonal $u=v$, we consider two strategies: Direct Estimation under a Margin Condition or approximation via a smooth function.

\subsection{Proof of asymptotic property of direct estimator under margin condition}\label{proof:direct}
We first prove the statistical properties of direct estimator (Theorem \ref{thm:direct}) by introducing a polynomial margin condition (Assumption \ref{ass:polymargin}).
\begin{proof}
We follow the standard semi-parametric argument.

\noindent\textbf{Step 1. Oracle representation and pathwise derivative (canonical gradient).}

Assume for the moment the selector \(d(x)\) is known (``oracle'' case). Then
\[
\Psi(P)=\E\big[ \1\{d(X)=0\}\theta_0(X) + \1\{d(X)=1\}\theta_1(X)\big].
\]
Let $\{P_\varepsilon:\varepsilon\}$ be a regular parametric submodel with score function $s(O)$ at $\varepsilon=0$ so that
\[
\frac{d}{d\varepsilon}\Big|_{\varepsilon=0} \log\frac{dP_\varepsilon}{dP} = s(O),\qquad \E[s(O)]=0.
\]
Differentiate $\Psi(P_\varepsilon)$ using product rule and the conditional structure.
We compute the derivative of each term. Fix \(a\in\{0,1\}\). For the term
$\E[\1\{d(X)=a\}\theta_{a,\varepsilon}(X)]$ we have
\[
\frac{d}{d\varepsilon}\Big|_{\varepsilon=0}\E_{P_\varepsilon}[\1\{d(X)=a\}\theta_{a,\varepsilon}(X)]
= \E\big[\1\{d(X)=a\}\dot\theta_a(X)\big] + \E\big[\1\{d(X)=a\}\theta_a(X)s(O)\big],
\]
where $\dot\theta_a(x)=\frac{d}{d\varepsilon}\big|_{\varepsilon=0}\theta_{a,\varepsilon}(x)$.
Using derived equation \ref{eq:theta_dot} for $\dot\theta_a(x)$
\[
\dot\theta_a(x) = \frac{1}{\pi_a(x)} \E\big[(\1\{A=a\}\1\{Y\le y\})\,s(O)\mid X=x\big]
- \frac{\theta_a(x)}{\pi_a(x)} \E\big[\1\{A=a\}\,s(O)\mid X=x\big],
\]
and integrating the pathwise derivative of $\Psi$ along the score \(s\) $\frac{d}{d\varepsilon}\Big|_{\varepsilon=0}\Psi(P_\varepsilon) = \E\big[ s(O)\, \IF_{\mathrm{oracle}}(\Psi(P;d))\big]$, we obtain the (centered) oracle influence function
\begin{equation}\label{eq:IF_oracle_detailed}
\IF_{\mathrm{oracle}}(\Psi(P;d))
= \sum_{a=0}^1 \1\{d(X)=a\}\Big( \frac{\1\{A=a\}}{\pi_a(X)}(\1\{Y\le y\}-\theta_a(X)) + \theta_a(X)-\Psi(P)\Big).
\end{equation}
We can check $\E[\IF_{\mathrm{oracle}}]=0$. Thus \eqref{eq:IF_oracle_detailed} is the canonical gradient under the oracle selector.

\medskip

\noindent\textbf{Step 2. Feasible estimator and von-Mises decomposition.}

In practice $d(x)$ is unknown; replace it by the plug-in selector in a separated independent data (by data splitting or cross-fitting)
\[
\hat d(x) = \arg\min_{a}\hat\theta_a(x).
\]
We denote the uncentered influence function corresponding to $\IF_{\mathrm{oracle}}+\Psi(P)$ as
\[
\varphi(O; P, d) =\sum_{a=0}^1 \1\{d(X)=a\}\Big( \frac{\1\{A=a\}}{\pi_a(X)}(\1\{Y\le y\}-\theta_a(X)) + \theta_a(X)\Big),
\]
plug-in evaluated at true selector
\[
\varphi(O;\hat P,d)=\sum_a \1\{d(X)=a\}\Big(
\hat\theta_a(X)
+ \tfrac{\1\{A=a\}}{\hat\pi_a(X)}(\1\{Y\le y\}-\hat\theta_a(X))\Big),
\]
and $\varphi(O;\hat P,\hat d)$ denotes the same expression with $\hat d$ in place of $d$, then define the feasible doubly-robust estimator
\begin{equation}\label{eq:hatPsi_def}
\hat\Psi = \Pn[\varphi(O;\hat P,\hat d)]
= \Pn\Big[\sum_{a=0}^1 \1\{\hat d(X)=a\} \Big(
\hat\theta_a(X) + \frac{\1\{A=a\}}{\hat\pi_a(X)}(\1\{Y\le y\}-\hat\theta_a(X))\Big)\Big].
\end{equation}
We compare $\hat\Psi$ to $\Psi$ by adding and subtracting the oracle influence function
\begin{align}
    \hat\Psi - \Psi(P)
    &= \Pn[\varphi(O; \hat P, \hat d)] - \P[\varphi(O; P, d)] \\
    &= (\Pn-\P)\varphi(O; P, d) + (\Pn-\P)[\varphi(O; \hat P, \hat d)-\varphi(O; P, d)] + \P[\varphi(O; \hat P, \hat d)-\varphi(O; P, d)] , \nonumber\\
    &=S+R_1+R_2,\nonumber
\end{align}
We bound $R_2$ by separating nuisance and selector errors
\begin{align*}
\P\left[\varphi(O;\hat P,\hat d)-\varphi(O;P,d)\right]
&= \underbrace{\P\left[\varphi(O;\hat P,d)-\varphi(O;P,d)\right]}_{\text{nuisance error } B_\text{nuis}}
+ \underbrace{\P\left[\varphi(O;\hat P,\hat d)-\varphi(O;\hat P,d)\right]}_{\text{selector error } B_\text{sel}}.
\end{align*}
\medskip

\noindent\textbf{Step 3. Bound for the nuisance remainder \(B_{\mathrm{nuis}}\).}
We first aim to bound $B_{\mathrm{nuis}} = \P\left[\varphi(O;\hat P,d)-\varphi(O;P,d)\right]$.
Recall that $\P[\cdot]$ denotes the expectation $\E_O[\cdot]$. We use the law of iterated expectations, $\E_O[\cdot] = \E_X\left[ \E_{A,Y|X}\left[ \cdot \mid X \right] \right]$, and first compute the inner conditional expectation.

For $a \in \{0,1\}$, let $\psi_a(X; \hat P)$ be the conditional expectation of the $a$-th component
\begin{align*}
    \psi_a(X; \hat P) &:= \E_{A,Y|X}\left[ \varphi(O;\hat P,a) \mid X \right] \\
    &= \E_{A,Y|X}\left[ \hat\theta_a(X) + \frac{\1\{A=a\}}{\hat\pi_a(X)}(\1\{Y\le y\}-\hat\theta_a(X)) \mid X \right] \\
    &= \hat\theta_a(X) + \frac{\E_{A|X}[\1\{A=a\} \mid X]}{\hat\pi_a(X)} \left( \E_{Y|A,X}[\1\{Y\le y\} \mid A=a, X] - \hat\theta_a(X) \right) \\
    &= \hat\theta_a(X) + \frac{\pi_a(X)}{\hat\pi_a(X)} \left( \theta_a(X) - \hat\theta_a(X) \right).
\end{align*}
The corresponding conditional expectation $\psi_a(X; P)$ evaluated at the true parameters $P$ is
\[
\psi_a(X; P) := \E_{A,Y|X}\left[ \varphi(O; P,a) \mid X \right]
= \theta_a(X) + \frac{\pi_a(X)}{\pi_a(X)} \left( \theta_a(X) - \theta_a(X) \right) = \theta_a(X).
\]
Now we can rewrite $B_{\mathrm{nuis}}$ by substituting these conditional expectations
\begin{align*}
    B_{\mathrm{nuis}} &= \E_X\left[ \E_{A,Y|X}\left[ \sum_{a=0}^1 \1\{d(X)=a\} \left( \varphi(O;\hat P,a) - \varphi(O;P,a) \right) \mid X \right] \right] \\
    &= \E_X\left[ \sum_{a=0}^1 \1\{d(X)=a\} \left( \psi_a(X; \hat P) - \psi_a(X; P) \right) \right].
\end{align*}
We compute the difference $\psi_a(X; \hat P) - \psi_a(X; P)$
\begin{align*}
    \psi_a(X; \hat P) - \psi_a(X; P) &= \left[ \hat\theta_a(X) + \frac{\pi_a(X)}{\hat\pi_a(X)} (\theta_a(X) - \hat\theta_a(X)) \right] - \theta_a(X) \\
    &= (\hat\theta_a(X) - \theta_a(X)) - \frac{\pi_a(X)}{\hat\pi_a(X)} (\hat\theta_a(X) - \theta_a(X)) \\
    &= (\hat\theta_a(X) - \theta_a(X)) \left( 1 - \frac{\pi_a(X)}{\hat\pi_a(X)} \right) \\
    &= (\hat\theta_a(X) - \theta_a(X)) \left( \frac{\hat\pi_a(X) - \pi_a(X)}{\hat\pi_a(X)} \right).
\end{align*}
This identity shows that the nuisance remainder is a product of the estimation errors in $\theta_a$ and $\pi_a$. This is the key ``Neyman-Orthogonal" or ``Doubly-Robust" structure, which ensures the first-order (linear) error terms cancel exactly.

Substituting this back into the expression for $B_{\mathrm{nuis}}$ yields
\[
B_{\mathrm{nuis}} = \E_X\left[ \sum_{a=0}^1 \1\{d(X)=a\} (\hat\theta_a(X) - \theta_a(X)) \left( \frac{\hat\pi_a(X) - \pi_a(X)}{\hat\pi_a(X)} \right) \right].
\]
We now bound this remainder. Assuming the estimators are bounded $\inf_x \hat\pi_a(x) \ge \underline\pi > 0$, by Cauchy-Schwarz, we have
\begin{align*}
    |B_{\mathrm{nuis}}| &\le \E_X\left[ \sum_{a=0}^1 \1\{d(X)=a\} \left| \hat\theta_a - \theta_a \right| \cdot \left| \frac{\hat\pi_a - \pi_a}{\hat\pi_a} \right| \right] \\
    &\lesssim \sum_{a=0}^1 \E_X\left[ \left| \hat\theta_a(X) - \theta_a(X) \right| \cdot \left| \hat\pi_a(X) - \pi_a(X) \right| \right] \\
    &\le \sum_{a=0}^1 \|\hat\theta_a - \theta_a\|_{L_2(P)} \cdot \|\hat\pi_a - \pi_a\|_{L_2(P)}.
\end{align*}
Thus, we have the tight, second-order bound
\begin{equation}\label{eq:B_nuis_bound}
    |B_{\mathrm{nuis}}| = O_p\Big( \sum_{a=0}^1 \|\hat\theta_a-\theta_a\|_{L_2(P)}\|\hat\pi_a-\pi_a\|_{L_2(P)} \Big).
\end{equation}
This bound shows that $B_{\mathrm{nuis}}=o_p(n^{-1/2})$ under the product-rate condition $\|\hat\theta_a-\theta_a\|_{L_2(P)}\|\hat\pi_a-\pi_a\|_{L_2(P)}=o_p(n^{-1/2})$.

\medskip

\noindent\textbf{Step 4. Bound for the selector remainder \(B_{\mathrm{sel}}\).}

The selector remainder arises because we use \(\hat d\) instead of \(d\). Note $\varphi(O;\hat P,a)=\hat\theta_a(X)
+ \tfrac{\1\{A=a\}}{\hat\pi_a(X)}(\1\{Y\le y\}-\hat\theta_a(X))$. For brevity we write $\Delta_d(X):=\1\{\hat d(X)\ne d(X)\}\in\{0,1\}$, $\Delta_\varphi(O;\hat P)=\varphi(O;\hat P,1)-\varphi(O;\hat P,0)$, $\Delta(X)=\theta_1(X)-\theta_0(X)$ and similarly $\hat\Delta(X)=\hat\theta_1(X)-\hat\theta_0(X)$. 
Using the pointwise identity
\begin{align}\label{eq:sel_decom}
    \varphi(O;\hat P,\hat d)-\varphi(O;\hat P,d)
    &=\sum_{a\in\{0,1\}}[\1(\hat d(X)=a)\varphi(O;\hat P,1)]-\sum_{a\in\{0,1\}}[\1(d(X)=a)\varphi(O;\hat P,1)]\,\nonumber\\
&= \big(\1\{\widehat d(X)=1\}-\1\{d(X)=1\}\big)\cdot\big(\varphi(O;\hat P,1)-\varphi(O;\hat P,0)\big)\,\nonumber\\
&=\Delta_d(X)\operatorname{sgn}(\hat d(X)-d(X)) \cdot\Delta_\varphi(O;\hat P)
\end{align}
Recall $\psi_a(X; \hat\theta,\hat\pi) := \E_{A,Y|X}\left[ \varphi(O;\hat P,a) \mid X \right] = \hat\theta_a(X) + \frac{\pi_a(X)}{\hat\pi_a(X)} \left( \theta_a(X) - \hat\theta_a(X) \right)$. Apply the Law of Iterated Expectations and substitute this into the expression for $B_\text{sel}$, we have
\begin{align*}
B_\text{sel} &= \P\left[\varphi(O;\hat P,\hat d)-\varphi(O;\hat P,d)\right] = \E_X\left[ \E_{A,Y|X}\left[ \varphi(O;\hat P,\hat d)-\varphi(O;\hat P,d) \mid X \right] \right]\\
&= \E_X\left[\Delta_d(X)\operatorname{sgn}(\hat d(X)-d(X)) \cdot (\psi_1(X; \hat\theta,\hat\pi) - \psi_0(X; \hat\theta,\hat\pi)) \right]\\
&\le \E_X\left[ \1\{\hat d \ne d\} \cdot |M^*(X)| \right],
\end{align*}
where $M^*(X) = \psi_1(X; \hat\theta,\hat\pi) - \psi_0(X; \hat\theta,\hat\pi)=\E_{A,Y|X}[\Delta_\varphi(O;\hat P)]$.
Assume the propensity estimators are uniformly bounded away from zero, $\inf_x \min_a \hat\pi_a(x) \ge \underline\pi>0$,
and note $|\pi_a|\le1$, $|\theta_a|\le 1$ and $|\hat\theta_a|\le 1$. Observe that pointwise
\[
|\psi_a| \le \max_{a}\Big|\hat\theta_a(X) + \frac{\pi_a(X)}{\hat\pi_a(X)} \left( \theta_a(X) - \hat\theta_a(X) \right)\Big|
\le 1 + \frac{1}{\underline\pi},
\]
so \(|M^*|\) is uniformly integrable and $\sup_X|M^*|<\infty$. Thus,
\[
| B_\text{sel} |
\le \P(\hat d(X)\ne d(X))\cdot \sup|M^*(X)|.
\]
Note that $\hat d(X)\ne d(X)$ implies that the sign of $\Delta(X)$ is flipped by estimation error large enough to overcome the gap. Indeed,
\[
\{\hat d(X)\ne d(X)\} \subseteq \left\{ |\Delta(X)| \le |\hat\Delta(X)-\Delta(X)| \right\}
= \left\{ |\Delta(X)| \le \|\hat\Delta(X)-\Delta(X)\|_\infty \right\}.
\]
Therefore, by Assumption~\ref{ass:polymargin} (polynomial margin of exponent $\alpha$),
\[
\P\big(\hat d(X)\ne d(X)\big) \le \P\Big(|\Delta(X)| \le \|\hat\Delta(X)-\Delta(X)\|_\infty\Big)
\lesssim \|\hat\Delta-\Delta\|_\infty^\alpha.
\]
We know
\[
\|\hat\Delta-\Delta\|_\infty \le \|\hat\theta_0-\theta_0\|_\infty + \|\hat\theta_1-\theta_1\|_\infty
\le 2\max_a\|\hat\theta_a-\theta_a\|_\infty.
\]
Hence
\[
\P(\hat d\ne d) \lesssim \Big(\max_a\|\hat\theta_a-\theta_a\|_\infty\Big)^\alpha.
\]
We then decompose 
\begin{align*}
|M^*(X)| &= \left|\left( \hat\theta_1 + \frac{\pi_1}{\hat\pi_1}(\theta_1 - \hat\theta_1) \right) - \left( \hat\theta_0 + \frac{\pi_0}{\hat\pi_0}(\theta_0 - \hat\theta_0) \right)\right| \\
&= \left|(\hat\theta_1 - \hat\theta_0) + \left[ \frac{\pi_1}{\hat\pi_1}(\theta_1 - \hat\theta_1) - \frac{\pi_0}{\hat\pi_0}(\theta_0 - \hat\theta_0) \right]\right| \\
&= |\hat\Delta(X) + R_{\text{IPW}}(X)|\\
&= |\Delta(X) + (\hat\Delta(X) - \Delta(X)) + R_{\text{IPW}}(X)| \\
& \leq |\Delta(X)| + |(\hat\Delta(X) - \Delta(X))| + |R_{\text{IPW}}(X)|.
\end{align*}
On the event $\hat{d}\ne d$, we know $|\Delta(X)| \le |\hat\Delta(X) - \Delta(X)|\lesssim\max_a\|\hat\theta_a-\theta_a\|_\infty$. Also, $|R_{\text{IPW}}(X)| \le \left|\frac{\pi_1}{\hat\pi_1}(\theta_1 - \hat\theta_1)\right| + \left|\frac{\pi_0}{\hat\pi_0}(\theta_0 - \hat\theta_0)\right| \lesssim \max_a\|\hat\theta_a-\theta_a\|_\infty$. Hence we have $\sup|M^*(X)| \lesssim \max_a\|\hat\theta_a-\theta_a\|_\infty$.

Inserting the margin bound for \(\P(\hat d\ne d)\) yields
\begin{equation}\label{eq:B_sel_bound}
|B_{\mathrm{sel}}|
\lesssim \Big(\max_a\|\hat\theta_a-\theta_a\|_\infty\Big)^{1+\alpha}.
\end{equation}
This is the key selector-bias bound: the cost of using the plug-in selector is controlled by a $(1+\alpha)$ power of the sup-norm estimation error.

\medskip

\noindent\textbf{Step 5. Bound the empirical process term}
We now bound the empirical process term
\[
R_1 = (\Pn-\P)\big[\varphi(O; \hat\P, \hat d)-\varphi(O; \P, d)\big].
\]
To avoid the need for Donsker conditions, we assume sample splitting: the nuisance estimates $(\hat\theta_a, \hat\pi_a)$ and selector $\hat d$ are trained on an auxiliary sample independent of the one used to evaluate $\Pn$. Conditional on this training sample, $\varphi(O;\hat P,\hat d)$ is a fixed measurable function of $O$, so that standard empirical process inequalities apply.

By standard empirical process argument, conditional on the independent training sample we have
\[
\E\big[|R_1| \mid \hat P, \hat d\big] 
\lesssim \frac{1}{\sqrt{n}} \|\varphi(O;\hat P,\hat d)-\varphi(O;P,d)\|_{L_2(P)}.
\]
Therefore, it suffices to control the $L_2(P)$–distance $\|\varphi(O;\hat P,\hat d)-\varphi(O;P,d)\|_{L_2(P)}=o_p(1)$. Similarly,
\begin{align*}
\|\varphi(O;\hat P,\hat d)-\varphi(O;P,d)\|_{L_2(P)}
&\le \|\varphi(O;\hat P,\hat d)-\varphi(O;\hat P,d)\|_{L_2(P)}
   + \|\varphi(O;\hat P,d)-\varphi(O;P,d)\|_{L_2(P)} \\
&=: T_{\mathrm{sel}} + T_{\mathrm{nuis}}.
\end{align*}

\paragraph{Bound for $T_{\mathrm{nuis}}$.}
This term corresponds to perturbations in the nuisance functions with selector fixed.
Similar to expansion as in Step 3 (see Eq.~\eqref{eq:B_nuis_bound}) but now in $L_2(P)$ norm rather than $L_1(P)$, each component is a sum of nuisance estimation errors. 
\begin{align*} 
&\varphi(O;\hat P,a)-\varphi(O;P,a) \\
=& \left[ \hat\theta_a + \frac{\1\{A=a\}}{\hat\pi_a}(\1\{Y\le y\}-\hat\theta_a) \right] - \left[ \theta_a + \frac{\1\{A=a\}}{\pi_a}(\1\{Y\le y\}-\theta_a) \right] \\
=& (\hat\theta_a - \theta_a) \left(1 - \frac{\1\{A=a\}}{\hat\pi_a} \right) + \1\{A=a\} \left( \frac{\1\{Y\le y\}-\theta_a}{\hat\pi_a} - \frac{\1\{Y\le y\}-\theta_a}{\pi_a} \right) \\
=& (\hat\theta_a - \theta_a) \left(1 - \frac{\1\{A=a\}}{\hat\pi_a} \right) + \1\{A=a\}(\1\{Y\le y\}-\theta_a) \left( \frac{\pi_a - \hat\pi_a}{\hat\pi_a \pi_a} \right) 
\end{align*}
Then
\begin{align} \label{eq:Tnuis_bound}
T_{\mathrm{nuis}} &= \left| \sum_{a=0}^1 \1\{d(X)=a\} \big(\varphi(O;\hat P,a)-\varphi(O;P,a)\big) \right|_{L_2(P)} \nonumber\\ 
&\le \sum_{a=0}^1 \left| (\hat\theta_a - \theta_a) \left(1 - \frac{\1\{A=a\}}{\hat\pi_a} \right) + \1\{A=a\}(\1\{Y\le y\}-\theta_a) \left( \frac{\pi_a - \hat\pi_a}{\hat\pi_a \pi_a} \right)  \right|_{L_2(P)} \nonumber\\
&\lesssim \sum_{a=0}^1 \left( |\hat\theta_a - \theta_a|_{L_2(P)} + |\hat\pi_a - \pi_a|_{L_2(P)} \right)=o_p(1), 
\end{align}
when both nuisance estimators are consistent $|\hat\theta_a - \theta_a|_{L_2(P)}=o_p(1)$ and $|\hat\pi_a - \pi_a|_{L_2(P)}=o_p(1)$.

\paragraph{Bound for $T_{\mathrm{sel}}$.}
Using the pointwise identity equation \ref{eq:sel_decom},
since it is supported only on the set $\{\widehat d(X)\ne d(X)\}$, taking expectation gives
\begin{align*}
T_{\mathrm{sel}}^2
&= \E\Big[\big(\varphi(O;\hat P,\hat d)-\varphi(O;\hat P,d)\big)^2\Big]
= \E\big[\Delta_d(X)^2\,\Delta_\varphi(O;\hat P)^2\big]\\
&=\E_X\left[ \Delta_d(X) \cdot \E_{A,Y|X}\left[ \Delta_\varphi(O;\hat P)^2 \mid X \right] \right], 
\end{align*}
where $\Delta_d(X):=\1\{\hat d(X)\ne d(X)\}\in\{0,1\}$, $\Delta_\varphi(O;\hat P)=\varphi(O;\hat P,1)-\varphi(O;\hat P,0)$.
Recall $\Delta_\varphi(O;\hat P)$ is bounded when $\hat\pi_a \ge \underline\pi > 0$. Consequently, 
\begin{equation}\label{eq:Tsel_bound}
T_{\mathrm{sel}} \lesssim \P(\hat d\ne d)^{1/2} \lesssim \Big(\max_a\|\hat\theta_a-\theta_a\|_\infty\Big)^{\alpha/2}=o_p(1).
\end{equation}

\paragraph{Combining.}
Substituting \eqref{eq:Tnuis_bound} and \eqref{eq:Tsel_bound} into the empirical process bound yields
\[
R_1 = o_p(n^{-1/2}).
\]
under the mild condition $\|\hat\theta_a-\theta_a\|=o_p(1)$ and $\|\hat\pi_a-\pi_a\|=o_p(1)$.

\noindent\textbf{Step 6. Asymptotic linearity and normality.}

Combining Steps 3–5, all remainders $R_1$ and $R_2$ are $o_p(n^{-1/2})$,
and thus the estimator $\hat\Psi$ in \eqref{eq:hatPsi_def} admits the asymptotic linear representation
\[
\sqrt{n}\,(\hat\Psi-\Psi)
= \frac{1}{\sqrt{n}}\sum_{i=1}^n \varphi(O_i;P,d) + o_p(1),
\]
with influence function given in \eqref{eq:IF_oracle_detailed}. Since \(\IF_{\mathrm{oracle}}(\Psi(P;d))\) has finite variance (it is a bounded combination of bounded terms under our assumptions), classical CLT gives
\[
\sqrt{n}(\hat\Psi-\Psi) \dto N\big(0,\Var(\varphi(O;P,d)\big).
\]
A consistent variance estimator is
$$\hat\sigma^2 = \Pn\left[ \left( \varphi(O; \hat P, \hat d) - \hat\Psi \right)^2 \right].$$
Under the same remainder conditions one verifies $\hat\sigma^2\convp \Var(\varphi(O;P,d))$.
\end{proof}

\subsection{Proof of asymptotic property of smooth-approximation estimator}\label{proof:smooth}
We now prove the statistical properties of smooth-approximation estimator with a fixed smooth parameter $t$ (Theorem \ref{thm:smooth}).

\begin{proof}
The proof follows standard semiparametric argument as well.
\paragraph{Step 1. Smooth approximation and differentiability.}
Define, for $t>0$,
\[
g_t(u,v) := -\frac{1}{t}\log\big(e^{-t u}+e^{-t v}\big),
\quad (u,v)\in[0,1]^2.
\]
Then $g_t(u,v)$ is smooth and satisfies
\[
\min(u,v)- \frac{\log 2}{t} \le g_t(u,v) \le \min(u,v),
\quad
\text{and }
\lim_{t\to\infty} g_t(u,v)=\min(u,v).
\]
We approximate the functional $\Psi(P)=\E[\min(\theta_0(X),\theta_1(X))]$
by
\[
\Psi_t(P)=\E\big[g_t(\theta_0(X),\theta_1(X))\big],
\]
which is continuously Gateaux-differentiable in $P$ for any finite $t$.
As $t\to\infty$, $\Psi_t(P)\uparrow\Psi(P)$.

\paragraph{Step 2. Pathwise derivative and canonical gradient.}
Our target parameter is $\Psi_t(P) = \E[g_t(\theta_0(X), \theta_1(X))]$.
Let $P_\varepsilon$ be a regular parametric submodel with score $s(O)$ at $\varepsilon=0$, the pathwise derivative of $\Psi_t(P)$ is
\[
\frac{d}{d\varepsilon}\Psi_t(P_\varepsilon)\Big|_{\varepsilon=0}
= \frac{d}{d\varepsilon} \E_\varepsilon\left[ g_t(\theta_{0,\varepsilon}(X), \theta_{1,\varepsilon}(X)) \right] \Big|_{\varepsilon=0}
\]
Using previously derived EIF for smooth functional (equation \ref{eq:IF_smooth_appendix}), we have
\begin{equation} \label{eq:IF_smooth}
\IF(\Psi_t(P)) = \sum_{a=0}^1 w_{a,t}(X) \frac{\1\{A=a\}}{\pi_a(X)}(\1\{Y\le y\}-\theta_a(X)) + g_t(\theta_0(X),\theta_1(X)) - \Psi_t(P),
\end{equation}
where $w_{a,t}(x) = \partial_a g_t(\theta_0(x), \theta_1(x))= \frac{e^{-t\theta_a(x)}}{e^{-t\theta_0(x)}+e^{-t\theta_1(x)}}$ are smooth functions bounded in $[0,1]$ with $\sum_{a=0}^1w_{a,t}(x)=1$, interpreted as a smooth weighting function between $\theta_0(x)$ and $\theta_1(x)$.

\paragraph{Step 3. Doubly-robust estimator.}
We denote $\varphi_t(O;P)$ the uncentered influence function
\[
\varphi_t(O;P) = \sum_{a=0}^1 w_{a,t}(X) \frac{\1\{A=a\}}{\pi_a(X)}(\1\{Y\le y\}-\theta_a(X)) + g_t(\theta_0(X),\theta_1(X)),
\]
and $\varphi_t(O;\hat P)$ for using estimated nuisance parameters $\hat\theta_a(X)$ and $\hat\pi_a(X)$ obtained on an independent
training sample. Define
\begin{equation}\label{eq:hatPsi_t}
\hat\Psi_t = \Pn\varphi_t(O;\hat P)
= \Pn\!\Big[
\sum_{a=0}^1 \hat w_{a,t}(X) \frac{\1\{A=a\}}{\hat\pi_a(X)}(\1\{Y\le y\}-\hat\theta_a(X)) + g_t(\hat\theta_0(X),\hat\theta_1(X))
\Big],
\end{equation}
where
\[
\hat w_{a,t}(X)
= \frac{e^{-t\hat\theta_a(X)}}{e^{-t\hat\theta_0(X)}+e^{-t\hat\theta_1(X)}}.
\]
We will then establish the rate conditions when $\hat\Psi_t$ is root-$n$ consistent and asymptotic normality.

\paragraph{Step 4. von Mises expansion and remainder decomposition.}
Let $\hat\eta$ denote estimated nuisances from the independent sample.
Then
\begin{align}
\hat\Psi_t - \Psi_t(P)
&= (\Pn-P)\varphi_t(O;P)
+ \underbrace{P[\varphi_t(O;\hat P)-\varphi_t(O;P)]}_{B_{\mathrm{nuis}}}
+ \underbrace{(\Pn-P)[\varphi_t(O;\hat P)-\varphi_t(O;P)]}_{R_n},
\label{eq:vonMises_t}
\end{align}
where $\varphi_t(O;\hat P)$ is the uncentered plug-in influence function with $\hat\eta$.

Since the training sample used for $\hat\eta$ is independent of $\Pn$,
conditional on $\hat\eta$ the term $R_n$ is a mean-zero empirical process.
We treat $B_{\mathrm{nuis}}$ and $R_n$ separately.

\paragraph{Step 5. Control of $B_{\mathrm{nuis}}$.}
The nuisance remainder is $B_{\mathrm{nuis}} = P[\varphi_t(O;\hat P)-\varphi_t(O;P)]$. We use the Law of Iterated Expectations $\E_O[\cdot] = \E_X[\E_{A,Y|X}[\cdot \mid X]]$. For the $a$-th component of the uncentered IF $\varphi_t(O,\hat P)$, the conditional expectation is
\begin{align*}
    &\E[\varphi_t(O,\hat P)|X] \\
    =& \E\left[ \sum_{a=0}^1\Big\{\hat w_{a,t}(X) \frac{\1\{A=a\}}{\hat\pi_a(X)}(\1\{Y\le y\}-\hat\theta_a(X))\Big\}+ g_t(\hat\theta_0(X),\hat\theta_1(X)) \Bigg| X \right] \\
    =& \sum_{a=0}^1\left\{\hat w_{a,t}(X) \frac{\E[\1\{A=a\} \mid X]}{\hat\pi_a(X)} \left( \E[\1\{Y\le y\} \mid A=a, X] - \hat\theta_a(X) \right)\right\}+ g_t(\hat\theta_0(X),\hat\theta_1(X)) \\
    =& \sum_{a=0}^1\left\{\hat w_{a,t}(X) \frac{\pi_a(X)}{\hat\pi_a(X)} \left( \theta_a(X) - \hat\theta_a(X) \right)\right\}+ g_t(\hat\theta_0(X),\hat\theta_1(X)).
\end{align*}
Similarly, the conditional expectation of the $\varphi_t$ evaluated at the true parameters $P$ is
\begin{align*}
    \E[\varphi_t(O;P) \mid X]
    &= \sum_{a=0}^1 w_{a,t}(X) \frac{\pi_a(X)}{\pi_a(X)} \left( \theta_a(X) - \theta_a(X) \right) + g_t(\theta_0(X), \theta_1(X)) \\
    &= g_t(\theta_0(X), \theta_1(X)).
\end{align*}
Therefore, $B_{\mathrm{nuis}}$ is given by the expectation of the difference in conditional means
$$
B_{\mathrm{nuis}} = \E_X\left[ \E[\varphi_t(O;\hat P) \mid X] - \E[\varphi_t(O;P) \mid X] \right],
$$
where
\begin{align*}
\E[\varphi_t(O;\hat P) \mid X] - \E[\varphi_t(O;P) \mid X] &= \left[ \sum_{a=0}^1 \hat w_{a,t} \frac{\pi_a}{\hat\pi_a} (\theta_a - \hat\theta_a) + g_t(\hat\theta_0, \hat\theta_1) \right] - g_t(\theta_0, \theta_1) \\
    &= \sum_{a=0}^1 \hat w_{a,t} \frac{\pi_a}{\hat\pi_a} (\theta_a - \hat\theta_a) - \left[ g_t(\theta_0, \theta_1) - g_t(\hat\theta_0, \hat\theta_1) \right].
\end{align*}
Now, we use the first-order Taylor expansion for $g_t(\theta_0, \theta_1)$ around $\hat\theta$,
$$
g_t(\theta_0, \theta_1) - g_t(\hat\theta_0, \hat\theta_1) = \sum_{a=0}^1 \hat w_{a,t} (\theta_a - \hat\theta_a) + R_{\theta},
$$
where $R_{\theta}$ is a quadratic remainder term $|R_\theta(X)| \lesssim t \cdot \|\hat\theta(X) - \theta(X)\|_2^2 = t \cdot \sum_{a=0}^1 (\hat\theta_a(X) - \theta_a(X))^2$, since $ \frac{\partial^2 g_t}{\partial \theta_0^2} = -t \cdot \frac{e^{-t\theta_0} e^{-t\theta_1}}{(e^{-t\theta_0}+e^{-t\theta_1})^2} = -t \cdot w_{0,t} \cdot w_{1,t}$.
Substituting this expansion back
\begin{align*}
    \E[\varphi_t(O;\hat P) \mid X] - \E[\varphi_t(O;P) \mid X] &= \sum_{a=0}^1 \hat w_{a,t} \frac{\pi_a}{\hat\pi_a} (\theta_a - \hat\theta_a) - \left[ \sum_{a=0}^1 \hat w_{a,t} (\theta_a - \hat\theta_a) + R_{\theta} \right] \\
    &= \sum_{a=0}^1 \hat w_{a,t} (\theta_a - \hat\theta_a) \left( \frac{\pi_a}{\hat\pi_a} - 1 \right) - R_{\theta} \\
    &= \sum_{a=0}^1 \hat w_{a,t} (\hat\theta_a - \theta_a) \left( \frac{\hat\pi_a - \pi_a}{\hat\pi_a} \right) - R_{\theta}.
\end{align*}
The first term is the key second-order interaction term. The remainder $R_{\theta}$ is $O_p(t\|\hat\theta-\theta\|_{L_2(P)}^2)$.
The term $\hat w_{a,t}/\hat\pi_a$ is bounded almost surely by $\sup |\hat w_{a,t}| / \inf \hat\pi_a \le 1/\underline\pi$.
Thus, $B_{\mathrm{nuis}}$
$$
|B_{\mathrm{nuis}}| \lesssim \sum_{a=0}^1 \E_X\left[ |\hat\theta_a(X)-\theta_a(X)| \cdot |\hat\pi_a(X)-\pi_a(X)| \right] + O_p\left(t \cdot \|\hat\theta-\theta\|_{L_2(P)}^2 \right).
$$
Applying the Cauchy-Schwarz inequality, the first term is bounded by $\sum_{a=0}^1 \|\hat\theta_a-\theta_a\|_{L_2(P)} \cdot \|\hat\pi_a-\pi_a\|_{L_2(P)}$. We get
\begin{equation}\label{eq:smooth_nuis}
    |B_{\mathrm{nuis}}| \lesssim O_p(\|\hat\theta-\theta\|_{L_2(P)} \cdot \|\hat\pi-\pi\|_{L_2(P)} + t\|\hat\theta-\theta\|_{L_2(P)}^2).
\end{equation}
For a fixed $t$, under condition
$$
\|\hat\theta_a-\theta_a\|_{L_2(P)}\|\hat\pi_a-\pi_a\|_{L_2(P)}=o_p(n^{-1/2}),\quad\|\hat\theta_a-\theta_a\|^2_{L_2(P)} = o_p(n^{-1/2}),
$$
for example $\|\hat\theta_a-\theta_a\|_{L_2(P)}=o_p(n^{-1/4})$, $\|\hat\pi_a-\pi_a\|_{L_2(P)} = o_p(n^{-1/4})$, we conclude that the nuisance remainder term vanishes faster than the $\sqrt{n}$ rate $B_{\mathrm{nuis}} = o_p(n^{-1/2})$.

\paragraph{Step 6. Bound for Empirical Process Remainder $R_n$}
The empirical process remainder is $R_n = (\Pn-P)[\varphi_t(O;\hat P)-\varphi_t(O;P)]$.
Let the difference function be $\Delta\varphi_t(O) = \varphi_t(O;\hat P)-\varphi_t(O;P)$. We aim to show $\sqrt{n} R_n = o_p(1)$.

Since the nuisance parameters $\hat\eta$ are obtained on a sample independent of the evaluation sample used for $\Pn$ (sample splitting), we use the conditional $\sqrt{n}$ concentration bound
$$
\E[R_n^2 \mid \hat\eta] \le \frac{1}{n} \E\left[ (\Delta\varphi_t(O))^2 \mid \hat\eta \right] = \frac{1}{n} \|\Delta\varphi_t\|_{L_2(P)}^2,
$$
where $\|\cdot\|_{L_2(P)}$ denotes the $L_2(P)$ norm of the function $\Delta\varphi_t(O)$ given $\hat\eta$.

The function $\varphi_t(O;\eta)$ is a smooth function of $\eta=(\theta_0, \theta_1, \pi_0, \pi_1)$. Given the assumptions that $\theta_a$ and $\hat\theta_a$ are bounded in $[0,1]$ and $\pi_a$ and $\hat\pi_a$ are bounded away from zero (i.e., $\inf \hat\pi_a > \underline\pi > 0$ a.s.), $\varphi_t$ is locally Lipschitz continuous in $\eta$
\begin{align*}
    |\Delta\varphi_t(O)| &\lesssim \left( \sum_{a=0}^1 \sup_{\eta} \left|\frac{\partial \varphi_t}{\partial \theta_a}\right| \cdot |\hat\theta_a - \theta_a| + \sum_{a=0}^1 \sup_{\eta} \left|\frac{\partial \varphi_t}{\partial \pi_a}\right| \cdot |\hat\pi_a - \pi_a| \right)\\
    & \lesssim \left( O(t) \cdot \sum_{a=0}^1 |\hat\theta_a(X) - \theta_a(X)| + O(1) \cdot \sum_{a=0}^1 |\hat\pi_a(X) - \pi_a(X)| \right) \cdot K(O)
\end{align*}
where $K(O)$ is a bounded function depending on $A, Y$ and the bounding constants for $\hat\pi_a$. Squaring the difference and taking the expectation $P$
\begin{align*}
    \|\Delta\varphi_t\|_{L_2(P)}^2 &= \E_O\left[ (\Delta\varphi_t(O))^2 \right] \\
    &\lesssim t^2 \|\hat\theta - \theta\|_{L_2(P)}^2 + \|\hat\pi - \pi\|_{L_2(P)}^2.
\end{align*}
Substituting this bound back into the variance of $R_n$
$$
\E[R_n^2 \mid \hat\eta] \lesssim \frac{1}{n} \left[t^2 \|\hat\theta - \theta\|_{L_2(P)}^2 + \|\hat\pi - \pi\|_{L_2(P)}^2\right].
$$
Taking the square root and applying the Markov inequality yields
$$
R_n = O_p\left(n^{-1/2} \left[t \|\hat\theta - \theta\|_{L_2(P)} + \|\hat\pi - \pi\|_{L_2(P)}\right]\right).
$$
Then for a fixed $t$, we have $R_n=o_p(n^{-1/2})$ under the condition $\|\hat\theta-\theta\|_{L_2(P)} = o_p(1)$, $\|\hat\pi-\pi\|_{L_2(P)} = o_p(1)$.

\paragraph{Step 7. Asymptotic linearity and normality.}
Combining the bounds above, under $
\|\hat\theta_a-\theta_a\|_{L_2(P)} \cdot \|\hat\pi_a-\pi_a\|_{L_2(P)} = o_p(n^{-1/2}),
$, we obtain the asymptotic expansion
\[
\hat\Psi_t - \Psi_t(P)
= (\Pn-P)\varphi_t(O;P) + o_p(n^{-1/2}),
\]
and hence the central limit theorem
\[
\sqrt{n}(\hat\Psi_t-\Psi_t(P)) \dto N(0,\Var(\varphi_t(O;P))).
\]
A consistent variance estimator is
\[
\hat\sigma_t^2 = \Pn\big[\big(\varphi_t(O;\hat P) - \hat\Psi_t\big)^2\big].
\]  
\end{proof}

\section{Proof of identifability of triple matching learning estimator (Theorem \ref{thm:rep-iv-ate}).}\label{proof:rep-iv-ate}
We prove the identifiability results in Theorem \ref{thm:rep-iv-ate}.
\begin{proof}
We first establish the identifability of confounding representation $Z_C$, and then use learned $Z_C$ to identify potential outcomes $Y(a)$.

\paragraph{Step 1: Identifiability of the Confounding Subspace.}
Let the true data generating process be defined by the structural equations $A = g_A(Z_S)$ and $Y = g_Y(A, Z_C)$. We do not assume $g_A$ is injective, but we assume that for each fixed $a$, the conditional distribution $p(Y \mid A=a, Z_C=\cdot)$ is injective with respect to $Z_C$ (Assumption~\ref{assump:A1}). (Note that if $Y$ is deterministic, this reduces to $g_Y(a, \cdot)$ being an injective function).

Let $\hat{e} : \mathcal{A}\times\mathcal{Y} \to \mathcal{Z}_{\widehat{Z}_C}$ be the learned encoder, defining $\widehat{Z}_C = \hat{e}(A, Y)$. Define the composite map
$$\psi(z_c, z_s) := \hat{e}\!\left(g_A(z_s),\; g_Y(g_A(z_s), z_c)\right).$$
We employ the following regularity conditions
\begin{enumerate}
\item[(i)] \textbf{Independence Constraint:} The learned representation satisfies $\widehat{Z}_C \perp S$.
\item[(ii)] \textbf{Predictive Sufficiency:} The learned representation is sufficient for predicting $Y$ from $A$. That is, $Z_C$ provides no additional information about $Y$ once $\widehat{Z}_C$ is known
$$Y \perp\!\!\!\perp Z_C \mid (A, \widehat{Z}_C) \quad \text{almost surely}.$$
In terms of densities, this implies $p(Y \mid A, \widehat{Z}_C, Z_C) = p(Y \mid A, \widehat{Z}_C)$ almost surely.
\item[(iii)] \textbf{Sufficient Variability:} As in Assumption~\ref{assump:variability}, the family of conditional distributions $\{p(Z_S \mid Z_C=z_c, S=s)\}_{s}$ is boundedly complete.
\end{enumerate}

\paragraph{1(a) Functional independence ($\widehat{Z}_C$ depends only on $Z_C$).}
Fix any measurable set $U \subseteq \mathcal{Z}_{\widehat{Z}_C}$ and let $D = \psi^{-1}(U)$ be its preimage. By condition (i), $P(\psi(Z) \in U \mid S=s)$ is invariant to $s$. Using the factorization $p(z \mid s)=p(z_c)\,p(z_s\mid z_c,s)$, we obtain the identity
\begin{equation}\label{eq:proof_integral_final}\int p(z_c)\bigl[p(z_s \mid z_c, s_1) - p(z_s \mid z_c, s_2)\bigr] \mathbf{1}_D(z_c,z_s), dz_s dz_c = 0.\end{equation}
Suppose for contradiction that $\psi$ depends on $z_s$. Then $D$ is not a Cartesian product almost surely. Let $B^* = \{(z_c,z_s)\in D : \{z_c\}\times\mathcal{Z}_S \not\subseteq D\}$ be the entangled region, which has positive measure. By the completeness condition (iii), the integral over this non-product region cannot vanish for all $s_1, s_2$, contradicting \eqref{eq:proof_integral_final}.

Thus, $D = B \times \mathcal{Z}_S$ almost surely, implying $\psi(z_c, z_s)$ is constant in $z_s$. Hence, there exists a measurable map $\phi : \mathcal{Z}_C \to \mathcal{Z}_{\widehat{Z}_C}$ such that $\widehat{Z}_C = \phi(Z_C)$ almost surely.

\paragraph{1(b) Injectivity via Predictive Sufficiency.}
We show $\phi$ is injective by contradiction. Suppose $\phi$ is not injective. Then there exist disjoint sets $\mathcal{Z}_1, \mathcal{Z}_2 \subset \mathcal{Z}_C$ with positive measure such that $\phi(\mathcal{Z}_1) = \phi(\mathcal{Z}_2) = \hat{z}$. Fix a treatment $a \in \mathcal{A}$. By the injectivity of the generative mechanism (Assumption~\ref{assump:A1}), the outcome distributions conditioned on distinct latent values must differ. Thus, for any $z_1 \in \mathcal{Z}_1$ and $z_2 \in \mathcal{Z}_2$,
$$p(Y \mid A=a, Z_C=z_1) \neq p(Y \mid A=a, Z_C=z_2).$$
(Note: If $Y$ is deterministic, these are distinct Dirac measures $\delta_{y_1} \neq \delta_{y_2}$).

Now consider the distribution conditioned on $\widehat{Z}_C = \hat{z}$. By the \textbf{Predictive Sufficiency} condition (ii), we have conditional independence $Y \perp Z_C \mid (A, \widehat{Z}_C)$. This implies that
$$p(Y \mid A=a, \widehat{Z}_C=\hat{z}, Z_C=z) = p(Y \mid A=a, \widehat{Z}_C=\hat{z})$$
for almost all $z$ in the fiber $\phi^{-1}(\hat{z})$.

However, the left-hand side $p(Y \mid A=a, \widehat{Z}_C=\hat{z}, Z_C=z)$ simplifies because $A$ and $Z_C$ fully determines the true conditional distribution of $Y$. Thus, for $z_1 \in \mathcal{Z}_1$ and $z_2 \in \mathcal{Z}_2$, we must have
$$p(Y \mid A=a, Z_C=z_1) = p(Y \mid A=a, \widehat{Z}_C=\hat{z}) = p(Y \mid A=a, Z_C=z_2).$$
This equates two distributions that are known to be distinct (due to injectivity), yielding a contradiction. Therefore, $\phi$ must be injective almost surely. Under standard regularity conditions (continuity and matching dimensions), $\phi$ is an invertible transformation.
\medskip
We have established that $\widehat{Z}_C = \phi(Z_C)$ where $\phi$ is invertible, which implies $\sigma(\widehat{Z}_C) = \sigma(Z_C)$. This completes the proof of subspace identifiability of $Z_C$.

\paragraph{Step 2: Back-door Criterion.}
Given the structural equations in Assumption \ref{assump:A1}:
\[ A = g_A(Z_S), \quad Y = g_Y(A, Z_C), \quad Z_S = h(Z_C, S, \varepsilon_S). \]
The only common cause of $A$ and $Y$ is $Z_C$ (mediated through $Z_S$ to $A$). $Z_C$ blocks all back-door paths from $A$ to $Y$. Specifically, the potential outcome $Y(a)$ is determined by $g_Y(a, Z_C)$. Since $Z_C$ encapsulates all confounding information, we have conditional exchangeability
\[
Y(a) \indep A \mid Z_C.
\]

\paragraph{Step 3: Replacement with Identified Representation.}
From Step 1, we established that $\widehat{Z}_C = \psi(Z_C)$ where $\psi$ is invertible. Thus, $\sigma(\widehat{Z}_C) = \sigma(Z_C)$, and conditioning on $\widehat{Z}_C$ is statistically equivalent to conditioning on $Z_C$. Therefore, 
\[
Y(a) \indep A \mid \widehat{Z}_C.
\]

\paragraph{Step 4: Identification Formula.}
We explicitly derive the identification of $\mathbb{E}[Y(a)]$. By the Law of Iterated Expectations and the independence shown in Step 3
\begin{align*}
    \mathbb{E}[Y(a)] &= \mathbb{E}_{\widehat{Z}_C} \big[ \mathbb{E}[Y(a) \mid \widehat{Z}_C] \big] \\
    &= \mathbb{E}_{\widehat{Z}_C} \big[ \mathbb{E}[Y(a) \mid A=a, \widehat{Z}_C] \big] \quad (\text{by ignorability } Y(a) \indep A \mid \widehat{Z}_C) \\
    &= \mathbb{E}_{\widehat{Z}_C} \big[ \mathbb{E}[Y \mid A=a, \widehat{Z}_C] \big] \quad (\text{by consistency } Y(a) = Y \text{ when } A=a).
\end{align*}
Assumption \ref{assump:A3} (Positivity) guarantees that $P(A=a \mid \widehat{Z}_C) > 0$ (since $\widehat{Z}_C$ is isomorphic to $Z_C$), ensuring the conditional expectation $\mathbb{E}[Y \mid A=a, \widehat{Z}_C]$ is well-defined.
Similarly, the ATE is identified as
\[
\mathrm{ATE}(a,a') = \mathbb{E}_{\widehat{Z}_C} \left[ \mathbb{E}[Y \mid A=a, \widehat{Z}_C] - \mathbb{E}[Y \mid A=a', \widehat{Z}_C] \right].
\]
This completes the proof.
\end{proof}

\section{Proof of statistical properties of triple machine learning estimator (Theorem \ref{thm:dr-rliv-summary})}\label{proof:dr-rliv-summary}
\begin{proof}
Let the total sample size be $N$. We randomly partition the data $\mathcal{D}$ into three disjoint folds $\mathcal{I}_1, \mathcal{I}_2, \mathcal{I}_3$, each of size $n = N/3$.
The estimator is constructed sequentially
\begin{enumerate}
    \item \textbf{Stage 1 (Representation Learning on $\mathcal{I}_1$):} 
    Construct $\widehat{\phi}$ using only data in $\mathcal{I}_1$. Thus $\widehat{\phi} \indep (\mathcal{I}_2 \cup \mathcal{I}_3)$.
    
    \item \textbf{Stage 2 (Nuisance Estimation on $\mathcal{I}_2$):} 
    Using $\widehat{\phi}$ and data $\mathcal{I}_2$, estimate $\widehat{m}$ and $\widehat{\pi}$. Let $\widehat{\eta} = (\widehat{m}, \widehat{\pi}, \widehat{\phi})$. Crucially, $\widehat{\eta} \indep \mathcal{I}_3$.
    
    \item \textbf{Stage 3 (Evaluation on $\mathcal{I}_3$):} 
    Compute the estimator on $\mathcal{I}_3$
    \[
    \widehat{\psi} = \mathbb{P}_{n,3} [\varphi(O; \widehat{\eta})] = \frac{1}{n} \sum_{i \in \mathcal{I}_3} \varphi(O_i; \widehat{\eta}).
    \]
\end{enumerate}

We decompose the estimation error $\sqrt{n}(\widehat{\psi} - \psi_0)$
\begin{align}
\sqrt{n}(\widehat{\psi} - \psi_0) &= \sqrt{n}(\mathbb{P}_{n,3} \varphi(\widehat{\eta}) - P \varphi(\eta_0)) \nonumber \\
&= \underbrace{\sqrt{n}(\mathbb{P}_{n,3} - P) \varphi(\eta_0)}_{T_1: \text{Oracle CLT}} 
+ \underbrace{\sqrt{n}(\mathbb{P}_{n,3} - P) (\varphi(\widehat{\eta}) - \varphi(\eta_0))}_{T_2: \text{Empirical Process}} 
+ \underbrace{\sqrt{n} P (\varphi(\widehat{\eta}) - \varphi(\eta_0))}_{T_3: \text{Bias Term}}.
\label{eq:decomp}
\end{align}

\paragraph{Step 1: Oracle CLT ($T_1$).}
The term $\varphi(O; \eta_0)$ is a fixed function. Since observations in $\mathcal{I}_3$ are i.i.d., by the standard CLT
\[
T_1 \xrightarrow{d} \mathcal{N}(0, \sigma^2_\text{eff}), \quad \text{where } \sigma^2_\text{eff} = \mathrm{Var}(\varphi(O; \eta_0)).
\]

\paragraph{Step 2: Empirical Process ($T_2$).}
We must show that $T_2 = o_p(1)$. This is equivalent to showing that the unscaled empirical process term $(\mathbb{P}_{n,3} - P)(\varphi(\widehat{\eta}) - \varphi(\eta_0))$ is $o_p(n^{-1/2})$.

Let $\Delta(O; \widehat{\eta}) = \varphi(O; \widehat{\eta}) - \varphi(O; \eta_0)$. 
Conditioning on the training data $\mathcal{D}_{train} = \mathcal{I}_1 \cup \mathcal{I}_2$, the function $\Delta(\cdot; \widehat{\eta})$ is deterministic. The term $T_2$ can be viewed as a sum of i.i.d. random variables with mean zero (conditional on $\mathcal{D}_{train}$)
\[
\E[T_2 \mid \mathcal{D}_{train}] = \sqrt{n} \E_{O \sim P} [ (\Pn - P)\Delta(O; \widehat{\eta}) \mid \mathcal{D}_{train} ] = 0.
\]
We analyze the conditional variance
\begin{align*}
\mathrm{Var}(T_2 \mid \mathcal{D}_{train}) &= n \cdot \mathrm{Var}(\Pn \Delta(O; \widehat{\eta}) \mid \mathcal{D}_{train}) \\
&= n \cdot \frac{1}{n} \mathrm{Var}(\Delta(O; \widehat{\eta}) \mid \mathcal{D}_{train}) \\
&\leq \E [ (\varphi(O; \widehat{\eta}) - \varphi(O; \eta_0))^2 \mid \mathcal{D}_{train} ] \\
&= \| \varphi(\widehat{\eta}) - \varphi(\eta_0) \|_{L_2(P)}^2.
\end{align*}
Under the consistency assumption (C1), we have $\|\widehat{\eta} - \eta_0\| \xrightarrow{p} 0$. Assuming $\varphi$ satisfies mild Lipschitz continuity or the nuisances are bounded, consistency implies convergence in the $L_2$ norm of the score
\[
\| \varphi(\widehat{\eta}) - \varphi(\eta_0) \|_{L_2(P)}^2 = o_p(1).
\]
By Chebyshev's inequality, for any $\epsilon > 0$
\[
P(|T_2| > \epsilon \mid \mathcal{D}_{train}) \leq \frac{\mathrm{Var}(T_2 \mid \mathcal{D}_{train})}{\epsilon^2} = \frac{o_p(1)}{\epsilon^2} \xrightarrow{p} 0.
\]
Thus, $T_2 = o_p(1)$. This confirms that the estimation noise of $\widehat{\eta}$ does not affect the asymptotic distribution via the empirical process term.

\paragraph{Step 3: Bias Term ($T_3$).}
We analyze the drift term $T_3 = \sqrt{n} \E[\varphi(O; \widehat{\eta}) - \varphi(O; \eta_0) \mid \widehat{\eta}]$. 
Define the intermediate parameter $\tilde{\eta} = (m_0, \pi_0, \widehat{\phi})$, which represents the ideal nuisance parameters given the learned representation $\widehat{\phi}$. 
We decompose the bias into a nuisance estimation error and a representation learning error
\[
T_3 = \underbrace{\sqrt{n} P (\varphi(\widehat{\eta}) - \varphi(\tilde{\eta}))}_{T_{3a} \text{ (Nuisance Bias)}} + \underbrace{\sqrt{n} P (\varphi(\tilde{\eta}) - \varphi(\eta_0))}_{T_{3b} \text{ (Representation Bias)}}.
\]

\textbf{(a) Nuisance Parameter Bias ($T_{3a}$).}
This term captures the error from estimating $m$ and $\pi$ on $\mathcal{I}_2$, conditional on the fixed representation $\widehat{\phi}$ from $\mathcal{I}_1$. Utilizing the algebraic property of the doubly robust score function, for any $m, \pi$ and fixed representation $z$, the difference satisfies the exact identity
\begin{align*}
\E[\varphi(m, \pi, z) - \varphi(m_0, \pi_0, z)] &= \E \left[ \frac{\1\{A=a\}}{\pi(z)}(Y - m(z)) - \frac{\1\{A=a\}}{\pi_0(z)}(Y - m_0(z)) + (m(z) - m_0(z)) \right] \\
&= - \E \left[ \frac{\pi(z) - \pi_0(z)}{\pi(z)} \big( m(z) - m_0(z) \big) \right].
\end{align*}
Applying this to our estimator $\widehat{\eta}$ given $\widehat{\phi}$
\[
T_{3a} = - \sqrt{n} \E \left[ \frac{(\widehat{\pi}(\widehat{Z}_C) - \pi_0(\widehat{Z}_C))(\widehat{m}(\widehat{Z}_C) - m_0(\widehat{Z}_C))}{\widehat{\pi}(\widehat{Z}_C)} \Bigg| \mathcal{I}_1 \right].
\]
Note that the first-order terms vanish identically due to Neyman orthogonality. The remaining term is strictly second-order. By the Cauchy-Schwarz inequality and the boundedness of $1/\widehat{\pi}$
\[
|T_{3a}| \lesssim \sqrt{n} \| \widehat{\pi} - \pi_0 \|_{L_2(\widehat{\phi})} \| \widehat{m} - m_0 \|_{L_2(\widehat{\phi})}.
\]
Under the robustness assumption (product of rates is $o_p(n^{-1/2})$), we have $T_{3a} = o_p(1)$.

\textbf{(b) Representation Bias ($T_{3b}$).}
This term captures the stochastic error propagated from the representation learning step (Stage 1) to the final estimation (Stage 3). Let $M(\phi) = \E[\varphi(O; m_0, \pi_0, \phi)]$ be the expected score functional. Since the score $\varphi$ is generally \emph{not} orthogonal with respect to $\phi$, we perform a functional Taylor expansion around the true representation $\phi_0$, 
\[
T_{3b} = \sqrt{n} (M(\widehat{\phi}) - M(\phi_0)) = \underbrace{\sqrt{n} \nabla_{\phi} M(\phi_0)[\widehat{\phi} - \phi_0]}_{\text{Linear Term (I)}} + \underbrace{\sqrt{n} \mathcal{R}(\widehat{\phi}, \phi_0)}_{\text{Remainder Term (II)}},
\]
where $\nabla_{\phi} M(\phi_0)[h]$ is the Gâteaux derivative of $M$ in direction $h$.

The remainder Term (II) is bounded by the square of the estimation error $|\mathcal{R}| \le C \|\widehat{\phi} - \phi_0\|_{L_2}^2$.
For the remainder to be asymptotically negligible (i.e., $o_p(1)$), we only require the \textbf{quarter-rate condition}
\[
\|\widehat{\phi} - \phi_0\|_{L_2} = o_p(n^{-1/4}).
\]
Assuming this holds, the asymptotic behavior of $T_{3b}$ is entirely determined by the Linear Term (I). We consider two regimes

\begin{itemize}
    \item \textbf{Case 1: Super-Efficiency (Oracle Variance).}
    Suppose the representation is learned on a massive auxiliary dataset or converges strictly faster than the parametric rate
    \[
    \|\widehat{\phi} - \phi_0\|_{L_2} = o_p(n^{-1/2}).
    \]
    Then, the Linear Term (I) satisfies
    \[
    |\sqrt{n} \nabla_{\phi} M(\phi_0)[\widehat{\phi} - \phi_0]| \le C \sqrt{n} \|\widehat{\phi} - \phi_0\|_{L_2} = \sqrt{n} \cdot o_p(n^{-1/2}) = o_p(1).
    \]
    The representation bias vanishes. The estimator achieves the \textbf{Oracle Efficiency Bound}, with asymptotic variance $V_\text{eff} = \text{Var}(\varphi(O; \eta_0))$.

    \item \textbf{Case 2: Standard Rate (Variance Inflation).}
    Suppose the representation is learned at the standard parametric rate (e.g., via regression or standard ML on Fold 1)
    \[
    \|\widehat{\phi} - \phi_0\|_{L_2} = O_p(n^{-1/2}).
    \]
    In this case, assume that $\widehat{\phi}$ admits an asymptotic linear expansion characterized by its own influence function $\xi_{\phi}$
    \[
    \widehat{\phi}(z) - \phi_0(z) = \frac{1}{n_1} \sum_{j \in \mathcal{I}_1} \xi_{\phi}(O_j, z) + o_p(n^{-1/2}).
    \]
    and then define
    \[
    \mathrm{IF}_{\phi, \text{rep}}(O) := \langle \nabla_\phi M(\phi_0), \xi_\phi(O) \rangle = \mathbb{E}_{O'}[\nabla_\phi \varphi(O'; \eta_0)] \cdot \xi_\phi(O).
    \]
    By the linearity of the derivative, 
    \[
    \sqrt{n} P (\varphi(\tilde{\eta}) - \varphi(\eta_0)) \simeq \frac{1}{\sqrt{n}} \sum_{i=1}^n \mathrm{IF}_{\phi, \text{rep}}(O_i),
    \]
    and the Linear Term (I) converges in distribution
    \[
    \sqrt{n} \nabla_{\phi} M(\phi_0)[\widehat{\phi} - \phi_0] \xrightarrow{d} \mathcal{N}(0, V_{\text{rep}}),
    \]
    where $V_{\text{rep}}=\Var(\mathrm{IF}_{\phi, \text{rep}}(O))$ is the variance contribution from the representation learning step.
    
    Because $\widehat{\phi}$ is estimated on $\mathcal{I}_1$ and the evaluation score $\varphi$ is computed on the independent fold $\mathcal{I}_3$, the error term $\nabla_{\phi} M(\phi_0)[\widehat{\phi} - \phi_0]$ and the oracle influence function $\mathrm{IF}_\text{oracle}(O_i)$ are uncorrelated. This justifies the decoupled summation of variances in $V_\text{total}$. The total asymptotic variance hence inflates to
    \[
    V_\text{total} = V_\text{eff} + \rho \cdot V_{\text{rep}},
    \]
    where $\rho$ accounts for the ratio of sample sizes between folds. Standard errors must be corrected to account for $V_{\text{rep}}$.
\end{itemize}

On thing we need to emphasize is that we acknowledge that establishing the exact asymptotic linearity for highly non-convex deep learning models like VAEs remains an open theoretical challenge. Our derivation of $\mathrm{IF}_{\phi, \text{rep}}$ operates under the premise that the representation learner converges to an isolated local optimum, behaving asymptotically as a regularized M-estimator, or alternatively, operates in a regime where the neural tangent kernel (NTK) affords linear responsiveness \cite{jacot2018neural}.

Combining the steps, if the bias terms vanish ($o_p(1)$), we have $\sqrt{n}(\widehat{\psi} - \psi_0) = T_1 + o_p(1) \xrightarrow{d} \mathcal{N}(0, \sigma^2_\text{eff})$.
\end{proof}